\numberwithin{equation}{section}
\theoremstyle{plain}
\newtheorem{theorem}{Theorem}[section]
\newtheorem{lemma}[theorem]{Lemma}
\newtheorem{assumption}[theorem]{Assumption}
\theoremstyle{remark}
\newcommand{\dif}{\mathrm{d}}
\newcommand{\Sperp}{\mathbb{S}^d \perp \mathbb{S}^d}
\newcommand{\opnorm}[1]{{\lvert\kern-0.25ex\lvert\kern-0.25ex\lvert #1 \rvert\kern-0.25ex\rvert\kern-0.25ex\rvert}}
\DeclareFontFamily{U}{mathx}{}
\DeclareFontShape{U}{mathx}{m}{n}{<-> mathx10}{}
\DeclareSymbolFont{mathx}{U}{mathx}{m}{n}
\DeclareMathAccent{\widecheck}{0}{mathx}{"71}
\pgfplotsset{compat=1.15}
\definecolor{grey}{rgb}{0.6274509803921569,0.6274509803921569,0.6274509803921569}
\begin{document}

\begin{frontmatter}
\title{Adaptive Stereographic MCMC}

\begin{aug}
\author[A,B]{\fnms{Cameron}~\snm{Bell}},
\author[A]{\fnms{Krzysztof}~\snm{{\L}atuszy{\'n}ski}},
\and
\author[A]{\fnms{Gareth O.}~\snm{Roberts}}
\address[A]{Department of Statistics, University of Warwick, United Kingdom}
\address[B]{CEREMADE, Universit\'e Paris Dauphine-PSL, France}
\end{aug}

\begin{abstract}
In order to tackle the problem of sampling from heavy-tailed, high-dimensional distributions via Markov Chain Monte Carlo (MCMC) methods, \cite{yang2022stereographic} introduces the stereographic projection as a tool to compactify $\mathbb{R}^d$ and transform the problem into sampling from a density on the unit sphere $\mathbb{S}^d$. However, the improvement in algorithmic efficiency, as well as the computational cost of the implementation, is significantly impacted by the parameters used in this transformation.

To address this, we introduce adaptive versions of three stereographic MCMC algorithms - the Stereographic Random Walk (SRW), the Stereographic Slice Sampler (SSS), and the Stereographic Bouncy Particle Sampler (SBPS) - which automatically update the parameters of the algorithms as the run progresses. The adaptive setup allows for the power of the stereographic projection to be better exploited, even when the target distribution is neither centred nor homogeneous. Unlike Hamiltonian Monte Carlo (HMC) and other off-the-shelf MCMC samplers, the resulting algorithms are robust to starting far from the mean in heavy-tailed, high-dimensional settings. To prove convergence properties, we develop a novel framework for the analysis of adaptive MCMC algorithms over collections of simultaneously uniformly ergodic Markov operators, which is applicable to continuous-time processes, such as SBPS. This framework allows us to obtain $\mathcal{L}^2$ and almost sure convergence results, and a CLT for our adaptive stereographic algorithms.
\end{abstract}

\begin{keyword}[class=MSC]
\kwd[Primary ]{60J05, 60J20, 60J25, 65C05}
\end{keyword}

\begin{keyword}
\kwd{adaptive Markov chain Monte Carlo}
\kwd{stereographic projection}
\kwd{random walk Metropolis}
\kwd{slice sampler}
\kwd{piecewise deterministic Markov process}
\kwd{uniform ergodicity}
\kwd{heavy-tailed distributions}
\kwd{blessings of dimensionality}
\end{keyword}

\end{frontmatter}


\section{Introduction}
Markov chain Monte Carlo (MCMC) algorithms are used to approximate a given target distribution $\pi$ on $\mathbb{R}^d$, which usually arises in the context of Bayesian inference. For this purpose, a Markov process is simulated that admits $\pi$ as its stationary distribution, and we can use its empirical distribution to approximate $\pi$. It is therefore crucial that our algorithms quickly converge to stationarity and efficiently explore the entire target distribution. 

From the Random Walk Metropolis algorithm (RWM) \cite{metropolis1953equation,hastings1970monte} to Hamiltonian Monte Carlo (HMC) \cite{neal2012mcmc}, many MCMC algorithms rely on local moves to explore the space. Even more sophisticated algorithms, such as the Zig-Zag algorithm \cite{bierkens2019zig} or the Bouncy Particle Sampler (BPS) \cite{bouchard2018bouncy}, only move at a fixed speed, although versions of these algorithms exist with non-constant speeds \cite{bierkens2020boomerang,vasdekis2023speed}. If $\pi$ is heavy-tailed, a large amount of probability mass will be spread far from the mode of the target. All of the above algorithms will struggle to efficiently mix when targeting such distributions because they have a tendency to get lost in one corner of the tails, follow near-Brownian dynamics, and take an eternity to return to the mode. These problems are only made worse by the ``curse of dimensionality'': as the dimension $d$ increases, more and more of the volume in $\mathbb{R}^d$ is away from the mode, so there is ``more'' tail that needs efficiently exploring, whilst still needing to return quickly to the centre of the target. Hence, heavy-tailed densities are a major obstacle for efficient posterior sampling. Another is multimodality, though we will primarily discuss the case where $\pi$ is unimodal and refer to many other works attempting to address the challenges of sampling from multimodal targets (e.g.\ \cite{tawn2021annealed,pompe2020framework, MR4412989} and references therein).

With this motivation, one solution is to attempt to transform the sample space onto a compact set, effectively removing the possibility of getting lost in the tails, then sample from the transformed target distribution on the new support. \cite{yang2022stereographic} achieves this via the stereographic projection: this map transforms Euclidean space $\mathbb{R}^d$ onto $\mathbb{S}^d/N$, the unit sphere with the North Pole $N = (0,\dots,0,1)$ removed. In this setting, $N$ can be seen as the ``image of $\infty$'' under the map. Although $\mathbb{S}^d/N$ is not compact, it is relatively compact and can easily be extended to the compact set $\mathbb{S}^d$. This allows algorithms to reach anywhere in the state space in bounded time using only local moves. The algorithms introduced in their paper are shown to be uniformly ergodic for a wide range of target distributions, including heavy-tailed targets, and even exhibit a ``blessing of dimensionality'' in ideal settings, converging to stationarity faster as $d$ increases. 

Other papers, such as \cite{johnson2012variable, MR3911112}, also introduce transformations of the state space to improve sampling properties, but these transformations do not yield the same geometric benefits as the stereographic projection. These papers discuss only the impact of the transformations on mixing in the tails of the target, and do not discuss how they may cause the geometry in the high-probability regions to become more irregular. This is in stark contrast to the stereographic projection and its blessing of dimensionality.

However, having a compact support does not immediately lead to incredible sampling properties. If the target distribution $\pi$ is poorly preconditioned, the probability mass will be concentrated on a very small part of the sphere, which locally looks to our algorithms like a very small version of $\mathbb{R}^d$ and the potential benefits of using the stereographic projection are lost. We therefore parametrise the stereographic projection in order to attempt to evenly distribute the probability mass around $\mathbb{S}^d$. This is equivalent to preconditioning $\pi$ to be centred and appropriately scaled before we apply the transformation. 

With optimally chosen parameters, the probability mass becomes concentrated and uniformly spread around the equator of the sphere, which in turn improves the convergence and mixing of the MCMC algorithms. In practice, however, we will not know the optimal values for these parameters before running the process. A natural way of addressing this challenge is therefore to automatically update the parameters based on the history of the chain, and use these new, hopefully improved parameters in future transitions. This framework is known as adaptive MCMC, and is a relatively well-studied area \cite{MR1649199, MR2260070, roberts2007coupling, MR2461882, roberts2009examples, MR2759732, MR3012408, chimisov2018adapting, haario2001adaptive}.

To address the issues of potentially poorly specified parameters in the stereographic algorithms, in this paper, we create adaptive frameworks to update the parameters as we run our processes:
\begin{itemize}
    \item we present adaptive versions of the 3 stereographic MCMC algorithms we discuss;
    \item we provide a unifying theorem giving appropriate conditions for a Strong Law of Large Numbers (SLLN), $\mathcal{L}^2$ convergence and a CLT in each case;
    \item we prove this by showing that such a theorem holds when creating an adaptive version of any uniformly ergodic Markov process, whether discrete or continuous-time;
    \item we demonstrate the benefits and robustness of the adaptive scheme on two synthetic examples.
\end{itemize}

In Section \ref{section-stereographic-mcmc}, we formally introduce the stereographic projection, then the three algorithms: the Stereographic Random Walk (SRW), the Stereographic Slice Sampler (SSS) and the Stereographic Bouncy Particle Sampler (SBPS). The SRW and SBPS were first presented in \cite{yang2022stereographic}, and the SSS in \cite{habeck2023geodesic} under the name geodesic slice sampler.

In Section \ref{section-adaptive-stereographic}, we present our adaptive versions of each of the algorithms. Our algorithms are based on the Adapting Increasingly Rarely (AIR) MCMC setup from \cite{chimisov2018air,hofstadler2024almost}. We state the main convergence results in Theorem \ref{thm-fhat-asymptotics}. To theoretically justify the application of the AIR MCMC framework to our setting, we create a novel auxiliary process in Section \ref{section-segment-chain}, which we dub the segment chain. We show a SLLN, $\mathcal{L}^2$ convergence and a CLT for the segment chain, and show that any uniformly ergodic Markov process, whether discrete or continuous-time, inherits these results.

Finally, in Section \ref{section-simulations}, we present two synthetic examples demonstrating the ability of our algorithms to adapt the parameters of the transformation and the improvement this yields in sampling properties. The second example in particular demonstrates the stereographic algorithms' ability to start deep in the tails of heavy-tailed, high-dimensional target distributions with poor initial parameter choices and still find the modal region. By comparison, HMC fails to make any progress in finding the mode.

\section{Stereographic MCMC}
\label{section-stereographic-mcmc}

We start by defining the stereographic projection, then the three algorithms: the Stereographic Random Walk (SRW), the Stereographic Slice Sampler (SSS), and the Stereographic Bouncy Particle Sampler (SBPS).

\subsection{The Stereographic Projection}

The stereographic projection is a diffeomorphism from the punctured unit sphere $\mathbb{S}^d / \{N\}$, where $N = (0,\dots,0,1)$ is the ``North Pole'', to $\mathbb{R}^d$. Figure \ref{fig-stereo-project} presents the geometric intuition of the stereographic projection in the case $d=1$.

\begin{figure}[t!]
    \centering
        \begin{subfigure}{2.36in}
            \centering
            \begin{tikzpicture}[line cap=round,line join=round,>=Latex,x=1.25cm,y=1.25cm]
                \draw[->,color=black] (-2.,0.) -- (2.,0.);
                \foreach \x in {-2.,-1.5,-1.,-0.5,0.5,1.,1.5}
                \draw[shift={(\x,0)},color=black] (0pt,-2pt);
                \draw[->,color=black] (0.,-1.5) -- (0.,1.5);
                \foreach \y in {-1.5,-1.,-0.5,0.5,1.}
                \draw[shift={(0,\y)},color=black] (-2pt,0pt);
                \clip(-2.,-1.5) rectangle (2.,1.5);
                \draw [line width=0.8pt] (0.,0.) circle (1.25cm);
                \draw [line width=0.8pt,domain=-2.0:0.0] plot(\x,{(-0.5-1.*\x)/-0.5});
                \begin{scriptsize}
                    \draw [fill=red] (-0.5,0.) circle (2.5pt);
                    \draw [fill=black] (-0.8,-0.6) circle (2.5pt);
                \end{scriptsize}
            \end{tikzpicture}
        \end{subfigure}
        \begin{subfigure}{2.36in}
            \centering
            \begin{tikzpicture}[line cap=round,line join=round,>=Latex,x=1.25cm,y=1.25cm]
                \draw[->,color=black] (-2.,0.) -- (2.,0.);
                \foreach \x in {-2.,-1.5,-1.,-0.5,0.5,1.,1.5}
                \draw[shift={(\x,0)},color=black] (0pt,-2pt);
                \draw[->,color=black] (0.,-1.5) -- (0.,1.5);
                \foreach \y in {-1.5,-1.,-0.5,0.5,1.}
                \draw[shift={(0,\y)},color=black] (-2pt,0pt);
                \clip(-2.,-1.5) rectangle (2.,1.5);
                \draw [line width=0.8pt] (0.,0.) circle (1.25cm);
                \draw [line width=0.8pt,domain=0.0:2.0] plot(\x,{(--1.5-1.*\x)/1.5});
                \begin{scriptsize}
                    \draw [fill=red] (1.5,0.) circle (2.5pt);
                    \draw [fill=black] (0.9230769230769231,0.3846153846153846) circle (2.5pt);
                \end{scriptsize}
            \end{tikzpicture}
        \end{subfigure}
    \caption{The stereographic projection between $\mathbb{R}$ and $\mathbb{S}^1$: given $x \in \mathbb{R}$, we draw a line between $x$ (red) and the North Pole $N = (0,1)$, and define the projected point $z \in \mathbb{S}^1$ (black) to be point where the ray intersects the circle. We see that as $x \rightarrow \pm \infty$, $z \rightarrow N$.}
    \label{fig-stereo-project}
\end{figure}

Generalising to higher dimensions, and given a vector $\mu \in \mathbb{R}^d$ and a positive definite matrix $\Sigma \in \mathbb{R}^{d \times d}$, the (preconditioned) stereographic projection of a point $z \in \mathbb{S}^d$ gives
\begin{equation}
    x = \Sigma^{1/2} \left( \frac{z_1}{1- z_{d+1}}, \dots , \frac{z_d}{1 - z_{d+1}} \right) + \mu,
    \label{equ-stererographic-projection-x-preconditioned}
\end{equation}
and
\begin{equation}
    \begin{gathered}
        z_{1:d} = \frac{2 \Sigma^{-1/2}(x - \mu)}{\lVert \Sigma^{-1/2}(x-\mu) \rVert^2 +1}, \\[4pt]
        z_{d+1} = \frac{\lVert \Sigma^{-1/2}(x-\mu) \rVert^2 -1}{\lVert \Sigma^{-1/2}(x-\mu) \rVert^2 +1}.
    \end{gathered}
    \label{equ-stererographic-projection-z-preconditioned}
\end{equation}
This can be thought of as mapping from $z$ to $\Tilde{x} = \frac{z_{1:d}}{1-z_{d+1}}$, then preconditioning to obtain $x = \Sigma^{1/2} \Tilde{x} + \mu$. In other words, before we draw our ray from $N$ to $x$, we perform an affine transformation by translating $x$ by $-\mu$ and scaling via $\Sigma^{-1/2}$. Although mathematically equivalent, it is preferable to apply the transformations to $\mathbb{R}^d$, rather than turn $\mathbb{S}^d$ into an ellipsoid, so that the symmetry and constant curvature of the unit sphere are retained for simplicity of the resulting MCMC implementations. However, when communicating geometric intuition, talking instead about the equivalent transformation of adjusting the shape and location of the sphere is often convenient.

Setting $\gamma = (\mu, \Sigma) \in \Gamma$ as the parameter of the transformation and $\Gamma$ as full parameter space, we write 
\begin{equation}
    x = \text{SP}_\gamma(z) \quad \text{and} \quad z = \text{SP}_\gamma^{-1}(x) \label{equ-stereo-short}
\end{equation}
for the stereographic projection map \eqref{equ-stererographic-projection-x-preconditioned} and its inverse \eqref{equ-stererographic-projection-z-preconditioned}.

Under the stereographic projection, $N$ can be thought of as the image of $\infty$ in $\mathbb{S}^d$, folded into a single point, because as $\lVert x \rVert \rightarrow \infty$ in any way, $z = \text{SP}_\gamma^{-1}(x) \rightarrow N$. Moreover, we can travel vast distances in $\mathbb{R}^d$ by taking a very small step within the vicinity of the North Pole in $\mathbb{S}^d$. This will allow our MCMC algorithms to quickly explore the tails of the target distribution and then easily return to the high-probability region.

Using the Jacobian of the stereographic projection, $J_{\text{SP}_{\gamma}}(x) \propto  ( 1 + \lVert \Sigma^{-1/2}(x - \mu) \rVert^2 )^d,$ we transform the target density $\pi(x)$ on $\mathbb{R}^d$ to the following density on $\mathbb{S}^d$:
\begin{equation}
    \begin{split}
        \pi_\gamma(z) \propto & \pi(x) \left( 1 + \lVert \Sigma^{-1/2}(x - \mu) \rVert^2 \right)^d \\
        \propto & \pi(x) (1 - z_{d+1})^{-d},
    \end{split}
    \label{equ-target-preconditioned}
\end{equation}
where $x$ and $z$ are related by \eqref{equ-stereo-short}. For example, if $\pi(x) \propto (d + \lVert x \rVert^2)^d$ is a multivariate $t$-distribution (MtD) with $\nu = d$ degrees of freedom (DoF), then $\pi_\gamma$ is uniform on $\mathbb{S}^d$ when $\mu = 0_d$ and $\Sigma = dI_d$.

\subsection{Markov Processes on the Sphere}

Given the transformed density $\pi_\gamma$ on $\mathbb{S}^d$, we now discuss algorithms to efficiently sample from densities on the unit sphere. Since $\mathbb{S}^d$ is compact, we will see in Section \ref{section-uniform-ergodicity} that our stereographic MCMC algorithms can be uniformly ergodic, even in cases where $\pi$ has polynomial tails in $\mathbb{R}^d$.

\subsubsection{The Stereographic Random Walk}

We start with the SRW, a RWM algorithm on the sphere. This algorithm was introduced in \cite{yang2022stereographic} under the name stereographic projection sampler. We rename it to SRW for more consistent terminology and easier differentiation of acronyms. 

Given a position $z \in \mathbb{S}^d$, we propose the next point by taking a Gaussian step in the hyperplane tangent to $\mathbb{S}^d$ at $z$, then normalising the vector to return to $\mathbb{S}^d$. This proposal scheme is shown in Figure \ref{fig-srw-proposal}.

\begin{figure}[b!]
    \centering
    \begin{tikzpicture}[line cap=round,line join=round,>=Latex,x=1.6666666666666667cm,y=1.6666666666666667cm]
        \draw[->,color=black] (-1.5,0.) -- (1.5,0.);
        \foreach \x in {-1.4,-1.2,-1.,-0.8,-0.6,-0.4,-0.2,0.2,0.4,0.6,0.8,1.,1.2,1.4}
        \draw[shift={(\x,0)},color=black] (0pt,-2pt);
        \draw[->,color=black] (0.,-1.5) -- (0.,1.5);
        \foreach \y in {-1.4,-1.2,-1.,-0.8,-0.6,-0.4,-0.2,0.2,0.4,0.6,0.8,1.,1.2,1.4}
        \draw[shift={(0,\y)},color=black] (-2pt,0pt);
        \clip(-1.5,-1.5) rectangle (1.5,1.5);
        \draw [line width=0.8pt] (0.,0.) circle (1.6666666666666667cm);
        \draw [line width=1.2pt,dash pattern=on 2pt off 2pt,color=red] (1.1619944964788085,-0.6085544393252934)-- (0.9755921556337759,0.2195904047672446);
        \draw [line width=1.2pt,dash pattern=on 2pt off 2pt,color=blue] (0.8858657967534184,-0.4639415805275825)-- (1.1619944964788085,-0.6085544393252934);
        \begin{scriptsize}
            \draw [color=black] (0.9755921556337759,0.2195904047672446)-- ++(-2.5pt,-2.5pt) -- ++(5.0pt,5.0pt) ++(-5.0pt,0) -- ++(5.0pt,-5.0pt);
            \draw[color=black] (1.0892874940443935,0.30662150717376896) node {$z$};
            \draw [fill=red] (1.1619944964788085,-0.6085544393252934) circle (1.5pt);
            \draw[color=red] (1.1942178386527317,-0.7384253758797918) node {$z + dz$};
            \draw[color=red] (1.1950782601939334,-0.1371461297370395) node {$dz$};
            \draw [color=blue] (0.8858657967534184,-0.4639415805275825)-- ++(-2.0pt,-2.0pt) -- ++(4.0pt,4.0pt) ++(-4.0pt,0) -- ++(4.0pt,-4.0pt);
            \draw[color=blue] (0.7704037648141462,-0.38735554203781447) node {$z^\prime$};
        \end{scriptsize}
    \end{tikzpicture}
    \caption{An SRW proposal}
    \label{fig-srw-proposal}
\end{figure}

To implement this, given $Z_n = z$, we sample $\Tilde{dz} \sim \mathcal{N}(0_{d+1}, h^2 I_{d+1})$ and set $dz = \Tilde{dz} - (z~\cdot~\Tilde{dz})z$ to be the Gaussian step orthogonal to $z$. Here $a \cdot b$ denotes the standard inner product between vectors $a$ and $b$. We then let our proposal point be $z^\prime = \frac{z + dz}{\lVert z + dz \rVert}$. The symmetrical nature of the Gaussian distribution ensures the proposal kernel $q(z,z^\prime)$ is reversible with respect to the uniform measure on~$\mathbb{S}^d$, i.e.
\begin{equation}
    q(z,z^\prime) = q(z^\prime,z).
    \label{equ-reversible-srw-proposal}
\end{equation}
We then accept $z^\prime$ as the new position with probability $\text{min}\left( \frac{\pi_\gamma(z^\prime)}{\pi_\gamma(Z_n)}, 1\right)$, since the proposal is reversible. The overall algorithm is given in Algorithm \ref{alg-srw}.

\begin{algorithm}[t!]
    \begin{itemize}
        \item[\textbf{Input:}]
        Target density $\pi$ on $\mathbb{R}^d$, $X_0 \in \mathbb{R}^d$, parameters $\gamma \in \Gamma$, $h>0$, $Z_0 = \text{SP}^{-1}_\gamma( X_0 )$ 
        \item[\textbf{Output:}]
        $\{( X_n, Z_n )\}_{n \in \mathbb{N}}$ [2pt]
        \item[\textbf{For:}] $n = 0,1,\dots$\textbf{:}
        \item Sample $\Tilde{dz} \sim \mathcal{N}(0_{d+1},h^2I_{d+1})$ in $\mathbb{R}^{d+1}$, and set $dz = \Tilde{dz} - (\Tilde{dz} \cdot Z_n)Z_n$
        \item Set $z^\prime = \frac{Z_n + dz}{\lVert Z_n + dz \rVert}$, and set $Z_{n+1} = z^\prime$ with probability
        \begin{equation*}
            \text{min}\left( \frac{\pi_\gamma(z^\prime)}{\pi_\gamma(Z_n)}, 1\right) = \text{min}\left( \frac{\pi(x^\prime)(1-z^\prime_{d+1})^{-d}}{\pi(X_n)(1-Z_{n,d+1})^{-d}}, 1\right),
        \end{equation*}
        where $x^\prime = \text{SP}_\gamma(z^\prime)$. Otherwise, set $Z_{n+1} = Z_n$
        \item Set $X_{n+1} = \text{SP}_\gamma(Z_{n+1})$
    \end{itemize}
    \caption{The Stereographic Random Walk}\label{alg-srw}
\end{algorithm}

From Equation \eqref{equ-reversible-srw-proposal}, we see that this is indeed a RWM algorithm, and therefore has invariant measure $\pi_\gamma$ on $\mathbb{S}^d$. The projection of the sample path onto $\mathbb{R}^d$ will therefore have stationary distribution $\pi$. Furthermore, the SRW is simultaneously uniformly ergodic, even when targeting densities as heavy-tailed as a MtD with at least $d$ DoF (see Lemma \ref{lemma-srw-uniform-ergodicity}).

\subsubsection{The Stereographic Slice Sampler}

\cite{habeck2023geodesic} presents the geodesic slice sampler as a method for sampling from distributions on $\mathbb{S}^d$. We repurpose it as the SSS and apply it to $\pi_\gamma$.

From a position $z \in \mathbb{S}^d$, we start by sampling $v$ uniformly from the set
\begin{equation}
    \{z\}^\perp = \{ v \in \mathbb{S}^d : v \cdot z = 0\},
    \label{equ-z-perp}
\end{equation}
which we write as $v \sim p(\cdot \mid z)$. This defines a geodesic of the form $\{ z\cos(\theta) + v\sin(\theta)$ : $\theta \in [0;2\pi)\}$. We also define the space of orthonormal pairs $(z,v)$ as
\begin{equation}
    \Sperp = \{ (z,v) \in \mathbb{S}^d \times \mathbb{S}^d : z \cdot v =0\}.
    \label{equ-sperp}
\end{equation}

We then aim to perform a slice sampling step targeting the measure on the one-dimensional geodesic with density proportional to $\pi_\gamma$. We start by sampling $t \sim U(0,\pi_\gamma(z))$, then wish to sample a new position $z^\prime$ uniformly from the superlevel set
\begin{equation}
    L_{z,v}(t) = \{ z^\prime = z \cos(\theta) + v \sin(\theta) : \theta \in [0,2\pi), \pi_\gamma(z^\prime) > t \}.
    \label{equ-level-set}
\end{equation}

In practice, exact sampling from the uniform distribution on $L_{z,v}(t)$ requires rejection sampling and can be very inefficient. To improve efficiency, at each step \cite{habeck2023geodesic} instead uses an adaptive rejection sampling algorithm, referred to as the shrinkage procedure =(this is the same procedure as used in the elliptical slice sampler \cite{murray2010elliptical}). This process is explicitly stated in Algorithm \ref{alg-sss-shrinkage}, and depicted geometrically in Figure \ref{fig-sss-shrinkage}.

\begin{algorithm}[b!]
    \begin{itemize}
        \item[\textbf{Input:}] Target density $p$ on $\mathbb{S}^d$, orthonormal pair $(z,v) \in \Sperp$, level $t \in (0,p(z))$ 
        \item[\textbf{Output:}] $z^\prime \in L_{z,v}(t)$
        \item[\textbf{Initialisation:}] Sample $\theta \sim U(0,2\pi)$, and set $\theta_\text{max} = \theta$, $\theta_\text{min} = \theta - 2\pi$ 
        \item[\textbf{While}] $p( z\cos(\theta) + v\sin(\theta)) \leq t$\textbf{:}
        \item \textbf{If} $\theta <0$: Set $\theta_\text{min} = \theta$
        \item \textbf{Else}: Set $\theta_\text{max} = \theta$
        \item Resample $\theta \sim U(\theta_\text{min},\theta_\text{max})$
        \item[\textbf{Return:}] $z^\prime = z\cos(\theta) + v\sin(\theta)$.
    \end{itemize}
    \caption{The Shrinkage Procedure for the Stereographic Slice Sampler}\label{alg-sss-shrinkage}
\end{algorithm}

\begin{figure}[t!]
    \centering
    \begin{subfigure}{2.36in}
        \centering
        \begin{tikzpicture}[line cap=round,line join=round,>=Latex,x=1cm,y=1cm]
            \draw[->,color=black] (-2.,0.) -- (2.,0.);
            \draw[->,color=black] (0.,-1.5) -- (0.,1.5);
            \draw [line width=1pt,color=grey] (0,0) circle (1cm);
            \draw [shift={(0,0)},line width=1pt,color=red]  plot[domain=-0.0657487157210559:0.3919475254685133,variable=\t]({1*1*cos(\t r)+0*1*sin(\t r)},{0*1*cos(\t r)+1*1*sin(\t r)});
            \draw [shift={(0,0)},line width=1pt,color=red]  plot[domain=3.9124990779217423:5.494837299830893,variable=\t]({1*1*cos(\t r)+0*1*sin(\t r)},{0*1*cos(\t r)+1*1*sin(\t r)});
            \draw [shift={(0,0)},line width=1pt,color=red]  plot[domain=2.302674221733029:2.9144869629557855,variable=\t]({1*1*cos(\t r)+0*1*sin(\t r)},{0*1*cos(\t r)+1*1*sin(\t r)});
            \begin{scriptsize}
                \draw [fill=red] (1,0) circle (1.5pt);
                \draw[color=red] (1.2,0.2) node {$z$};
                \draw [fill=blue] (-0.46702917587107556,0.884241906315904) circle (1.5pt);
                \draw[color=blue] (-0.45,1.2) node {$\theta_0$};
            \end{scriptsize}
        \end{tikzpicture}
        \caption{We start with an initial proposal $\theta_0$, targeting the set $L_{z,v}(t)$ (shown in red).}
    \end{subfigure}
    \hfill
    \begin{subfigure}{2.36in}
        \centering
        \begin{tikzpicture}[line cap=round,line join=round,>=Latex,x=1cm,y=1cm]
            \draw[->,color=black] (-2.,0.) -- (2.,0.);
            \draw[->,color=black] (0.,-1.5) -- (0.,1.5);
            \draw [line width=1pt,color=grey] (0,0) circle (1cm);
            \draw [shift={(0,0)},line width=1pt,color=red]  plot[domain=-0.0657487157210559:0.3919475254685133,variable=\t]({1*1*cos(\t r)+0*1*sin(\t r)},{0*1*cos(\t r)+1*1*sin(\t r)});
            \draw [shift={(0,0)},line width=1pt,color=red]  plot[domain=3.9124990779217423:5.494837299830893,variable=\t]({1*1*cos(\t r)+0*1*sin(\t r)},{0*1*cos(\t r)+1*1*sin(\t r)});
            \draw [shift={(0,0)},line width=1pt,color=red]  plot[domain=2.302674221733029:2.9144869629557855,variable=\t]({1*1*cos(\t r)+0*1*sin(\t r)},{0*1*cos(\t r)+1*1*sin(\t r)});
            \begin{scriptsize}
                \draw [fill=red] (1,0) circle (1.5pt);
                \draw[color=red] (1.2,0.2) node {$z$};
                \draw [fill=blue] (-0.46702917587107556,0.884241906315904) circle (1.5pt);
                \draw[color=blue] (-0.45,1.2) node {$\theta_0$};
                \draw [fill=blue] (-0.8610897474504623,-0.5084529937326547) circle (1.5pt);
                \draw[color=blue] (-1,-0.7) node {$\theta_1$};
            \end{scriptsize}
        \end{tikzpicture}
        \caption{If we reject $\theta_0$, propose a new point from the geodesic.}
    \end{subfigure}
    \\[10pt]
    \centering
    \begin{subfigure}{2.36in}
        \centering
        \begin{tikzpicture}[line cap=round,line join=round,>=Latex,x=1cm,y=1cm]
            \draw[->,color=black] (-2.,0.) -- (2.,0.);
            \draw[->,color=black] (0.,-1.5) -- (0.,1.5);
            \draw [shift={(0,0)},line width=1pt,color=grey]  plot[domain=-2.6082053862013876:2.0567243708247496,variable=\t]({1*1*cos(\t r)+0*1*sin(\t r)},{0*1*cos(\t r)+1*1*sin(\t r)});
            \draw [shift={(0,0)},line width=1pt,color=red]  plot[domain=-0.0657487157210559:0.3919475254685133,variable=\t]({1*1*cos(\t r)+0*1*sin(\t r)},{0*1*cos(\t r)+1*1*sin(\t r)});
            \draw [shift={(0,0)},line width=1pt,color=red]  plot[domain=3.9124990779217423:5.494837299830893,variable=\t]({1*1*cos(\t r)+0*1*sin(\t r)},{0*1*cos(\t r)+1*1*sin(\t r)});
            \begin{scriptsize}
                \draw [fill=red] (1,0) circle (1.5pt);
                \draw[color=red] (1.2,0.2) node {$z$};
                \draw [fill=blue] (-0.46702917587107556,0.884241906315904) circle (1.5pt);
                \draw[color=blue] (-0.45,1.2) node {$\theta_0$};
                \draw [fill=blue] (-0.8610897474504623,-0.5084529937326547) circle (1.5pt);
                \draw[color=blue] (-1,-0.7) node {$\theta_1$};
                \draw [fill=blue] (0.7191229524372548,0.6948828529168967) circle (1.5pt);
                \draw[color=blue] (0.95,0.9) node {$\theta_2$};
            \end{scriptsize}
        \end{tikzpicture}
        \caption{If we reject $\theta_1$, remove the segment which does not contain $z$. Sample $\theta_2$ uniformly from the new interval.}
    \end{subfigure}
    \hfill
    \begin{subfigure}{2.36in}
        \centering
        \begin{tikzpicture}[line cap=round,line join=round,>=Latex,x=1cm,y=1cm]
            \draw[->,color=black] (-2.,0.) -- (2.,0.);
            \draw[->,color=black] (0.,-1.5) -- (0.,1.5);
            \draw [shift={(0,0)},line width=1pt,color=grey]  plot[domain=-2.6082053862013876:0.7682569852588388,variable=\t]({1*1*cos(\t r)+0*1*sin(\t r)},{0*1*cos(\t r)+1*1*sin(\t r)});
            \draw [shift={(0,0)},line width=1pt,color=red]  plot[domain=-0.0657487157210559:0.3919475254685133,variable=\t]({1*1*cos(\t r)+0*1*sin(\t r)},{0*1*cos(\t r)+1*1*sin(\t r)});
            \draw [shift={(0,0)},line width=1pt,color=red]  plot[domain=3.9124990779217423:5.494837299830893,variable=\t]({1*1*cos(\t r)+0*1*sin(\t r)},{0*1*cos(\t r)+1*1*sin(\t r)});
            \begin{scriptsize}
                \draw [fill=red] (1,0) circle (1.5pt);
                \draw[color=red] (1.2,0.2) node {$z$};
                \draw [fill=blue] (-0.8610897474504623,-0.5084529937326547) circle (1.5pt);
                \draw[color=blue] (-1,-0.7) node {$\theta_1$};
                \draw [fill=blue] (0.7191229524372548,0.6948828529168967) circle (1.5pt);
                \draw[color=blue] (0.95,0.9) node {$\theta_2$};
                \draw [fill=green] (0.503114016785377,-0.8642200449619779) circle (1.5pt);
                \draw[color=green] (0.7,-1.1) node {$\theta_3$};
            \end{scriptsize}
        \end{tikzpicture}
        \caption{Repeat this process, iteratively shrinking the size of the interval around $z$, until we propose a point in $L_{z,v}(t)$.}
    \end{subfigure}
    
    \caption{The Shrinkage Procedure for the Stereographic Slice Sampler: consider an initial position $z$ and a geodesic defined by some $v$. The set $L_{z,v}(t)$ is shown in red, and the search interval sequentially shrinks based on the rejected points.}
    \label{fig-sss-shrinkage}
\end{figure}

The shrinkage procedure, which we denote $\text{Shrink}(z,v,t)$, is then incorporated into a full Markov kernel: given a current position $Z_n = z \in \mathbb{S}^d$, we sample $t \sim U(0, \pi_\gamma(z))$, $v \sim p(\cdot \mid z)$, then sample $Z_{n+1} \sim \text{Shrink}(z,v,t)$. We summarise this in Algorithm \ref{alg-sss}. \cite{habeck2023geodesic} prove in Proposition 15 that the shrinkage procedure is reversible with respect to the uniform distribution on $L_{z,v}(t)$, and therefore that the SSS kernel is $\pi_\gamma$ reversible, and so has the correct stationary distribution. Furthermore, the SSS is simultaneously uniformly ergodic, even when targeting densities as heavy-tailed as a MtD with at least $d$ DoF (see Lemma \ref{lemma-sss-uniform-ergodicity}).

\begin{algorithm}[t!]
    \begin{itemize}
    \item[\textbf{Input:}]
    Target density $\pi$ on $\mathbb{R}^d$, $X_0 \in \mathbb{R}^d$, parameter $\gamma \in \Gamma$, $Z_0 = \text{SP}^{-1}_\gamma( X_0 )$ 
    \item[\textbf{Output:}]
    $\{( X_n, Z_n )\}_{n \in \mathbb{N}}$
    \item[\textbf{For}] $n = 0,1,\dots$\textbf{:}
    \item Sample $T_n \sim U(0,\pi_\gamma(Z_n))$, and $V_n \sim p(\cdot \mid Z_n)$
    \item Sample $Z_{n+1} \sim \text{Shrink}(Z_n,V_n,T_n)$ according to the shrinkage procedure in Algorithm \ref{alg-sss-shrinkage}
    \item Set $X_{n+1} = \text{SP}_\gamma(Z_{n+1})$
    \end{itemize}
    \caption{The Stereographic Slice Sampler}\label{alg-sss}
\end{algorithm}

When comparing the SSS to the Elliptical Slice Sampler (ESS) \cite{murray2010elliptical}, both behave almost identically when on the equator. The ESS considers an elliptical path through $X$ and some $V \sim \mathcal{N}(0_d, I_d)$, then uses an identical shrinkage procedure to sample from the appropriate superlevel set. This results in a very similar move to the SSS.

On the other hand, Figure \ref{fig-sss-ess-proposals} shows the difference in the proposal mechanisms of the SSS and ESS when $\lVert X \rVert$ is large. In particular, the projection of the SSS proposal onto $\mathbb{R}^d$ does not put a uniform density on the circle it is considering, making it significantly more likely to propose moves near the mode, even if the current position is far in the tails.

\begin{figure}[b]
    \centering
    \begin{subfigure}{0.45\textwidth}
        \centering
        \vspace{-10pt}        
        \begin{tikzpicture}[line cap=round,line join=round,>=Latex,x=1cm,y=1cm]
            \clip(-2.7991631799163184, -1.8659688144266038) rectangle (2.7991631799163184,1.8746169596319737);
            \draw [line width=1pt] (0,0) circle (1cm);
            \draw [line width=1pt,color=red] (-1.05,0) circle (1.45cm);
            \draw [shift={(-1.05,0)},line width=1pt,color=blue]  plot[domain=0.7610127542247297:5.522172552954856,variable=\t]({1*1.45*cos(\t r)+0*1.45*sin(\t r)},{0*1.45*cos(\t r)+1*1.45*sin(\t r)});
            
            \begin{scriptsize}
                \draw [fill=grey] (0,0) circle (2pt);
                \draw[color=black] (-0.2,0.2) node {$O$};
                \draw [fill=blue] (-2.5,0) circle (2.5pt);
                \draw[color=blue] (-2.7,0.2) node {$X$};
                \draw [fill=grey] (0,1) circle (2pt);
                \draw[color=black] (0.2,1.2) node {$V$};
            \end{scriptsize}
            
        \end{tikzpicture}

        \vspace{-15pt}
        \caption{SSS ellipse of interest}
        \vspace{-5pt}
    \end{subfigure}
    \hfill
    \begin{subfigure}{0.45\textwidth}
        \centering
        \vspace{-10pt}       
        \begin{tikzpicture}[line cap=round,line join=round,>=Latex,x=1cm,y=1cm]
            \clip(-2.7991631799163184, -1.8659688144266038) rectangle (2.7991631799163184,1.8746169596319737);
            \draw [line width=1pt] (0,0) circle (1cm);
            \draw [line width=1pt,color=red]  plot[domain=-pi/2:pi/2,variable=\t]({2.5*cos(\t r)},{sin(\t r)});
            
            \draw [line width=1pt,color=blue]  plot[domain=pi/2:3*pi/2,variable=\t]({2.5*cos(\t r)},{sin(\t r)});
            
            \begin{scriptsize}
                \draw [fill=grey] (0,0) circle (2pt);
                \draw[color=black] (-0.2,0.2) node {$O$};
                \draw [fill=blue] (-2.5,0) circle (2.5pt);
                \draw[color=blue] (-2.7,0.2) node {$X$};
                \draw [fill=grey] (0,1) circle (2pt);
                \draw[color=black] (0.2,1.2) node {$V$};
            \end{scriptsize}
        \end{tikzpicture}
        \vspace{-15pt}
        \caption{ESS ellipse of interest}
        \vspace{-5pt}
    \end{subfigure}
    \caption[Proposals for SSS and ESS]{Proposal ellipses and densities for the SSS and ESS when $\lVert X \rVert^2$ is large. The blue and red regions have equal probability mass under the first proposal. The SSS will therefore more quickly propose points in the high-probability region.}
    \label{fig-sss-ess-proposals}
\end{figure}

\subsubsection{The Stereographic Bouncy Particle Sampler}

Our last algorithm, the SBPS, was introduced in \cite{yang2022stereographic}, and is a Piecewise Deterministic Markov Process (PDMP) targeting~$\pi_\gamma$. Unlike the previous two algorithms, it is defined as a continuous-time process over $\Sperp$, as defined in equation \eqref{equ-sperp}, where the position $z$ is considered to be our sample point, and the velocity $v$ is a latent variable used to explore the space. Given an initial pair $(z_0,v_0) \in \Sperp$, the process evolves deterministically along a geodesic, according to the dynamics
\begin{equation}
    \begin{split}
        z(t) &= z_0\ \cos(t) + v_0\ \sin(t), \\
        v(t) &= v_0\ \cos(t) - z_0\ \sin(t).
    \end{split}
    \label{equ-deterministic}
\end{equation}

We then introduce two types of random events at which we change the velocity $v$. Once an event occurs, we resume our deterministic dynamics according to Equation \eqref{equ-deterministic}, but with the new velocity.

Bounce events occur according to an inhomogeneous Poisson process with rate $\chi(t) = \lambda( z(t), v(t) )$, where
\begin{equation}
    \lambda(z,v) = \max\left[ 0, -v \cdot \nabla_z \log\pi_\gamma(z) \right],
    \label{equ-bounce-rate}
\end{equation}
which is equal to $\left(-\frac{\dif \log\pi_\gamma(z(t))}{\dif t}\right)^+$ at $(z(t),v(t))$. Bounce events, therefore, cannot occur if $\log\pi_\gamma(z(t))$ is increasing and are expected to occur sooner the faster $\log\pi_\gamma(z(t))$ is decreasing.

For a bounce event at $(z,v)$, we update the velocity $v$ by reflecting its component in the direction of the gradient of $\log\pi_\gamma$ to
\begin{equation}
    v^\prime = v - 2 \frac{v \cdot \Tilde{\nabla}_z \log\pi_\gamma(z)}{\lVert \Tilde{\nabla}_z \log\pi_\gamma(z) \rVert^2} \Tilde{\nabla}_z \log\pi_\gamma(z),
    \label{equ-bounce-velocity}
\end{equation}
where the $\Tilde{\nabla}_z $ operator refers to the portion of the gradient which is tangent to the sphere at $z$, and is expressed by
\begin{equation}
    \Tilde{\nabla}_z  U(z) = \nabla_z U(z) - (z \cdot \nabla_z U(z))z,
    \label{equ-tangent-gradient}
\end{equation}
for an arbitrary function $U$. Geometrically, the path of the particle around bounce events is described in Figure \ref{fig-bounce-event}. We let $R(z)$ be the matrix
\begin{equation}
    R(z) = I_{d+1} - 2 \frac{\Tilde{\nabla}_z \log\pi_\gamma(z) \Tilde{\nabla}_z \log\pi_\gamma(z)^T}{\lVert \Tilde{\nabla}_z \log\pi_\gamma(z) \rVert^2},
    \label{equ-bounce-matrix}
\end{equation}
so that $v^\prime = R(z)v$ for a bounce event at $(z,v)$.

\begin{figure}[t!]
    \centering
    \begin{tikzpicture}[line cap=round,line join=round,>=Latex,x=1cm,y=1cm]
        \clip(-5.7315873015873136,1.3176719576719547) rectangle (-0.06809523809523338,5.10179894179896);
        \draw [line width=0.5pt] (0,0) circle (5cm);
        \draw [line width=0.5pt,dash pattern=on 2pt off 2pt,domain=-5.7315873015873136:-0.06809523809523338] plot(\x,{(--24.985716326530707--3.6403862911543095*\x)/3.4253910693096747});
        \draw [{Stealth[length=5pt,width=5pt]}-,line width=0.5pt,color=green] (-2.141017949815245,3.3472903390951245)-- (-3.6403862911543095,3.4253910693096747);
        \draw [-{Stealth[length=5pt,width=5pt]},line width=0.5pt,color=red] (-3.4712698412698475,1.9335449735449741)-- (-3.5611918324218124,2.726784019865405);
        \draw [line width=0.5pt,color=red] (-3.5611918324218124,2.726784019865405) -- (-3.6403862911543095,3.4253910693096747);
        \draw [->,line width=0.5pt] (-3.6403862911543095,3.4253910693096747) -- (-4.246631501746869,3.9958324357167134);
        \begin{scriptsize}
            \draw[color=black] (-1.2,4.3) node {$U(z) = \text{constant}$};
            \draw[color=black] (-4.5728571428571483,4.3) node {$-\nabla U(z)$};
        \end{scriptsize}
    \end{tikzpicture}
    \caption{Illustration of a Bounce Event. Here, the target density is taken to be of the form $\pi_\gamma(z) \propto \exp(U(z))$ for some function $U(z)$. The particle initially travels ``downhill'' along the red path before bouncing at a specific time. At the event time, the particle lies on the black contour of $U(z)$ and ``bounces'' away from it.}
    \label{fig-bounce-event}
\end{figure}

Our second type of event, refreshment events, occurs according to a homogeneous Poisson process with constant rate $\lambda_\text{ref}>0$ independently of the current state of the process. For a refreshment event at $(z,v)$, we sample $v^\prime \sim p(\cdot \mid z)$ uniformly from $\{ z\}^\perp$, as in Equation \eqref{equ-z-perp}, therefore choosing a new geodesic to follow uniformly at random. As with the Euclidean BPS, including independent refreshments is necessary for irreducibility of the algorithm, as shown in Figure \ref{fig-bounce-reducibility}. 

\begin{algorithm}[b!]
    \begin{itemize}
        \item[\textbf{Input:}] Target density $\pi_\gamma$ on $\mathbb{S}^d$, $(Z_0,V_0) \in \Sperp$, refreshment rate $\lambda_\text{ref}$.
        \item[\textbf{Output:}] $\{( Z_t, V_t )\}_{t \in [0,\infty)}$.
        \item[\textbf{Initialisation:}] Set $(z^{(0)},v^{(0)}) = (Z_0,V_0)$ and $s = 0$. 
        \item[\textbf{For}] $i = 0,1,\dots$\textbf{:}
        \item Sample $\tau_\text{ref} \sim \text{Exp}(\lambda_\text{ref})$
        \item Sample $\tau_\text{bounce}$ according to the first event of a Poisson process with rate function $t \mapsto \lambda(z(t),v(t))$, with $(z(t),v(t))$ given by Equation \eqref{equ-deterministic} initialised at $(z^{(i)},v^{(i)})$, and $\lambda(z,v)$ as in Equation \eqref{equ-bounce-rate}
        \item Set $\tau = \text{min}(\tau_\text{ref},\tau_\text{bounce})$, and $Z(t+s) = z^{(i)} \cos(t) + v^{(i)} \sin(t),\ V(t+s) = v^{(i)} \cos(t) - z^{(i)} \sin(t)$ for all $t \in [0,\tau)$
        \item Set $s = s+\tau$, $z^{(i+1)} = z^{(i)} \cos(\tau) + v^{(i)} \sin(\tau)$ and $\widehat{v} = v^{(i)} \cos(\tau) - z^{(i)} \sin(\tau)$
        \item \textbf{If} $\tau_\text{ref} < \tau_\text{bounce}$: sample $v^{(i+1)} \sim p(\cdot \mid z^{(i+1)})$
        \item \textbf{Else}: set $v^{(i+1)} = R(z^{(i+1)})\widehat{v}$ as per Equation \eqref{equ-bounce-matrix}
    \end{itemize}
    \caption{The Stereographic Bouncy Particle Sampler}\label{alg-sbps}
\end{algorithm}

\begin{figure}[t!]
    \centering
    \begin{subfigure}{2.36in}
        \centering
        \includegraphics[width=2.35in]{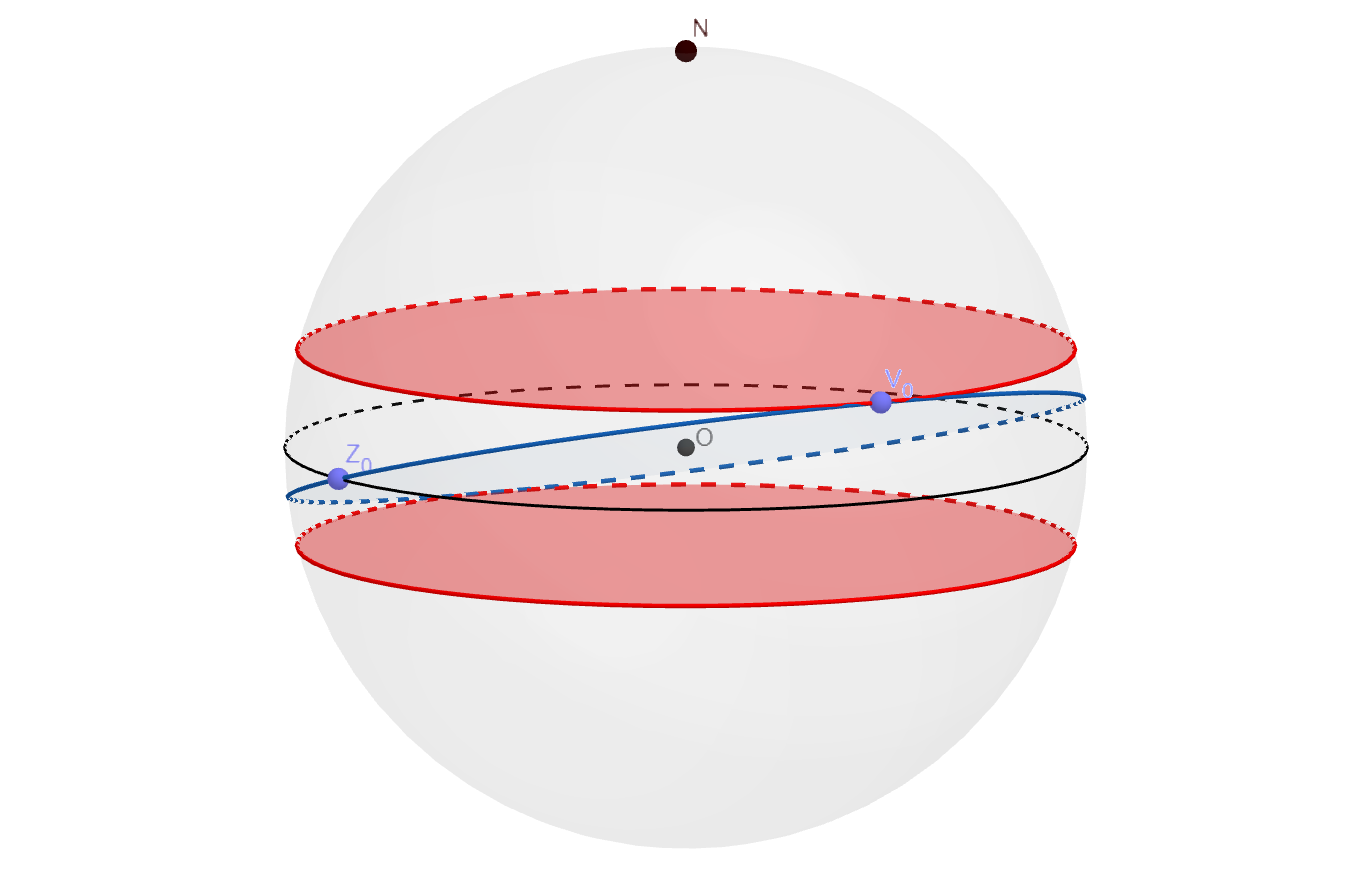}
        \caption{Any geodesic is constrained to a certain region around the equator.}
    \end{subfigure}
    \hfill
    \begin{subfigure}{2.36in}
        \centering
        \includegraphics[width=2.35in]{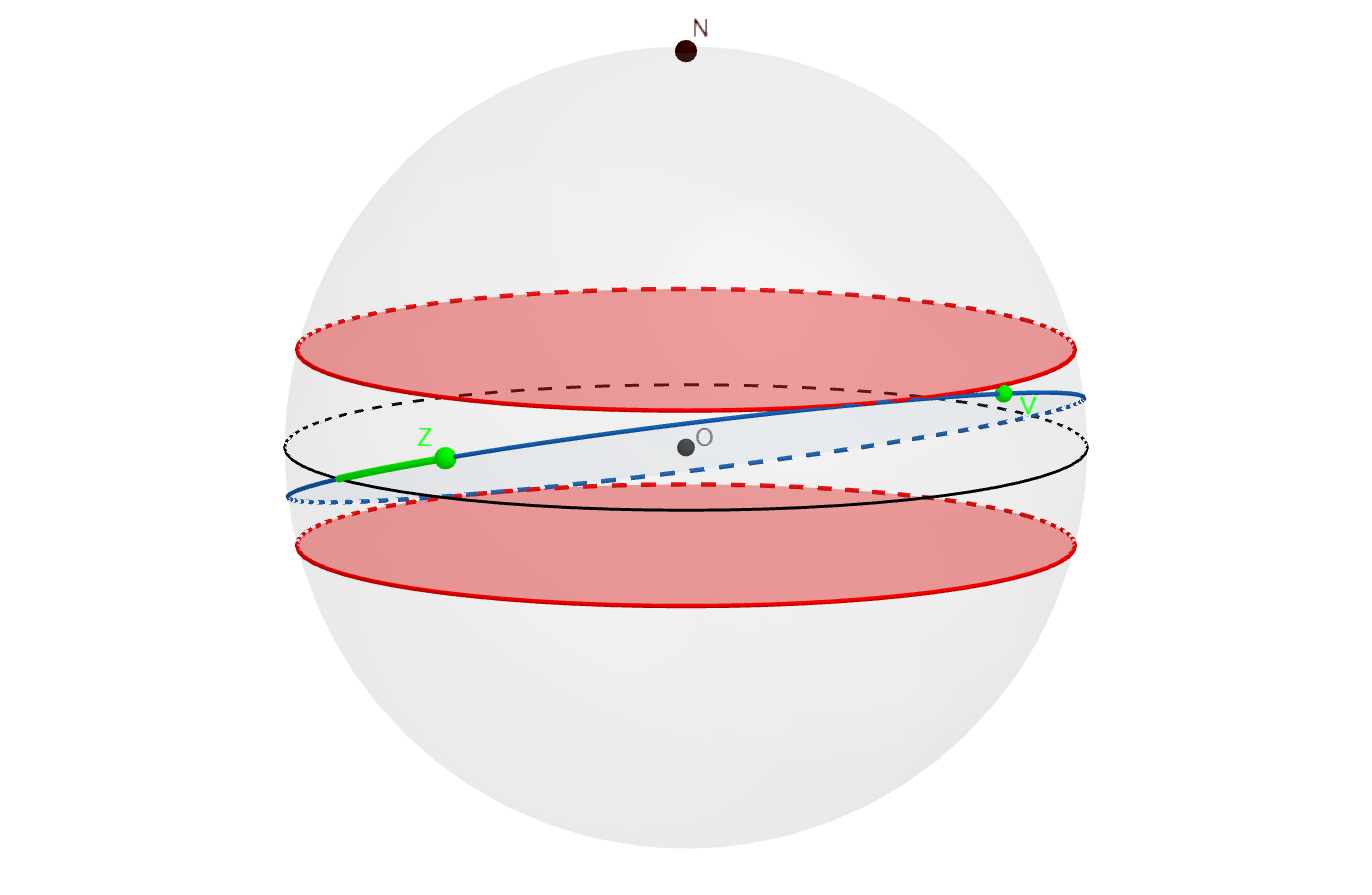}
        \caption{Running the dynamics SBPS keeps us there.}
    \end{subfigure}
    \\[10pt]
    \begin{subfigure}{2.36in}
        \centering
        \includegraphics[width=2.35in]{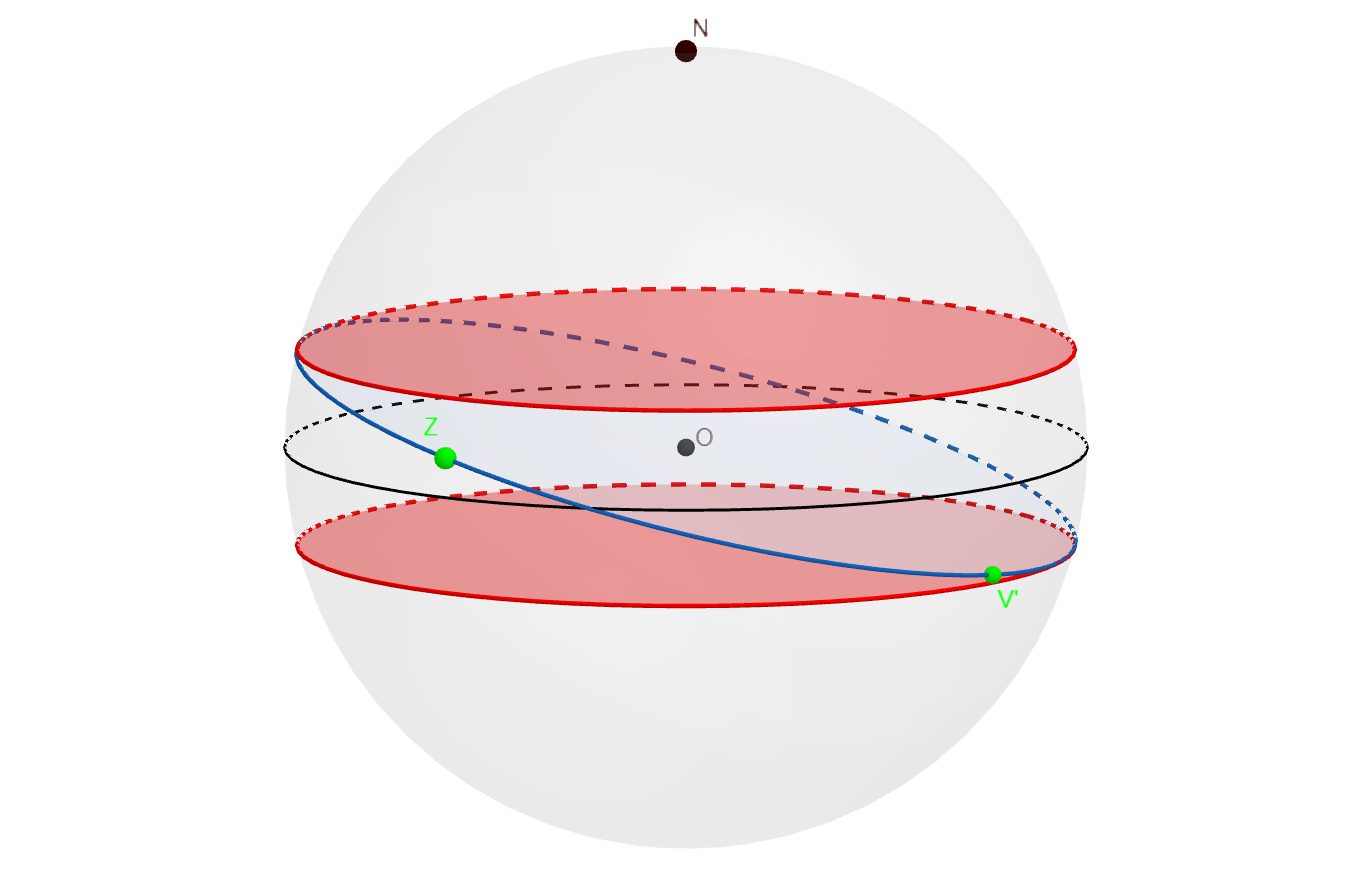}
        \caption{Bounce events cannot change these bounds on the latitude.}
    \end{subfigure}
    \hfill
    \begin{subfigure}{2.36in}
        \centering
        \includegraphics[width=2.35in]{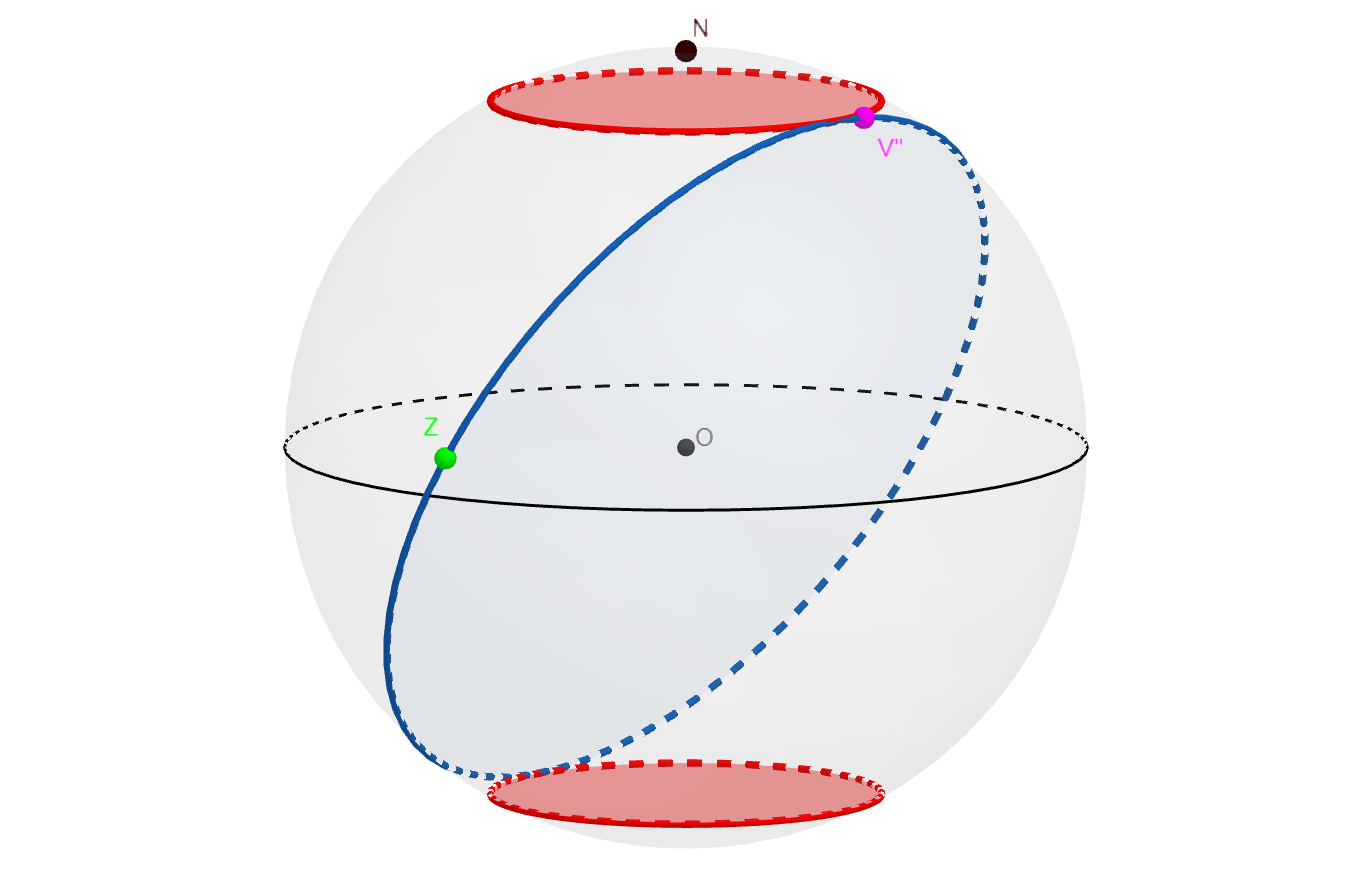}
        \caption{Refreshment events allow the process to access the entire sample space.}
    \end{subfigure}
    \caption{SBPS Bounce Events do not allow the path to leave the vicinity of the equator when targeting a spherically symmetrical distribution, such as a $\mathcal{N}(0_d,I_d)$. Refreshment events are then required to ensure irreducibility.}
    \label{fig-bounce-reducibility}
\end{figure}

We denote $(Z_t, V_t)$ the random position and velocity of the SBPS at time $t > 0$, and present the algorithm as a whole in Algorithm \ref{alg-sbps}. Note that, computationally speaking, one will not be able to store or work with a continuous sample path. One can choose to either output a skeleton of the chain, sampled at regular time intervals of short length $\delta$, or output the points at event times $\{(z^{(i)},v^{(i)})\}_{i \in \mathbb{N}}$, although it is vital to note that the positions $z^{(i)}$ at event times are not distributed according to $\pi_\gamma$ (see e.g. Equation (34.23) of \cite{davis1993markov}).

It is shown in \cite{yang2022stereographic} that the SBPS is ergodic, and its stationary distribution has density $\pi_\gamma(z) \times p(v \mid z)$ on $\Sperp$. Furthermore, the SBPS is simultaneously uniformly ergodic, even when $\pi$ is as heavy-tailed as a MtD with at least $d - 1/2$ DoF (see Lemma \ref{lemma-sbps-uniform-ergodicity}).

\section{Adaptive Stereographic Algorithms}
\label{section-adaptive-stereographic}

Each of the above algorithms is parametrised by $\gamma$, the parameters of the stereographic projection, and potentially additional parameters determining the dynamics of the Markov kernel on $\mathbb{S}^d$ ($h$ for the SRW and $\lambda_\text{ref}$ for the SBPS). The choices of these parameters can heavily impact algorithmic performance. In this section, we establish motivation for optimal choices of $\gamma$ and introduce our adaptive versions of each of the 3 algorithms. We then find sufficient conditions to prove that estimators produced by the adaptive algorithms satisfy a SLLN, $\mathcal{L}_2$ convergence, and a CLT.

\subsection{The Equator as a High-Probability Region}
\label{section-equator}

We will be interested in varying $\gamma$ in order to optimally position and scale the sphere to match properties of the target distribution~$\pi$. As discussed at the start of Section \ref{section-stereographic-mcmc}, this is equivalent to tuning the parameters of the affine preconditioning, which we perform on $\mathbb{R}^d$ before projecting onto $\mathbb{S}^d$. 

To motivate the optimal choice of parameters, consider $X \sim \prod_{i=1}^d f(x_i)$ an iid product with $\mathbb{E}_f(X) = 0$ and $\mathbb{E}_f(X^2) = 1$. Chebyshev's inequality then gives
\begin{equation}
    \frac{1}{d} \sum_{i=1}^d X_i^2 = 1 + \mathcal{O}_{\mathbb{P}}(d^{-\frac{1}{2}}).
    \label{equ-radius-big-o}
\end{equation}
Geometrically, as $d$ increases, we see that the distribution of $X$ will become more and more concentrated around a spherical shell of radius $\sqrt{d}$, as depicted in Figure \ref{fig-high-dim-equator}.

\begin{figure}[t!]
    \centering
    \includegraphics[width=2.6in]{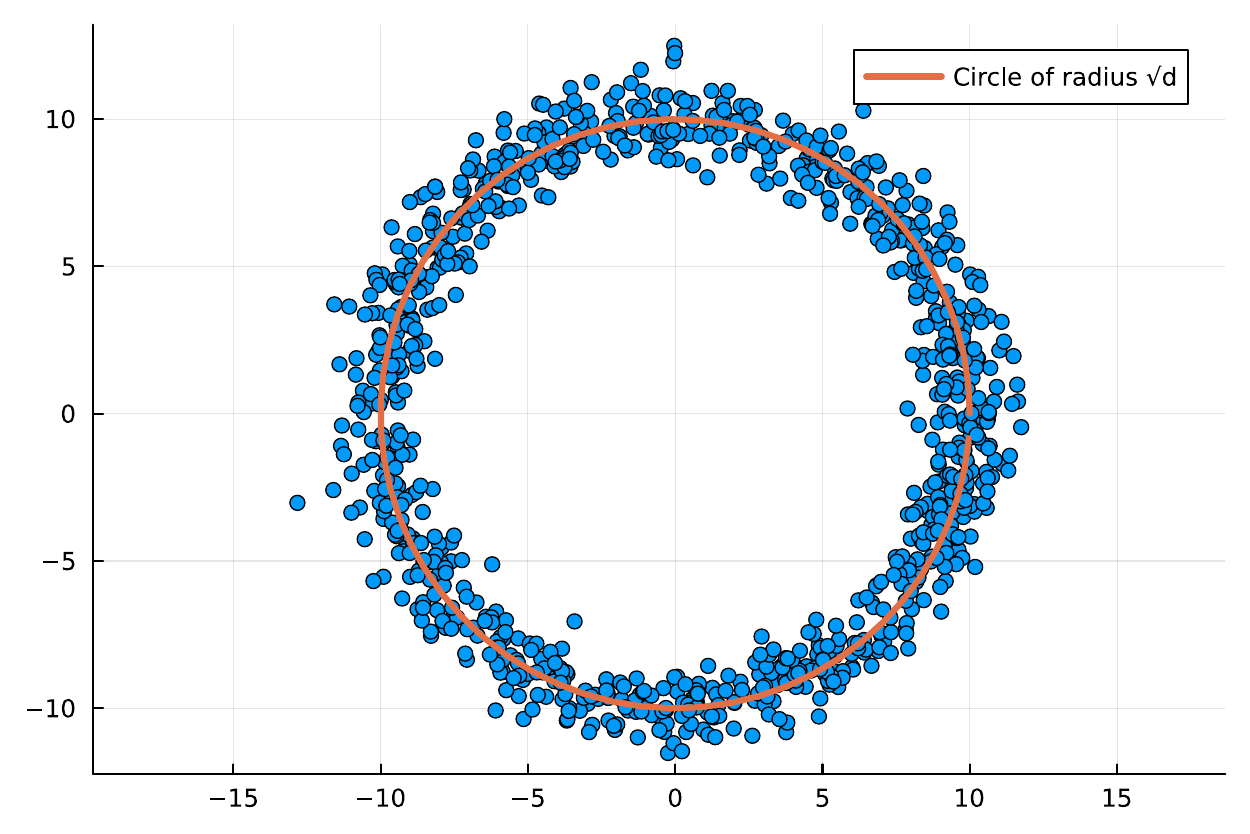}
    \caption{Visualisation of the norm of a $\mathcal{N}(0_d, I_d)$ distribution for $d = 100$. We project the points onto a 2-dimensional subspace to clearly show that points are likely to be at distance $\sqrt{d}$ from the origin. In high dimensions, the circle is in reality a hyperspherical shell of radius $\sqrt{d}$.}
    \label{fig-high-dim-equator}
\end{figure}

As a consequence, taking $\mu = 0_d$ and $\Sigma = d I_d$ gives
\begin{equation}
    Z_{d+1} = \frac{\frac{1}{d}\lVert X \rVert^2 - 1}{\frac{1}{d}\lVert X \rVert^2 + 1} = \mathcal{O}_\mathbb{P}(d^{-\frac{1}{2}}),
    \label{equ-latitude-big-o}
\end{equation}
so that our sample points under $\pi_\gamma$ will become concentrated around the equator. However, if $\gamma$ is chosen poorly, the density is likely to become concentrated around one of the poles, as shown in Figure \ref{fig-latitude-target}.

\begin{figure}[t!]
    \centering
    \begin{subfigure}{5in}
        \centering
        \includegraphics[width=2.35in]{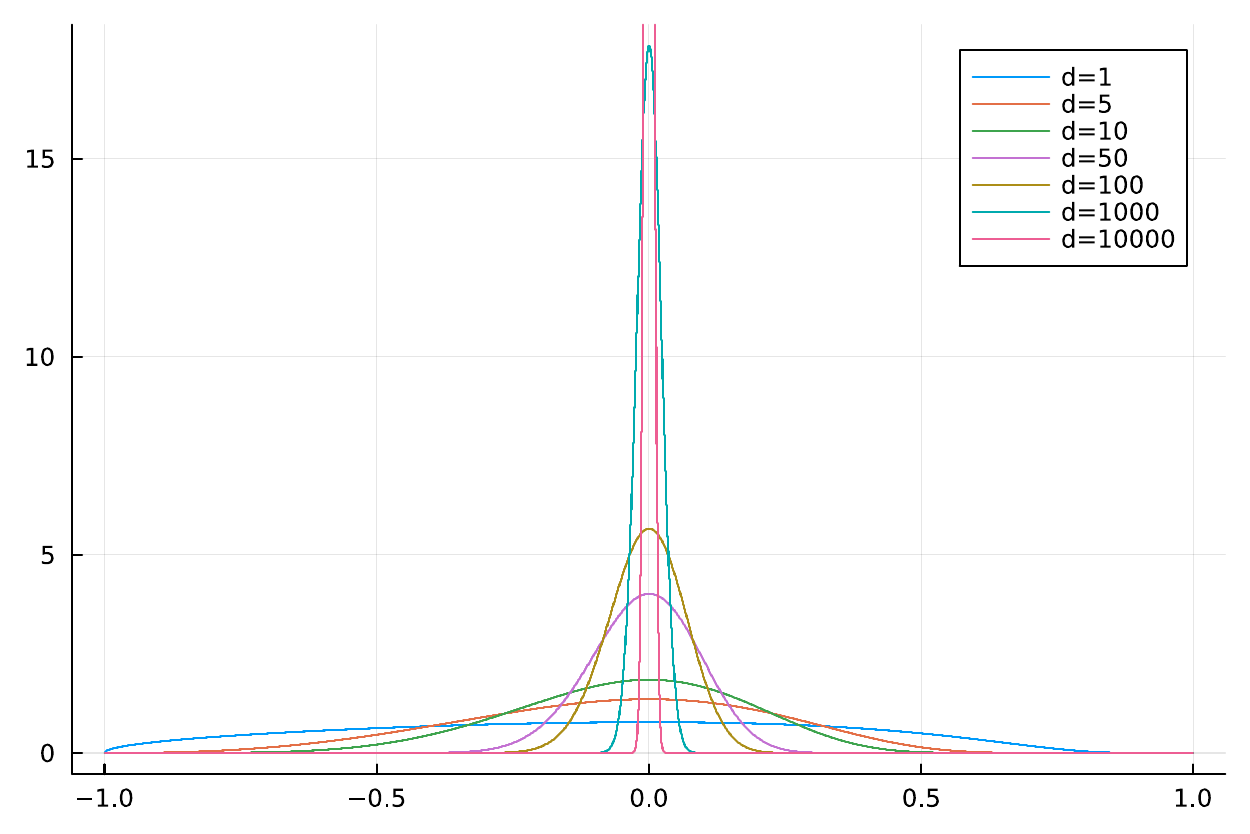}
        \caption{Taking $\Sigma = dI_d$, we see that $\pi_\gamma$ becomes concentrated around the equator $z_{d+1}=0$.}
    \end{subfigure}
    \\[10pt]
    \begin{subfigure}{2.36in}
        \centering
        \includegraphics[width=2.35in]{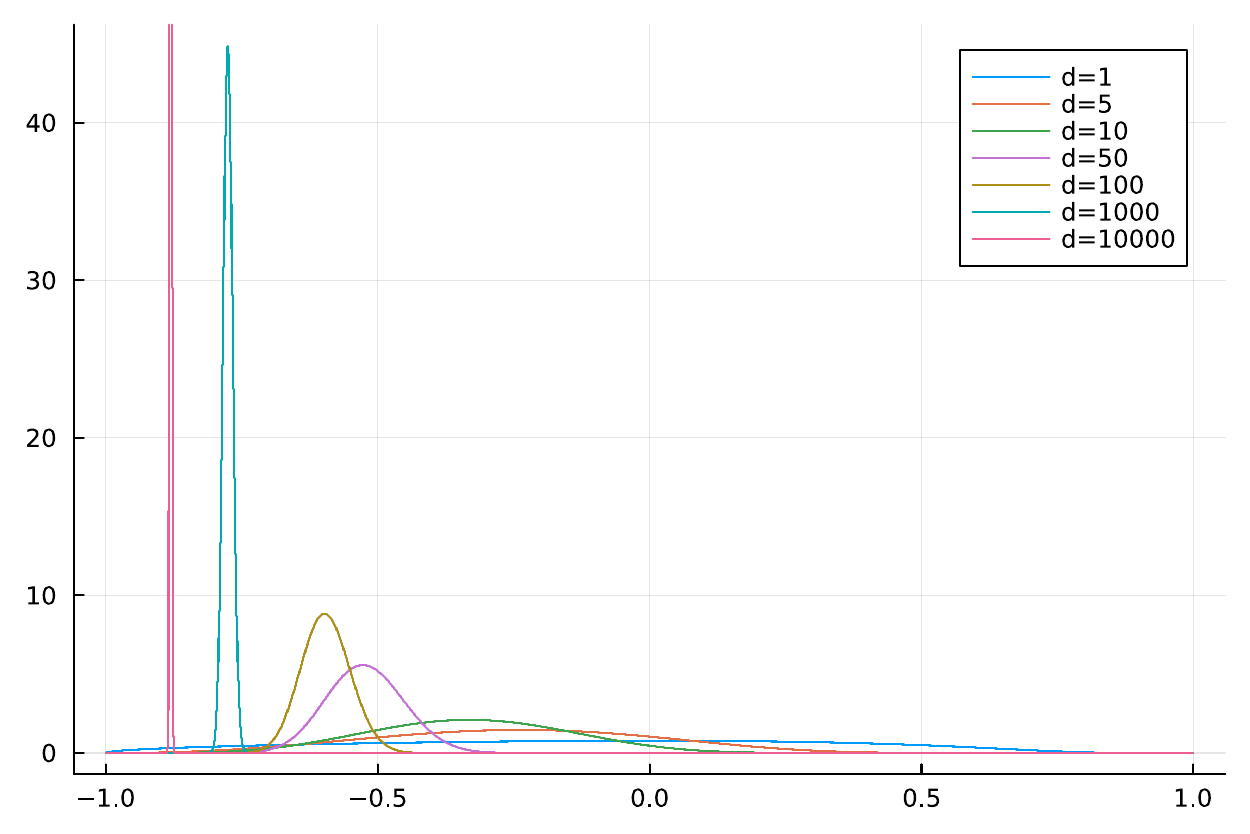}
        \caption{Taking $\Sigma = d^{1.3}I_d$, we see that the mass moves towards the South Pole. This is because the radius of the sphere we are using in the projection is too large.}
    \end{subfigure}
    \hfill
    \begin{subfigure}{2.36in}
        \centering
        \includegraphics[width=2.35in]{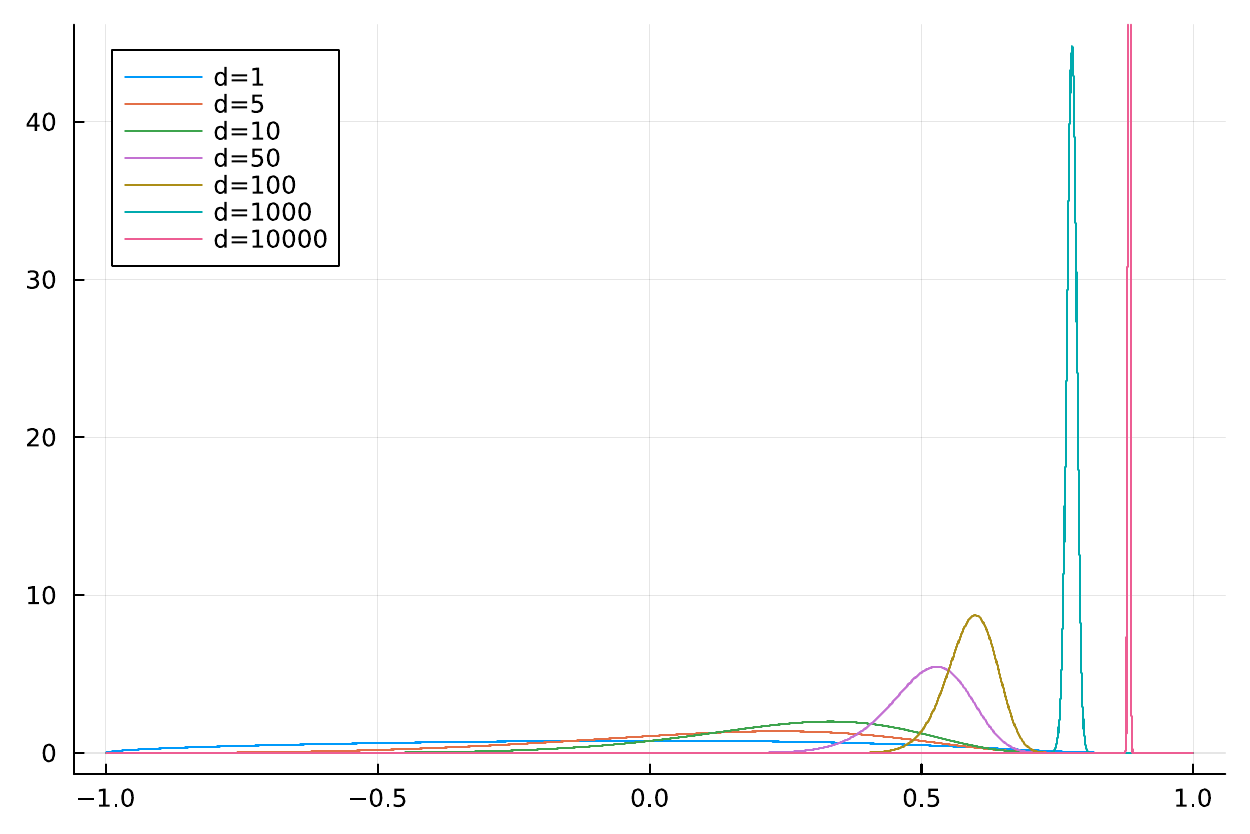}
        \caption{Taking $\Sigma = d^{0.7}I_d$, we see that the mass moves towards the North Pole. This is because the radius of the sphere we are using in the projection is too small.}
    \end{subfigure}
    \caption{Plots of the marginal density of $Z_{d+1}$ under $\pi_\gamma$, when $X \sim \mathcal{N}(0_d,I_d)$. We fix $\mu = 0$ and vary $\Sigma$.}
    \label{fig-latitude-target}
\end{figure}

We can therefore construct our algorithms to efficiently explore the equator, and tune our parameters to precondition the target by taking
\begin{equation}
    \mu = \mathbb{E}_\pi(X), \qquad \Sigma = d \times \mathbb{E}_\pi\left( (X-\mu)(X-\mu)^T \right).
    \label{equ-optimal-parameters}
\end{equation}
Geometrically, if we consider $\mu$ and $\Sigma$ the centre and shape of the sphere used in the stereographic projection, we are choosing our parameters so that the equator of the sphere intersects with $\mathbb{R}^d$ along the spherical shell of high probability depicted in Figure \ref{fig-high-dim-equator}.

These parameters are optimal as $d \rightarrow \infty$, as shown in Section 5.3 of \cite{yang2022stereographic}, but it may not be the case for finite $d$ that $\mathbb{E}_{\pi_\gamma}(Z_{d+1}) = 0$ when taking $\gamma$ as in Equation \eqref{equ-optimal-parameters}. For example, for a MtD with $d$ DoF, we want $\Sigma = d I_d = (d-2)\text{Var}_\pi(X)$ to get a uniform distribution on the sphere and a target which is spread evenly around the equator. We therefore take
\begin{equation}
    \Sigma = c \times \mathbb{E}_\pi\left( (X-\mu)(X-\mu)^T \right)
\end{equation}
for $c > 0$ such that $\mathbb{E}_{\pi_\gamma}(Z_{d+1}) \approx 0$.

Each of our algorithms can be seen to have better mixing properties when $\gamma$ is optimally chosen. For the SRW, the geometry of $\mathbb{S}^d$ causes the proposed moves to naturally stay on the equator if that is where the chain lies and to drift back towards the equator if the chain is currently near either pole. Section 5 of \cite{yang2022stereographic} presents several results showing that if $\pi$ is spherically symmetric, then the SRW can be superefficient when compared to the Euclidean RWM. It is shown that the acceptance probability goes to 1 as $d \rightarrow \infty$, even for a constant step size $h$. Since moves around the equator are then projected back onto $\mathbb{R}^d$ by a factor of roughly $\sqrt{d}$ via $\Sigma$, we obtain a ``blessing of dimensionality'' where the expected jump distance of the SRW in Euclidean space will be $\mathcal{O}(d^{1/2})$ (i.e.\ mixing improves as $d$ increases). By comparison, optimal Euclidean RWM steps will typically be $\mathcal{O}(d^{-1/2})$ as $d \rightarrow \infty$.

We give sketch arguments for the blessing of dimensionality in the case of the SSS or the SBPS. Though these are not rigorous statements, they give intuition as to how the algorithms work so well. Both work from proposing moves along geodesics, which are more and more likely to remain in close proximity to the equator as $d$ increases, and will always intersect the equator regardless of the current position. Thus, the algorithms will naturally stay near the equator if that is where they are, and will attempt to return to the equator in $\mathcal{O}(1)$ time even if the process is currently at one of the poles. If $\pi$ is spherically symmetric and $\gamma$ is appropriately chosen, we will therefore expect to observe the same blessing of dimensionality as we get for the SRW, where the expected distance travelled per step/unit time will also be $\mathcal{O}(d^{1/2})$.

If the parameters are poorly chosen and $\pi_\gamma$ is focused on a small subset of $\mathbb{S}^d$, we can scale the stepsize of the SRW accordingly, and the resulting algorithm will never perform worse than its Euclidean counterpart (see Corollary 5.1 of \cite{yang2022stereographic}). For the SSS and SBPS, however, one must additionally consider the consequences of poor parametrisation on the computational cost of simulating a step: the SBPS will require many bounce events to fight the drift back towards the equator, and the rejection sampling step in the SSS will reject many moves at every step before narrowing down the search interval. As a consequence, not only does each step provide slower convergence to stationarity, but they are also increasingly expensive as the parametrisation worsens. It is therefore all the more important for the parameter $\gamma$ to be chosen appropriately when using these algorithms.

Since we do not know the expectations in Equation \eqref{equ-optimal-parameters} in advance, it is of interest to create adaptive versions of any algorithm using the stereographic projection in order to automatically tune $\gamma$. It is worth mentioning that adaptations can also be performed for the other parameters, $h$ and $\lambda_\text{ref}$. In the case of the SRW, we have already discussed that if $\pi$ is spherically symmetric and $\gamma$ is chosen correctly, then for any step size $h$ the acceptance probability goes to 1. We therefore want to take $h$ as large as possible. \cite{yang2022stereographic} also prove that if $\pi$ is a non-Gaussian iid distribution, and $\gamma$ is chosen according to Equation \eqref{equ-optimal-parameters}, then it is optimal to tune $h$ to be $\mathcal{O}(d^{-1})$ such that we have an average acceptance rate of $0.234$, as is the case for the Euclidean RWM algorithm \cite{gelman1997weak}. It is therefore always appropriate to tune $h$ to achieve an average acceptance rate of $0.234$. Even if $\gamma$ is poorly chosen, this will lead to a target distribution focused on a small corner of $\mathbb{S}^d$, which locally behaves like $\mathbb{R}^d$, so we still want to aim for the same acceptance rate.

The optimal refreshment rate for the SBPS is an open problem, but it is noteworthy that \cite{bierkens2022high} shows that, for the Euclidean BPS, it is optimal in certain settings to have $78.1\%$ of events be refreshment events. However, in a spherically symmetrical setting where we have very few bounces, this may lead to very small refreshment rates, which could hurt the irreducibility of the process. We therefore recommend never taking a refreshment rate lower than $\frac{1}{\pi}$, mimicking the No-U-Turn sampler's intuition \cite{MR3214779} of following the dynamics until we ``turn around'' at the other end of the geodesic.

\subsection{The Algorithms}

We now construct adaptive versions of the stereographic algorithms which automatically tune the parameters as they run, in order to improve performance. We established in Equation \eqref{equ-optimal-parameters} what values we will be targeting with our estimators. However, as is the case for all adaptive MCMC algorithms, changing the transition scheme based on the full history of the sample path causes the Markov property to fail.

There is a vast literature for the construction and analysis of discrete-time adaptive MCMC algorithms with desirable asymptotic properties \cite{MR2260070, roberts2007coupling, MR2461882, roberts2009examples, MR2759732, MR3012408, chimisov2018adapting}. However, the literature on designing and studying continuous-time adaptive MCMC algorithms is lacking and the only such approach, \cite{MR4474548}, discretises the sample path to establish ergodicity.

In this paper, we use the Adapting Increasingly Rarely (AIR) MCMC framework originally proposed in \cite{chimisov2018air}, and show how it can be applied to continuous-time adaptive processes. Crucially, this makes the setup identical for each of our algorithms.

We assume the lags $t_k$ between adaptations are polynomially increasing, i.e.\ $\exists \beta >0$ and $c \geq 1$ such that
\begin{equation}
    \frac{1}{c} k^\beta \leq t_k \leq c k^\beta,
    \label{equ-poly-lags}
\end{equation}
and define the adaptation times $T_k = \sum_{i=1}^k t_i$ with $T_0 = t_0 = 0$.

We proceed by running the process with the parameter $\gamma_k$ fixed for $t \in [T_k, T_{k+1})$. At each time $T_{k+1}$, we update the parameter to $\gamma_{k+1}$ based on the sample path so far and the previous parameter values. Algorithm \ref{alg-air-srw} outlines the framework in the case of the SRW, with the corresponding algorithms for the SSS and SBPS being essentially identical, but replacing the SRW with their respective algorithms.

\begin{algorithm}[b!]
    \begin{itemize}
        \item[\textbf{Input:}] Target density $\pi$ on $\mathbb{R}^d$, $X^{(0)}_0 \in \mathbb{R}^d$, initial parameters $\gamma_0 \in \Gamma$, $h_0>0$, sequence of adaptation lags $\{ t_k \}_{k \in \mathbb{N}}$.
        \item[\textbf{Output:}] $\left\{\left( Z^{(k)}_n, X^{(k)}_n \right)\right\}_{0 \leq n \leq t_k}$, $\gamma_k$ and $h_k$, for $k \in \mathbb{N}$.
        \item[\textbf{For}] $k = 0,1,\dots$\textbf{:}
        \item Run the SRW (Algorithm \ref{alg-srw}) for $t_{k+1}$ time units to get
        \[\left\{\left( Z^{(k+1)}_n, X^{(k+1)}_n \right)\right\}_{0 \leq n \leq t_{k+1}} \sim \text{SRW}\left(\pi,X^{(k)}_{t_k},\gamma_k,h_k \right)\]
        \item Update parameters to $\gamma_{k+1}, h_{k+1}$ using $\left\{\left( Z^{(k+1)}_n, X^{(k+1)}_n \right)\right\}_{0 \leq n \leq t_{k+1}}$
    \end{itemize}
    \caption{AIR SRW}\label{alg-air-srw}
\end{algorithm}

These AIR schemes have multiple benefits over the more commonly used adaptation schemes of adapting every step, or every $k$ steps:
\begin{itemize}
    \item Practically speaking, calculating the new adaptive parameters can be expensive, and reducing the frequency of the updates can have little impact on the mixing of the chain. For the stereographic MCMC algorithms, we will be trying to obtain an estimator for the square root of the covariance matrix to feed into our algorithms. Even using a rank-one update for the Cholesky decomposition (as discussed in \cite{igel2006computational} or the ``cholupdate'' function in Matlab), this operation adds $\mathcal{O}(d^2)$ in computational cost every time we adapt. In high dimensions, this can become very costly, so it is sensible to adapt more rarely as the process goes on, and we expect the estimators to converge.
    \item Theoretically speaking, adaptive MCMC algorithms can be tricky to analyse because using the history of the chain to inform future transitions causes the Markov property to break down. By keeping the parameters fixed for increasing lengths of time, we obtain a sequence of epochs that conditionally behave like standard Markov chains, and these sample paths become easier to control the longer they run.
\end{itemize}

\subsection{Asymptotic Results}
\label{section-results}

We now present our main results for the convergence of estimators using the adaptive stereographic algorithms. Given any bounded function $f:\mathbb{R}^d \rightarrow \mathbb{R}$, we want to estimate $\pi (f) = \mathbb{E}_\pi(f(X))$. We are interested in the behaviour of the estimators
\begin{equation}
    \widehat{f}^\text{disc}_t = \frac{1}{t} \sum_{s=0}^{t-1} f(X_s), \qquad
    \widehat{f}^\text{cont}_t = \frac{1}{t} \int_{s=0}^t f(X_s) \,\dif s,
    \label{equ-estimator}
\end{equation}
depending on whether we are considering a discrete or continuous-time algorithm. Note also that for the sake of computation, we will generally evaluate estimators from SBPS sample paths by taking a discrete skeleton with a small mesh size. 

With this definition, it is clear why we need to move the path off of $\mathbb{S}^d$: since $f$ is fixed on $\mathbb{R}^d$, the target function on $\mathbb{S}^d$ would be
\begin{equation}
    f_\gamma(z) = f(\text{SP}^{-1}_\gamma(x)),
    \label{equ-f-gamma}
\end{equation}
which changes as we update $\gamma$.

To obtain convergence results for the AIR stereographic algorithms, we make several assumptions.
\begin{assumption}
    The adaptation scheme for the parameter updates keeps $\gamma_i = (\mu_i, \Sigma_i)$ in the compact set $\Gamma_{r,R}$ for all $i$, where
    \begin{equation*}
        \Gamma_{r,R} = \{ \gamma \in \Gamma : \lVert \mu \rVert \leq R, r^2 \leq \rho_i \leq R^2, \forall i \},
    \end{equation*}
    for $0 < r < R$, and $\rho_i$ are the eigenvalues of $\Sigma$.

    For the AIR SRW, also assume $r \leq h_i \leq R$ for each $h_i$. For the AIR SBPS, also assume $r \leq \lambda_i \leq R$ for each $\lambda_i$.
    \label{assump-parameter-space-compact}
\end{assumption}
This assumption restricts the parameters to a compact set, so that even if the adaptation scheme ``goes wrong'', the parameters cannot cause arbitrarily poor mixing. A condition along these lines is used in almost every adaptive MCMC theoretical result (see e.g.\ for notable examples \cite{roberts2007coupling,roberts2009examples,haario2001adaptive}).

As a benefit of the AIR framework, we can use any estimator for our parameters that respects Assumption \ref{assump-parameter-space-compact}, and will not require trickier conditions such as diminishing adaptations for our $\mathcal{L}^2$ or almost sure convergence.

\begin{assumption}
    The adaptation lags $t_k = \Theta(k^\beta)$ are chosen under mild conditions.
    \label{assump-adaptation-times}
\end{assumption}
This assumption allows the AIR setup to be used. The mild additional conditions arise through our method of proving the results and are simply conditions on the expression for the times $t_k$. These are of very little practical relevance, and we conjecture that simply assuming $t_k = \Theta(k^\beta)$ is sufficient. See Appendix \ref{section-times-assumptions} for more details.

For our assumptions on the target distribution, we will need a slightly different assumption for the SRW and SSS than the assumption for the SBPS. For the SRW and SSS, we need the following:
\begin{assumption}
    The target density $\pi$ is positive, continuous, and satisfies
    \begin{equation*}
        \limsup_{\lVert x\rVert \rightarrow \infty}  \left( \pi(x) (\lVert x \rVert^2 + 1)^d \right) < \infty.
    \end{equation*}
    
    \label{assump-boundedness-condition}
\end{assumption}
This assumption ensures that $\pi_\gamma$ is bounded over $\mathbb{S}^d$. Our Markov chains are then targeting a distribution with a bounded density and a compact support, leading to uniform ergodicity (see Lemmas \ref{lemma-srw-uniform-ergodicity} and \ref{lemma-sss-uniform-ergodicity}). If this condition fails, the chains could get stuck in the vicinity of the North Pole for arbitrarily long times.

Note that this condition is satisfied for distributions with relatively heavy tails, such as MtD with at least $d$ DoF.

Since the SBPS works off of $\nabla_z \log\pi_\gamma$, we need a different condition:
\begin{assumption}
    The target density $\pi$ is positive, continuously differentiable and satisfies
    \begin{equation*}
        \limsup_{\lVert x\rVert \rightarrow \infty}  \left( x \cdot \nabla_x \log\pi(x) + R \lVert \nabla_x \log\pi(x) \rVert \right) + 2d < \frac{1}{2},
    \end{equation*}
    for some $R>0$.
    \label{assump-derivatives-condition}
\end{assumption}
This assumption also ensures that the North Pole does not become an inescapable singularity, and is sufficient for uniform ergodicity of the SBPS (see Lemma \ref{lemma-sbps-uniform-ergodicity}). It is also satisfied for distributions with relatively heavy tails, such as MtD with more than $d-\frac{1}{2}$ DoF. The fact that the SBPS can handle an ``extra $\frac{1}{2}$ DoF'' compared to the discrete algorithms covered by Assumption \ref{assump-boundedness-condition} is quite surprising. One can interpret this discrepancy as a sign that the deterministic drift of the SBPS towards the equator can allow it to escape a singularity that the SRW or SSS do not have enough drift to quickly escape.

Although these are sufficient conditions to ensure good theoretical convergence properties, we do not require them for the algorithms to still perform well, and significantly outperform non-stereographic counterparts. Indeed, neither Assumption \ref{assump-boundedness-condition} nor Assumption \ref{assump-derivatives-condition} holds in the example discussed in Section \ref{section-robustness}, in which we see that the stereographic methods significantly outperform HMC.

With these assumptions, we can obtain the following result on the behaviour of $\widehat{f}_t$.
\begin{theorem}[Asymptotics of $\widehat{f}_t$]
    Consider either of the estimators $\widehat{f}_t$ as given in Equation \eqref{equ-estimator} for a bounded function $f:\mathbb{R}^d \rightarrow \mathbb{R}$, applied to the AIR SRW, SSS, SBPS, or any discrete skeleton of the SBPS, targeting the distribution $\pi$ on $\mathbb{R}^d$.
    
    Under Assumption \ref{assump-parameter-space-compact} for any $r<R \in (0,\infty)$, Assumption \ref{assump-adaptation-times} for $\beta >0$, and either Assumption \ref{assump-boundedness-condition} if we are working with the SRW or SSS, or Assumption \ref{assump-derivatives-condition} for the SBPS, then for any initial position or parameters:
    \begin{itemize}
        \item For any $\beta >0$, and $0 \leq \epsilon < \min\left(\frac{1}{2},\frac{\beta}{1+\beta}\right)$, then as $t \rightarrow \infty$,
        \begin{equation*}
            t^\epsilon\left( \widehat{f}_t - \pi(f) \right) \rightarrow 0, \qquad \text{a.s.\ and in } \mathcal{L}^2.
        \end{equation*}
        In particular, a SLLN holds (by taking $\epsilon = 0$).
        \item For $\beta>1$, if $\gamma_i \xrightarrow{P} \gamma_\infty$, for some constant $\gamma_\infty$, such that the asymptotic variance $\sigma^2(\gamma_\infty)>0$, and that $\sigma^2(\gamma)$ is a continuous function of $\gamma$, a CLT holds, i.e.\ as $t \rightarrow \infty$,
        \begin{equation*}
            \sqrt{t} \left(\widehat{f}_t - \pi(f) \right) \rightarrow \mathcal{N}\left(0, \sigma^2(\gamma_\infty) \right), \qquad \text{in distribution}.
        \end{equation*}
    \end{itemize}
    
    \label{thm-fhat-asymptotics}
\end{theorem}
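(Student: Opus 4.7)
The plan is to reduce the theorem to verifying the hypotheses of an abstract convergence result for AIR-type algorithms applied to families of simultaneously uniformly ergodic Markov kernels (the framework announced in Section \ref{section-segment-chain}). All three algorithms (SRW, SSS, SBPS, and the discrete skeleton of SBPS) differ only in the underlying kernel family $\{P_\gamma\}_{\gamma\in\Gamma_{r,R}}$, so once an abstract ``AIR over a uniformly ergodic family'' theorem is in hand, the three cases are unified by checking a single property on each kernel family. The anticipated core of the proof is therefore two-part: (i) verify simultaneous uniform ergodicity of $\{P_\gamma\}$ under each of Assumptions \ref{assump-boundedness-condition}, \ref{assump-derivatives-condition}; (ii) run the abstract AIR machinery on the segment chain.

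First I would prove simultaneous uniform ergodicity uniformly over $\gamma\in\Gamma_{r,R}$. Compactness of $\Gamma_{r,R}$ combined with continuity of the stereographic map and Assumption \ref{assump-boundedness-condition} yields that $\pi_\gamma$ is bounded and bounded away from zero on $\mathbb{S}^d$ with constants independent of $\gamma$. For the SRW, a one-step Doeblin minorisation follows because the Gaussian-in-tangent proposal has density uniformly bounded below, and the Metropolis ratio is uniformly bounded below by $\inf_\gamma\inf\pi_\gamma/\sup_\gamma\sup\pi_\gamma$. For the SSS, the shrinkage kernel on a bounded density over a compact manifold contracts geometrically at a rate depending only on $\sup\pi_\gamma/\inf\pi_\gamma$, giving a uniform minorisation. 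For the SBPS, Assumption \ref{assump-derivatives-condition} is used to bound the bounce rate $\lambda(z,v)$ uniformly near the North pole, and a refreshment event of probability bounded below over any fixed window $[0,T]$ combined with the uniform lower bound on $\pi_\gamma$ produces a uniform small-set condition for the PDMP semigroup.

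Next I would realise the adaptive algorithm as a segment chain in the sense of Section \ref{section-segment-chain}: the $k$th segment consists of the whole sample path during the epoch $[T_k,T_{k+1})$ together with $\gamma_k$, and the parameter update defines a Markov transition on segments. Simultaneous uniform ergodicity transfers to the segment chain as a per-epoch total-variation bound of the form
\begin{equation*}
\sup_{x,\gamma\in\Gamma_{r,R}}\bigl\lVert \mathcal{L}(X^{(k+1)}_{t_{k+1}}\mid x,\gamma)-\pi_\gamma\bigr\rVert_{\mathrm{TV}}\le C\rho^{t_{k+1}},
\end{equation*}
which together with boundedness of $f$ gives a per-epoch bias bound $O(\rho^{t_k})$ and a per-epoch variance bound $O(t_k)$. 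Summing over the epochs split at $T_k=\sum_{i\le k}t_i$ and using $t_k=\Theta(k^\beta)$, Chebyshev's inequality yields the $\mathcal{L}^2$ WLLN for every $\beta>0$, and a Borel--Cantelli argument along the doubling subsequence $T_{2^j}$ gives the SLLN for $\beta>1/2$.

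For the CLT under $\beta>1$, the idea is a triangular-array martingale decomposition: write $\sqrt{t}(\widehat f_t-\pi(f))$ as a sum over epochs of centred contributions $t_k^{-1/2}$ times an epoch-empirical error, use uniform ergodicity to couple each epoch with a stationary chain at $\pi_{\gamma_k}$ with an error that is $O(\rho^{t_k})$ and hence summable, and apply a martingale CLT with Lindeberg condition. The assumed in-probability convergence $\gamma_k\to\gamma_\infty$ and continuity of $\sigma^2(\cdot)$ let us replace $\sigma^2(\gamma_k)$ by $\sigma^2(\gamma_\infty)$ in the limit. The main obstacle I expect is the continuous-time SBPS case: obtaining simultaneous uniform ergodicity with rates independent of $\gamma$ requires a careful uniform control of the bounce rate near $N$ (exactly where Assumption \ref{assump-derivatives-condition} bites), and extending the AIR framework of \textcite{chimisov2018air} from discrete time to a continuous-time PDMP while keeping the segment chain Markov and its tail bounds usable is the real technical novelty; once that is in place, the three algorithms truly are handled by a single theorem.
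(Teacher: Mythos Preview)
Your high-level strategy---verify simultaneous uniform ergodicity for each kernel family, then invoke an abstract AIR-type theorem---matches the paper exactly. However, your realisation of the ``segment chain'' is not the paper's, and this is where the two routes diverge substantially.

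You take a segment to be an entire adaptive epoch $[T_k,T_{k+1})$ and work directly with per-epoch total-variation bounds and coupling to stationarity. The paper instead chops the process into pieces of \emph{fixed} length $T$, where $T$ is the time appearing in the simultaneous minorisation $P_\gamma^T(x,\cdot)\ge\epsilon\nu(\cdot)$; the adaptation times are then arranged to be multiples of $T$ (this is precisely the ``mild condition'' in Assumption~\ref{assump-adaptation-times}). On this fixed-$T$ segment chain the paper performs Nummelin splitting, producing an explicit atom and a sequence of identically distributed, $1$-dependent excursions of $\mathrm{Geom}(\epsilon)$ length. The empirical sum is then decomposed as $\Xi^{(1)}+\Xi^{(2)}+\Xi^{(3)}+\Xi^{(4)}$, where $\Xi^{(2)}$ collects the full-excursion contributions within each epoch; Wald-type identities give exact second moments for $\Xi^{(2)}$, and the CLT for $\Xi^{(2)}$ follows from Dvoretzky's martingale CLT applied to the per-epoch excursion sums $\xi_i$ with conditional variances $(n_{i+1}-1)\sigma^2(\gamma_i)+O(1)$. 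The remaining terms $\Xi^{(1)},\Xi^{(3)},\Xi^{(4)}$ are controlled crudely using only $|f|\le M$ and geometric excursion lengths, yielding the $\beta>1/2$ and $\beta>1$ thresholds.

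Your direct TV/coupling approach is not wrong in principle, but two of your estimates are off. First, the coupling error between an epoch started at an arbitrary point and one started in stationarity is not $O(\rho^{t_k})$: the bias of the epoch \emph{sum} is $O(1)$ (the geometric series $\sum_s\rho^s$), so the contribution to $\sqrt{t}(\widehat f_t-\pi(f))$ from coupling errors is $O(k/\sqrt{N_k})$, which does vanish for $\beta>1$ but not for the reason you state. Second, the SRW minorisation in the paper is three-step, not one-step, and the SBPS minorisation is obtained via a Lyapunov drift argument rather than a direct refreshment-plus-density bound. What the paper's splitting construction buys over your approach is that the regenerative structure makes the variance computation for the CLT essentially algebraic (Lemma~\ref{lemma-xi-2-moments}) and handles the continuous-time case and the discrete-time case with a single argument, whereas your coupling route would need a separate and more delicate treatment of cross-epoch correlations and of the Lindeberg condition.
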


This theorem combines elements from \cite{chimisov2018air} and \cite{hofstadler2024almost} to obtain the best of both results.

The asymptotic variance $\sigma^2(\gamma)$ is exactly the same as the asymptotic variance of a non-adaptive version of the algorithm. Thus, if the parameters do indeed converge to optimal values, we retrieve the optimal asymptotic variance for our estimators.

Note also that although we express this in a unified way, the asymptotic variance $\sigma^2(\gamma)$ will depend on the choice of algorithm. Additionally, convergence of $\gamma_i$ to some constant $\gamma_\infty$ is given by part (a) of the theorem if we are using bounded empirical estimates for the optimal values of $\mu$ and $\Sigma$, as guided by Equation \eqref{equ-optimal-parameters}. If one uses standard empirical estimators for $(\mu^*, \Sigma^*)$, the a.s.\ and $\mathcal{L}^2$ convergence of those parameter estimators is also given by the theorem. Thus, the adaptive stereographic algorithms can find the optimal parameters, then obtain the same mixing rates as the optimised non-adaptive versions. 

However, it is much harder to guarantee convergence of estimators adapting $h$ or $\lambda_\text{ref}$, as these do not typically take the form of empirical averages.

Comparing this result to other similar results, such as those discussed in \cite{roberts2007coupling}, we see that despite significantly weaker assumptions on the adaptation scheme, we obtain almost sure and $\mathcal{L}^2$ convergence. The additional requirement on the convergence of the parameter and continuity of the asymptotic variance in the CLT result is comparable to those found in any similar result. We do not present a result on the ergodicity of the algorithm, i.e.\ whether $X_t \xrightarrow{D} \pi$ as $t \rightarrow \infty$, as these would bring identical conditions to more general results already found in the literature \cite{roberts2007coupling}.

Although the sampling mechanisms for our three algorithms are very different, we obtain essentially identical convergence results. This is an artefact of our proof, in which we use the simultaneous uniform ergodicity of the processes to create a unifying auxiliary Markov chain which is much easier to work with than any of our original processes.

\section{The Segment Chain for Uniformly Ergodic Markov Processes}
\label{section-segment-chain}

We have already discussed the AIR framework and its intuitive appeal. However, the theory presented in \cite{chimisov2018air} assumes a discrete-time chain, with a 1-step small set condition and simultaneous geometric ergodicity. Instead, the stereographic algorithms are all simultaneously uniformly ergodic, the SBPS lives in continuous time, and we do not have a drift towards a 1-step small set for the SRW. \cite{hofstadler2024almost} does give a.s.\ convergence results for the simultaneously uniformly ergodic case for discrete-time algorithms, but also does not discuss either $\mathcal{L}^2$ convergence or a CLT.

In this section, we present a novel auxiliary process, the segment chain, which gives a unified framework for analysing any uniformly ergodic Markov process. This framework is particularly useful for the study of continuous-time chains, where notions of splitting, excursions, and regenerations are not as well studied. We use it to prove the asymptotic results in Theorem \ref{thm-fhat-asymptotics}. To ensure that the assumption of uniform ergodicity is satisfied, we give explicit uniform ergodicity results for our stereographic algorithms in Section \ref{section-uniform-ergodicity}.

\subsection{A Markov chain in the space of paths}

Consider a continuous-time Markov process $\{ X_t \}_{t \geq 0}$ on a sample space $\chi$ and its parametrised associated transition semigroup $\{ P^t_\gamma \}_{t \geq 0}$, where $\gamma \in \Gamma$ is a parameter. Let $ \pi_\gamma$ be the stationary measure of $P_\gamma$ (these need not be equal). Unfortunately, this notation clashes with the $\pi_\gamma$ from the stereographic projection, and these two do not necessarily equate in our setting. For example, for the SBPS, this stationary distribution would be the joint distribution of $X$ in $\mathbb{R}^d$ and some latent velocity component whose distribution depends on $\gamma$. More on this in Section \ref{section-uniform-ergodicity}.

Suppose the process satisfies a simultaneous minorisation condition of the form
\begin{equation}
    P^T_\gamma(x, \cdot) \geq \epsilon \nu(\cdot), \qquad \forall x \in \chi, \gamma \in \Gamma,
    \label{equ-minorisation-general}
\end{equation}
with $T>0$, $\epsilon > 0$, and $\nu$ a probability measure on $\chi$, all independent of $\gamma$. This implies that regardless of the values of $X_t$ or $\gamma$, we have probability $\epsilon$ to get $X_{t+T} \sim \nu$ independently of $\{X_s\}_{0\leq s \leq t}$. Such a condition is equivalent to the following, more traditional expression for uniform ergodicity of the process
\begin{equation}
    \lVert P^t_\gamma(x, \cdot) - \pi \rVert_\text{TV} \leq C \rho^t, \qquad \forall x \in \chi, \gamma \in \Gamma,
\end{equation}
where $t >0$, and $C>0$, $\rho \in (0,1)$ are constants independent of $\gamma$.

We then define the segment chain $\{ \Phi_n \}_{n \in \mathbb{N}}$ to be a discrete-time Markov chain with $\Phi_n~:~[0,T] \rightarrow \chi$ such that
\begin{equation}
    \Phi_n(t) = X_{nT + t}.
    \label{equ-segment-chain}
\end{equation}

The segment chain $\{\Phi_n \}_{n \in \mathbb{N}}$ is then a Markov chain in the space of functions from $[0,T]$ to $\chi$, which we shall simply call $\Omega$.

If we have a discrete-time Markov chain satisfying the minorisation condition \eqref{equ-minorisation-general}, we can instead take
\begin{equation}
    \Phi_n(t) = X_{nT + \lfloor t \rfloor},
    \label{equ-segment-chain-discrete}
\end{equation}
to obtain the segment chain.

The crucial observation is that, using the minorisation condition \eqref{equ-minorisation-general}, we can ``split'' the chain $\{ \Phi_n \}_{n \in \mathbb{N}}$ such that, in a way we shall make rigorous later, for each $n$, with probability $\epsilon$ we have $\Phi_{n+1}(0) \sim \nu$ independently of $\Phi_0, \dots, \Phi_{n-1}$. This will allow us to divide sample paths into weakly dependent, identically distributed blocks, to which we can then apply standard techniques to get our LLNs and CLT results.

\subsection{Simultaneous Uniform Ergodicity of Stereographic Algorithms}
\label{section-uniform-ergodicity}

Before going further, we must ensure that the stereographic MCMC algorithms each satisfy a minorisation condition as described in Equation \eqref{equ-minorisation-general}.

Since these Markov processes live in $\mathbb{S}^d$, which is compact, it is natural for them to exhibit uniform ergodicity properties similar to those of other algorithms targeting bounded densities with compact supports. As discussed alongside assumptions \ref{assump-boundedness-condition} and \ref{assump-derivatives-condition}, the situation becomes slightly more complicated at the North Pole $N$, since $\pi_\gamma(z)$ or $v \cdot \nabla_z \log\pi_\gamma(z)$ may not remain bounded as $z_{d+1} \rightarrow 1$.

Indeed, we have the following results for the SRW and SSS:
\begin{lemma}[SRW Minorisation Condition]
    Suppose that $\gamma \in \Gamma_{r,R}$ for $0<r<R<+\infty$. For $P$ the Markov transition kernel for the SRW targeting $\pi$, and assuming $\pi$ satisfies assumption \ref{assump-boundedness-condition}, then $\exists \epsilon >0$ and a probability measure $\nu$ on $\mathbb{R}^d$ such that $\forall x \in \mathbb{R}^d$,
    \begin{equation*}
        P^3(x, \cdot) \geq \epsilon \nu(\cdot).
    \end{equation*}
    Furthermore, $\epsilon$ and $\nu$ can be chosen to be independent of $\gamma$.
    \label{lemma-srw-uniform-ergodicity}
\end{lemma}

\begin{lemma}[SSS Minorisation Condition]
    Suppose that $\gamma \in \Gamma_{r,R}$ for $0<r<R<+\infty$. For $P$ the Markov transition kernel for the SSS targeting $\pi$, and assuming $\pi$ satisfies assumption \ref{assump-boundedness-condition}, then $\exists \epsilon >0$ and a probability measure $\nu$ on $\mathbb{R}^d$ such that $\forall x \in \mathbb{R}^d$,
    \begin{equation*}
        P(x, \cdot) \geq \epsilon \nu(\cdot).
    \end{equation*}
    Furthermore, $\epsilon$ and $\nu$ can be chosen to be independent of $\gamma$.
    \label{lemma-sss-uniform-ergodicity}
\end{lemma}

Note that $\epsilon$ and $\nu$ are implicitly different for the SRW and SSS. The proofs of these results involve constructing suitable sequences of events which allow the process to hit any given open ball of arbitrarily small radius, then lower-bounding the probability of this sequence of events. We can then extend this lower bound to a measure on $(\mathbb{R}^d, \mathcal{B}(\mathbb{R}^d))$ by writing any set $A \in \mathcal{B}(\mathbb{R}^d)$ as a union of open balls. Most of the details can be found in \cite{yang2022stereographic} for the SRW or \cite{habeck2023geodesic} for the SSS.

One should note that the minorisation condition we have for the SRW is a 3-step minorisation, not a 1-step minorisation. To apply many of the existing theoretical results on adaptive algorithms, we would need to first construct a 1-step small set and a simultaneous drift condition towards it, which are both highly non-trivial tasks. This further motivates explicitly using the uniform ergodicity properties of the process to unify the theoretical results.

For the equivalent result in the case of the SBPS, we project the sample path $\{ Z_t, V_t \}_{t \geq 0}$ onto Euclidean space to obtain a Markov process $\{X_t,W_t\}_{t \geq 0}$ with invariant distribution $\pi(x) \times p_\gamma(w \mid x)$ over $\mathbb{R}^d \times \mathbb{S}^{d-1}$. $W_t$ is a unit vector related to the direction of the particle in $\mathbb{R}^d$, and is only necessary because $\{X_t\}_{t \geq 0}$ alone is not a Markov process.

It is noteworthy that here we are indeed considering a stationary distribution which changes with the parameter.

\begin{lemma}[SBPS Minorisation Condition]
    Suppose that $\gamma \in \Gamma_{r,R}$ for $0<r<R<+\infty$. For $P$ the Markov transition kernel for the projected SBPS process $\{ (X_t,W_t) \}_{t \geq 0}$ targeting $\pi$, and assuming $\pi$ satisfies assumption \ref{assump-derivatives-condition}, then $\exists T^*>0$, $\epsilon >0$ and a probability measure $\nu$ on $\mathbb{R}^d \times \mathbb{S}^{d-1}$ such that, $\forall T \geq T^*$, $\forall (x,w) \in \mathbb{R}^d \times \mathbb{S}^{d-1}$,
    \begin{equation*}
        P^T((x,w),\cdot) \geq \epsilon \nu(\cdot).
    \end{equation*}
    Furthermore, $T^*$, $\epsilon$ and $\nu$ can be chosen to be independent of $\gamma$.
    \label{lemma-sbps-uniform-ergodicity}
\end{lemma}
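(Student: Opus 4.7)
The plan is to decompose the minorisation into an ``escape'' phase followed by a ``mixing'' phase. For constants $\eta, T_1, T_2 > 0$ and a compact ``core'' set $K_\eta = \{(z,v) \in \Sperp : z_{d+1} \leq 1-\eta\}$ independent of $\gamma$, I aim to show two uniform bounds: first, that from any starting point and any $\gamma \in \Gamma_{r,R}$, the projected process enters $\mathrm{SP}_\gamma(K_\eta)$ (viewed inside $\mathbbm{R}^d \times \mathbb{S}^{d-1}$) by time $T_1$ with probability at least some $\delta_1 > 0$; second, that $P^{T_2}_\gamma((x,w), \cdot) \geq \delta_2\, \nu(\cdot)$ for all $(x,w)$ in that image, uniformly in $\gamma$, with $\nu$ a fixed probability measure on $\mathbbm{R}^d \times \mathbb{S}^{d-1}$. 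Combining these via the Chapman--Kolmogorov identity yields the lemma with $T^* = T_1 + T_2$ and $\epsilon = \delta_1 \delta_2$.

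For the escape phase, I would construct a Lyapunov function on $\Sperp$ that diverges at the North pole, for instance $V(z,v) = -\log(1-z_{d+1})$, and show that the extended generator of the SBPS satisfies
\begin{equation*}
    \mathcal{L}_\gamma V(z,v) \leq C \quad \text{uniformly in } (z,v) \in \Sperp \text{ and } \gamma \in \Gamma_{r,R}.
\end{equation*}
The delicate term is the bounce contribution $\lambda(z,v)\,[V(z, R(z)v) - V(z,v)]$, which requires decomposing $\Tilde{\nabla}_z \log\pi_\gamma(z)$ into its two sources: one coming from $\log\pi(x)$, yielding the $x\cdot \nabla_x \log\pi(x) + R\lVert\nabla_x\log\pi(x)\rVert$ term (with $R$ bounding $\lVert\mu\rVert$ via Assumption \ref{assump-parameter-space-compact}), and one from the Jacobian $-d\log(1-z_{d+1})$, giving the $+2d$ term. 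Assumption \ref{assump-derivatives-condition}, with its sharp threshold $<1/2$, is designed precisely so that these contributions combine to a uniform upper bound on $\mathcal{L}_\gamma V$. A Dynkin-type argument together with Markov's inequality on $V(Z_{T_1})$ then yields $\inf_\gamma \inf_{(z_0, v_0)} \mathbbm{P}_{(z_0, v_0)}^\gamma(\tau_{K_\eta} \leq T_1) \geq \delta_1$.

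For the mixing phase, from any starting point in $K_\eta$, I would consider the event on $[0, T_2]$ that exactly one refreshment occurs in $[0, T_2/2]$ and no further events occur on $(T_2/2, T_2]$. Since $\lambda(z,v)$ and $\bar{\lambda}(z,v)$ are uniformly bounded over $K_\eta$ and $\gamma \in \Gamma_{r,R}$, this event has probability at least some $p_1 > 0$ uniformly. Conditional on it, the post-refreshment velocity is uniform on $\{z\}^\perp$, and the subsequent deterministic flow of length uniform on some subinterval distributes $(Z_{T_2}, V_{T_2})$ with a density bounded below on an open subset $U \subset \Sperp$ strictly inside $K_{\eta/2}$. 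Pushing forward through $\mathrm{SP}_\gamma^{-1}$, whose Jacobian is uniformly controlled over this set and $\gamma \in \Gamma_{r,R}$, yields a bound of the form $\delta_2 \nu(\cdot)$ with $\nu$ supported in a compact subset of $\mathbbm{R}^d \times \mathbb{S}^{d-1}$, chosen as a common minorising neighbourhood across all $\gamma \in \Gamma_{r,R}$.

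The main obstacle is establishing the uniform drift bound in the escape phase. The $<1/2$ threshold in Assumption \ref{assump-derivatives-condition} is tight and arises naturally from the Lyapunov computation: the bounce reflection $R(z)$ changes $V$ by at most a factor of the tangential gradient component along $v$, while the refreshment integral against the uniform measure on $\{z\}^\perp$ contributes a negative term whose magnitude must dominate the positive bounce contribution \emph{uniformly} in $\gamma$. Tracking this algebra rigorously while maintaining uniformity over $\Sigma$ with eigenvalues in $[r^2, R^2]$ and $\lVert\mu\rVert \leq R$ is the main technical challenge; once this is in place, the mixing phase and the stereographic pushforward should be largely routine, relying on continuity of $\mathrm{SP}_\gamma$ and compactness of $K_\eta$.
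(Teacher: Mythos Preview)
Your two-phase strategy---a Lyapunov/drift argument to reach a compact core $K_\eta$ away from $N$, followed by a small-set minorisation from $K_\eta$ via a refreshment event---is exactly the architecture the paper indicates, with details deferred to \cite{yang2022stereographic}. At that level you are aligned with the paper.

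The execution of the escape phase, however, has two genuine gaps. First, your candidate $V(z,v)=-\log(1-z_{d+1})$ does not depend on $v$. For such a $V$ the bounce and refreshment contributions to $\mathcal{L}_\gamma V$ vanish identically, and the generator reduces to the transport term $v\cdot\nabla_z V = v_{d+1}/(1-z_{d+1})$. Since $v\perp z$ forces $|v_{d+1}|$ to be of order $\sqrt{1-z_{d+1}^2}$, this blows up like $(1-z_{d+1})^{-1/2}$ near $N$, so the claimed bound $\mathcal{L}_\gamma V\le C$ fails outright. Your subsequent discussion of the ``delicate bounce contribution'' and the decomposition of $\Tilde\nabla_z\log\pi_\gamma$ only makes sense for a $V$ that genuinely depends on $v$; Lyapunov functions for BPS-type PDMPs near a singularity typically carry a velocity-dependent factor encoding whether the particle is currently heading toward or away from $N$, and Assumption~\ref{assump-derivatives-condition} enters precisely by controlling what a bounce does to that factor.

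Second, even granting a bound of the form $\mathcal{L}_\gamma V\le C$, Dynkin plus Markov gives only $\mathbbm{P}\big(V(Z_{T_1})>M\big)\le (V(z_0)+CT_1)/M$, which depends on $V(z_0)$ and is vacuous for $z_0$ near $N$. A lower bound on $\mathbbm{P}(\tau_{K_\eta}\le T_1)$ that is uniform in the starting point requires either a genuinely negative drift outside $K_\eta$ together with a further argument to kill the $V(z_0)$ dependence, or a direct probabilistic argument exploiting that any geodesic from a point near $N$ reaches the equatorial band within time $\pi/2$, combined with a bound on the integrated bounce rate along that geodesic derived from Assumption~\ref{assump-derivatives-condition}. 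The sharp $<1/2$ threshold does arise from this computation, but through a more delicate balance than the one you sketch.
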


The proof of Lemma \ref{lemma-sbps-uniform-ergodicity} follows a Lyapunov function and small set proof. Most of the details can be found in \cite{yang2022stereographic}. Note that Lemma \ref{lemma-sbps-uniform-ergodicity} gives a minorisation condition for the continuous-time SBPS, as well as any discrete-time skeleton of the SBPS.

With these minorisation conditions, we can map any of our algorithms' sample paths onto a segment chain with similar properties. We therefore only need to prove results in the general setting to obtain results for each of our algorithms, under appropriate conditions on $\pi$.

\subsection{Splitting and Regenerations for the segment chain}
\label{section-excursions-setup}

We now return to our segment chain $\{\Phi_n \}_{n \in \mathbb{N}}$, as described in either Equation \eqref{equ-segment-chain} or \eqref{equ-segment-chain-discrete} for a general (continuous or discrete-time) Markov process $X$ on a state space $\chi$.

We let $\mathbb{Q}_\mu( \cdot \; ; \gamma)$ be the probability measure on $(\Omega, \mathcal{B}(\Omega))$, the space of paths of length $T$, induced by the dynamics of $\{ X_s \}_{0 \leq s \leq T}$ under parameter $\gamma$, subject to $X_0 \sim \mu$ for some probability measure $\mu$ on $\chi$. If $X_0 =x$ a.s., we write this measure as $\mathbb{Q}_x(\cdot \; ;\gamma)$. With this setup, we have that $\forall A \in \mathcal{B}(\Omega), n \in \mathbb{N}$,
\begin{equation}
    \mathbb{P} \left( \Phi_n \in A \mid \Phi_{0:(n-1)}\right) = \mathbb{Q}_{\Phi_{n-1}(T)}(A \; ;\gamma).
    \label{equ-segment-chain-probs}
\end{equation}
We write
\begin{equation}
    P_{\Phi;\gamma}( \phi, \cdot) = \mathbb{Q}_{\phi(T)}(\cdot \; ;\gamma),
    \label{equ-segment-kernel}
\end{equation}
for its transition kernel. If $\pi_\gamma$ is the unique stationary distribution of $X$, this kernel admits $\mathbb{Q}_{\pi_\gamma}(\cdot \; ;\gamma)$ as its stationary distribution.

We hope to use minorisation condition \eqref{equ-minorisation-general} to extend the state space $\Omega$ to a new space $\widecheck{\Omega} = \Omega \times \{0,1\}$, such that the Markov chain $(\Phi_n,Y_n)_{n \in \mathbb{N}}$ on $\widecheck{\Omega}$ possesses an ergodic atom whilst retaining the marginal transition probabilities of the $\Phi$ component as given in Equation \eqref{equ-segment-chain-probs}.

Mimicking the split chain constructions from \cite{bednorz2008regeneration} or \cite[Section 17.3]{meyn2012markov}, we start by considering the $T$-skeleton chain for the original process $\{ X_{nT} \}_{n \in \mathbb{N}}$. Given the minorisation \eqref{equ-minorisation-general}, we can define $Y_n \in \{0,1\}$, and obtain a Markov chain $\{(X_{nT}, Y_n)\}_{n \in \mathbb{N}}$ with associated probability measure $\widecheck{\mathbb{P}}_\gamma$ by setting
\begin{equation}
    \begin{gathered}
        \widecheck{\mathbb{P}}_\gamma\left( X_{(n+1)T} \in A \mid Y_n =1, X_{nT} \right) = \nu(A), \\[4pt]
        \widecheck{\mathbb{P}}_\gamma\left( X_{(n+1)T} \in A \mid Y_n =0, X_{nT} \right) = \eta_\gamma\left( X_{nT},A \right),
    \end{gathered}
    \label{equ-split-skeleton-transition}
\end{equation}
where
\begin{equation}
    \eta_\gamma(x, A) = \frac{P^T_\gamma(x,A) - \epsilon\nu(A)}{1-\epsilon},
    \label{equ-split-remainder-measure}
\end{equation}
as well as
\begin{equation}
    \widecheck{\mathbb{P}}_\gamma\left( Y_n = 1 \mid X_{nT} \right)= \epsilon.
    \label{equ-split-skeleton-Y-update}
\end{equation}
It is a standard result that this chain has the correct marginal distributions for $\{X_{nT}\}_{n\in\mathbb{N}}$, and that the set $\chi \times \{1\}$ is a regenerative atom for the chain, i.e.\ $A \subset \chi, i \in \{0,1\}$,
\begin{equation}
    \widecheck{\mathbb{P}}_\gamma\left( X_{(n+1)T} \in A, Y_{n+1} = i \mid X_{nT}, Y_n = 1 \right) = \nu(A) \epsilon^{i}(1-\epsilon)^{1-i}.
\end{equation}
In other words, conditionally on $\{Y_n = 1\}$, the processes $\{ X_{kT}, Y_k \}_{k \geq n+1}$ and $\{ X_{kT}, Y_k \}_{k \leq n}$ are independent.

To transfer these properties over to the segment chain $\Phi$, we bridge the paths from $X_{nT}$ to $X_{(n+1)T}$ conditionally on the endpoints. We define the Radon-Nykodym derivatives $\frac{\dif \nu}{\dif P^T_\gamma}$ and $\frac{\dif \eta_\gamma}{\dif P^T_\gamma}$, which are functions of $x$ and $x^\prime$, and extend the measure $\widecheck{\mathbb{P}}_\gamma$ to be a transition kernel from $\Phi_n$ to $\Phi_{n+1}$. For the $Y_n$ updates, we get
\begin{equation}
    \widecheck{\mathbb{P}}_\gamma\left( Y_n = 1 \mid \{ \Phi_k \}_{k=0}^n, \{ Y_k \}_{k=0}^{n-1} \right)= \epsilon,
    \label{equ-split-segment-Y-update}
\end{equation}
and for the $\Phi_{n+1}$ updates, we get
\begin{equation}
    \begin{gathered}
        \widecheck{\mathbb{P}}_\gamma\left( \Phi_{n+1} \in A \mid Y_n =1, \Phi_n \right) = \int_A \frac{\dif\nu}{\dif P^T_\gamma}\left( \Phi_n(T), \phi(T) \right) \; \dif \mathbb{Q}_{\Phi_n(T)}(\phi;\gamma), \\[6pt]
        \widecheck{\mathbb{P}}_\gamma\left( \Phi_{n+1} \in A \mid Y_n =0, \Phi_n \right) = \int_A \frac{\dif \eta_\gamma}{\dif P^T_\gamma}\left( \Phi_n(T), \phi(T) \right) \; \dif \mathbb{Q}_{\Phi_n(T)}(\phi;\gamma).
    \end{gathered}
    \label{equ-split-segment-transition}
\end{equation}

One can check that these give the appropriate marginal transition kernels to recover $\Phi_{n+1} \sim \mathbb{Q}_{\Phi_n(T)}(\cdot \; ;\gamma)$, and that $\Phi_{n+1}(T) \mid \Phi_n, \{Y = 1\} \sim \nu$. It is easy to check that the invariant distribution for the split segment chain is
\begin{equation}
    \widecheck{\pi}_\gamma(A \times \{i\}) = \epsilon^i (1-\epsilon)^{1-i} \mathbb{Q}_{\pi_\gamma}(A \; ; \gamma),
    \label{equ-segment-invariant}
\end{equation}
for $A \in \mathcal{B}(\Omega)$.

Let $\widecheck{P}_\gamma( (\phi,y), \cdot)$ be the transition kernel of the split segment chain $\{ \Phi_n, Y_n \}_{n \in \mathbb{N}}$, as given by Equations \eqref{equ-split-segment-Y-update} and \eqref{equ-split-segment-transition}. Since $Y_n = 1$ implies that $X_{(n+1)T} \sim \nu$ independently of the history of the process up to time $Tn$, we can show that the split segment chain admits $\widecheck{\alpha} = \Omega \times \{1\}$ as an atom:
\begin{lemma}[Atom for the Split Segment Chain]
    For $\{ \Phi_n, Y_n \}_{n \in \mathbb{N}}$ the split segment chain with transition law $\widecheck{P}_\gamma$, we have that $\forall \phi \in \Omega$, and $\forall A \in \mathcal{B}(\Omega)$, $i \in \{0,1\}$,
    \begin{equation*}
        \widecheck{P}^2_\gamma \left( (\phi,1), A \times \{i\} \right) = C^i (1-C)^{1-i} \, \mathbb{Q}_{\nu}(A \; ; \gamma).
    \end{equation*}
    In particular, conditionally on the event $\{Y_n =1\}$, the pre-$n$ process $\{ \Phi_k, Y_k\}_{k \leq n}$ and the post-$(n+2)$ process $\{\Phi_k, Y_k\}_{k \geq n+2}$ are independent. Given $\{Y_n =1\}$, $\{\Phi_k, Y_k\}_{k \geq n+2}$ has the same law as $\{\Phi_k, Y_k\}_{k \geq 1}$ with initial distributions $\Phi_0(T) \sim \nu$ and $Y_0 \sim \text{Bern}(C)$.
    \label{lemma-atom-1-dependence}
\end{lemma}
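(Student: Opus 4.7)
The plan is to unfold $\widecheck{P}^2_\gamma((\phi,1),\cdot)$ as the composition of two transitions of the split segment chain and exploit the fact that after the first transition the endpoint $\Phi_{n+1}(T)$ is $\nu$-distributed regardless of $\phi$. One step alone is \emph{not} enough: Equation \eqref{equ-split-segment-transition} with $Y_n = 1$ produces a segment $\Phi_{n+1}$ whose endpoint is $\nu$-distributed but whose full path is a $\mathbbm{Q}_{\phi(T)}$-bridge re-weighted by $d\nu/dP^T_\gamma$, which is not the unconditional measure $\mathbbm{Q}_\nu(\cdot;\gamma)$. The second transition fixes this by marginalising over $Y_{n+1}$.

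Concretely, I would first peel off the $Y_{n+2}$ update using \eqref{equ-split-segment-Y-update}: the event $\{Y_{n+2} = i\}$ is independent of everything earlier and has probability $\epsilon^i(1-\epsilon)^{1-i}$, so the task reduces to computing $\widecheck{\mathbbm{P}}_\gamma(\Phi_{n+2} \in A \mid \Phi_n = \phi, Y_n = 1)$. Conditioning on $\Phi_{n+1}$ and marginalising over $Y_{n+1}$, Equation \eqref{equ-split-segment-transition} gives
\begin{equation*}
    \widecheck{\mathbbm{P}}_\gamma\big(\Phi_{n+2} \in A \,\big|\, \Phi_{n+1}\big)
    = \int_A \bigg[\epsilon \frac{d\nu}{dP^T_\gamma} + (1-\epsilon)\frac{d\eta_\gamma}{dP^T_\gamma}\bigg]\big(\Phi_{n+1}(T),\psi(T)\big) \, d\mathbbm{Q}_{\Phi_{n+1}(T)}(\psi;\gamma).
\end{equation*}
The bracketed sum equals $d(\epsilon\nu + (1-\epsilon)\eta_\gamma)/dP^T_\gamma = 1$ by the defining splitting identity \eqref{equ-split-remainder-measure}, so this collapses to $\mathbbm{Q}_{\Phi_{n+1}(T)}(A;\gamma)$. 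Since the law of $\Phi_{n+1}(T)$ conditionally on $(\phi,1)$ is exactly $\nu$ by \eqref{equ-split-segment-transition}, integrating over $\Phi_{n+1}$ yields $\mathbbm{Q}_\nu(A;\gamma)$ as required.

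The independence and distributional statements are then immediate from the explicit formula: the resulting law of $(\Phi_{n+2}, Y_{n+2})$ depends only on $\nu$, $\epsilon$ and $\gamma$, not on $\phi$. Combining this with the Markov property of $\{(\Phi_k, Y_k)\}$, which decouples the pre-$n$ and post-$n$ processes given $(\Phi_n, Y_n)$, the post-$(n+2)$ process is independent of the pre-$n$ process conditionally on $\{Y_n = 1\}$ alone; moreover its law coincides with that of the chain started from $\Phi_0(T) \sim \nu$ and $Y_0 \sim \mathrm{Bern}(\epsilon)$, because the formula for $\widecheck{P}^2_\gamma((\phi,1),\cdot)$ is precisely the distribution of $(\Phi_1, Y_1)$ under those initial conditions.

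The only real obstacle is spotting the Radon-Nikodym cancellation $\epsilon \, d\nu/dP^T_\gamma + (1-\epsilon) \, d\eta_\gamma/dP^T_\gamma = 1$, which is the whole point of the splitting construction; once that is identified the rest is bookkeeping with conditional distributions and the tower property. A minor technical caveat worth checking is that the Radon-Nikodym derivatives $d\nu/dP^T_\gamma$ and $d\eta_\gamma/dP^T_\gamma$ are well-defined, which follows from $\nu \ll P^T_\gamma(x,\cdot)$ via \eqref{equ-minorisation-general} and the explicit form of $\eta_\gamma$ in \eqref{equ-split-remainder-measure}.
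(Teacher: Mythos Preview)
Your proposal is correct and follows the natural line the paper indicates. The paper itself does not give a detailed proof: it simply remarks before the lemma that ``$Y_n = 1$ implies that $X_{(n+1)T} \sim \nu$ independently of the history of the process up to time $Tn$'' and states the result as a consequence of the split-chain construction; your Radon--Nikodym cancellation $\epsilon\,d\nu/dP^T_\gamma + (1-\epsilon)\,d\eta_\gamma/dP^T_\gamma = 1$ is exactly how one makes this rigorous, and the constant $C$ in the statement is just $\epsilon$.
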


With Lemma \ref{lemma-atom-1-dependence}, we can consider an excursion process between hitting times of the atom $\widecheck{\alpha}$, the details of which can be found in Appendix \ref{section-excursions-details}. These are identically distributed, 1-dependent sequences of random ($\text{Geom}(\epsilon)$ distributed) length. This structure can be visualised in Figure \ref{fig-excursions}.

As a follow-up remark, since $Y_n =1$ implies that $\Phi_n$ and $\Phi_{n+2}$ are independent, we note that the ``bridge'' path $\Phi_{n+1}$ has the property that $\Phi_{n+1}(0) = \Phi_n(T)$ and $\Phi_{n+1}(T) = \Phi_{n+2}(0)$ are independent.

\begin{figure}[b!]
    \centering
    \includegraphics[width=4.7in]{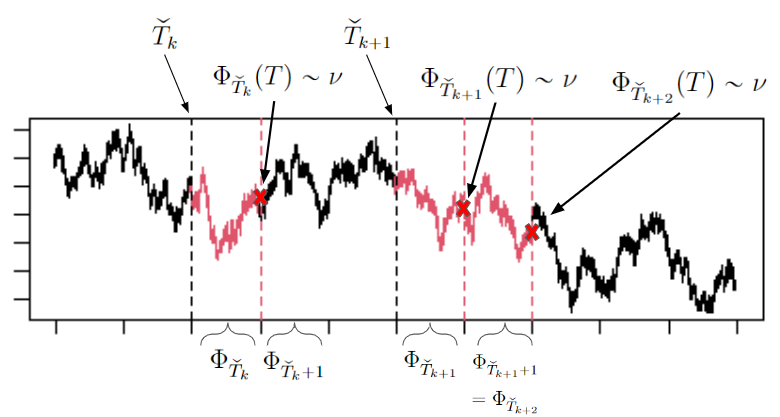}
    \caption{Example path including excursions (path is for illustration purposes and does not reflect a real sampling scheme). The regeneration times $\widecheck{T}_k$ (defined in appendix \ref{section-excursions-details}) are shown, along with associated random variables: the regeneration segments $\Phi_{\widecheck{T}_k}$ are shown in red. These segments are independent of neither the immediate past nor the immediate future. The random length excursions $\Psi_k$ go between red crosses and are identically distributed and 1-dependent.}
    \label{fig-excursions}
\end{figure}

With this framework, any uniformly ergodic Markov process can be transformed into a discrete-time chain with an ergodic atom.

\subsection{Proving the Asymptotic Results}

We now have enough setup to prove the results in Theorem \ref{thm-fhat-asymptotics}. Consider a bounded function $f:\mathbb{R}^d \rightarrow \mathbb{R}$, with $\lvert f \rvert \leq M$, and assume that $\pi_\gamma f = 0$ for all $\gamma \in \Gamma$. We are interested in the behaviour of the estimator $\widehat{f}_t$ as given in Equation \eqref{equ-estimator}. If the original process $X$ is a continuous-time process, we write
\begin{equation}
    F(\phi) = \frac{1}{T} \int_{s=0}^T f(x_s) \, \dif s,
    \label{equ-F(phi)}
\end{equation}
for $\phi = \{ x_s\}_{s=0}^T \in \Omega$, to get that
\begin{equation}
    \widehat{f}_t = \frac{1}{n} \sum_{i=0}^{n-1} F(\Phi_i) + \mathcal{O}(t^{-1}),
    \label{equ-fhat-truncation}
\end{equation}
where $n = \lfloor \frac{t}{T} \rfloor$. An analogous statement can be made for discrete-time processes. The asymptotic properties of $\widehat{f}_t$ will therefore be the same as those of
\begin{equation}
    \widehat{f}^\Phi_n = \frac{1}{n} \sum_{i=0}^{n-1} F(\Phi_i),
\end{equation}
as $n \rightarrow +\infty$.

Recall that we will be working with an AIR MCMC scheme where the lags between adaptation times are polynomially increasing. Let $\beta >0, c \geq 1$ such that $\forall k \in \mathbb{N}$,
\begin{equation}
    \frac{1}{c}k^\beta \leq n_k \leq ck^\beta,
    \label{equ-poly-lags-segment}
\end{equation}
for $n_k \in \mathbb{N}$. The adaptation times are then given by $N_k = \sum_{i=1}^k n_k$, with $N_0 = n_0 = 0$ for consistency of notation.

We consider the AIR segment chain process as follows: let $\{ (\Phi_n,Y_n) \}_{n \in \mathbb{N}}$ evolve according to $\widecheck{P}_{\gamma_k}$ for $N_k \leq n < N_{k+1}$, then update the parameter to $\gamma_{k+1}$ at $N_{k+1}$. In this setting, we obtain the following results:
\begin{theorem}[AIR Segment Chain $\mathcal{L}^2$ Convergence and CLT]
    Consider $(\Phi_n,Y_n)_{n \geq 0}$ the AIR segment chain process, where the original family of transition kernels $\{P_\gamma\}_{\gamma \in \Gamma}$ have invariant distributions $\pi_\gamma$ and satisfy the minorisation condition \eqref{equ-minorisation-general}. Assume that $f$ is bounded and $\mathbb{E}_{\pi_\gamma}(f(X))=0$ for all $\gamma \in \Gamma$. Then $\forall (\Phi_0,Y_0) \in \widecheck{\Omega}$, $\gamma_0 \in \Gamma$, and any adaptation scheme:
    \begin{itemize}
        \item For any $\beta >0$, and $\forall \epsilon < \min\left(\frac{1}{2}, \frac{\beta}{1+\beta}\right)$,
        \begin{equation*}
            n^\epsilon \widehat{f}^\Phi_n \xrightarrow{\mathcal{L}^2} 0, \qquad \text{as } n \rightarrow \infty.
        \end{equation*}
        \item For $\beta>1$, if $\gamma_i \xrightarrow{P} \gamma_\infty$, for some constant $\gamma_\infty$, such that $\sigma^2(\gamma_\infty)>0$, and that $\sigma^2(\gamma)$ is a continuous function of $\gamma$, then a CLT holds:
        \begin{equation*}
            \sqrt{n} \widehat{f}^\Phi_n \xrightarrow{D} \mathcal{N}(0, \sigma^2(\gamma_\infty)), \qquad \text{as } n \rightarrow \infty.
        \end{equation*}
    \end{itemize}
    \label{thm-air-segment-chain-1}
\end{theorem}

The proof of Theorem \ref{thm-air-segment-chain-1} is in Appendix \ref{section-air-proof}.

We can also apply the almost sure convergence results from \cite{hofstadler2024almost} to obtain almost sure convergence results:
\begin{theorem}[AIR Segment Chain Almost Sure Convergence]
    Consider $(\Phi_n,Y_n)_{n \geq 0}$ the AIR segment chain process, where the original family of transition kernels $\{P_\gamma\}_{\gamma \in \Gamma}$ have invariant distributions $\pi_\gamma$ and satisfy the minorisation condition \eqref{equ-minorisation-general}. Assume that $f$ is bounded and $\mathbb{E}_{\pi_\gamma}(f(X))=0$ for all $\gamma \in \Gamma$. Then $\forall (\Phi_0,Y_0) \in \widecheck{\Omega}$, $\gamma_0 \in \Gamma$, and any adaptation scheme, if $\epsilon < \min\left( \frac{1}{2}, \frac{\beta}{1+\beta} \right)$ then
        \begin{equation*}
            n^\epsilon\widehat{f}_n \xrightarrow{\text{a.s.}} 0.
        \end{equation*}
    \label{thm-air-segment-chain-2}
\end{theorem}

This result follows directly from Theorems 4.3 and 4.5 of \cite{hofstadler2024almost}.

Finally, since $\widehat{f}_t = \widehat{f}^\Phi_n + \mathcal{O}(t^{-1})$, we can use Theorem \ref{thm-air-segment-chain-1} and Theorem \ref{thm-air-segment-chain-2} to immediately obtain the results on $\widehat{f}_t$ in Theorem \ref{thm-fhat-asymptotics}, which concludes our discussion of the asymptotic results for the adaptive stereographic MCMC algorithms.

\section{Simulation Studies}
\label{section-simulations}

In this final section, we demonstrate the ability of our algorithms to perform well in challenging heavy-tailed, high-dimensional settings.

All code was run in Julia. We simulate the SBPS bounce events by combining the techniques from \cite{corbella2022automatic} and \cite{andral2024automated}, with the use of the ForwardDiff package from \cite{revels2016forward}. We use the adaptive window width from \cite{andral2024automated}, as it allows for significant flexibility in cases where the density may be highly concentrated, but we use Brent's method from \cite{corbella2022automatic}, rather than any methods involving Hessian information, since Hessian matrices can be very expensive in high dimensions.

\subsection{Impact of Parameters on Mixing}
\label{section-autocorrelation}

We present a comparison of autocorrelation plots targeting a MtD in $d$ dimensions with $d$ DoF. We compare the autocorrelation of $\lVert X \rVert^2$ for $d=100$ when taking $\gamma$ to be based on the mean and covariance of $n$ iid $\mathcal{N}(0_d,I_d)$ random vectors, reflecting what we might expect at different stages of a run of our AIR algorithms. As $n$ increases, the parameter estimators converge to the optimal parameters, which leads to $\pi_\gamma$ becoming uniform over $\mathbb{S}^d$. We start all algorithms in stationarity and do not perform adaptations to demonstrate the improved performance we obtain by updating the parameter estimates. For the SRW and SBPS, we tune $h$ and $\lambda_\text{ref}$ following the intuition given in the closing remarks of Section \ref{section-equator}.

\begin{figure}[t!]
    \begin{subfigure}{2.36in}
        \centering
        \includegraphics[width=2.35in]{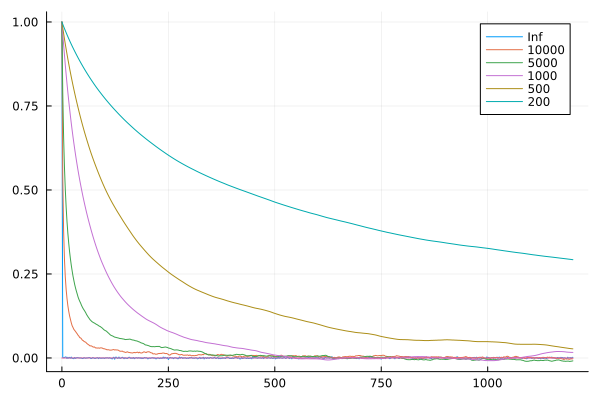}
        \caption{Autocorrelation for SRW}
    \end{subfigure}
    \hfill
    \begin{subfigure}{2.36in}
        \centering
        \includegraphics[width=2.35in]{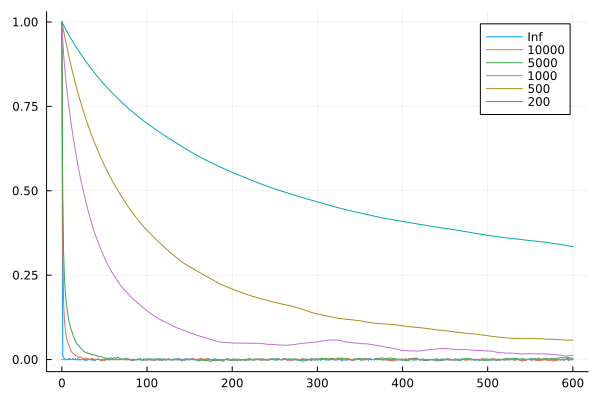}
        \caption{Autocorrelation for SSS}
    \end{subfigure}
    \\[10pt]
    \centering
    \begin{subfigure}{2.36in}
        \centering
        \includegraphics[width=2.35in]{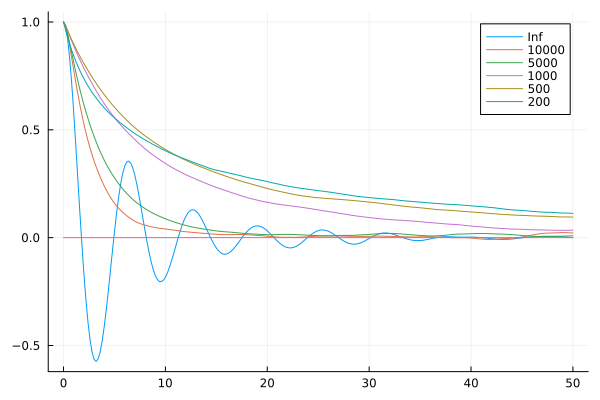}
        \caption{Autocorrelation for SBPS}
    \end{subfigure}
    \caption{Autocorrelation plots of $\lVert X \rVert^2$ for the stereographic algorithms targeting a MtD with $\nu = d = 100$. $\gamma$ is taken as the empirical mean and covariance estimator for $n$ iid $\mathcal{N}(0_d,I_d)$ with $n$ varying from $200$ to $\infty$.}
    \label{fig-autocor}
\end{figure}

In all cases, we see from Figure \ref{fig-autocor} that the autocorrelation dramatically improves as $n$ increases. In the optimal $n = \infty$ case, we see that the SRW and SSS achieve essentially uncorrelated moves after only one step. For the SBPS, we can observe the strength of the non-reversible dynamics in producing negatively correlated samples in the optimal setting. This clearly demonstrates the benefit of using adaptive methods to improve $\gamma$, since it yields such large improvements in autocorrelation.

When comparing the deterioration of the different algorithms as $n$ decreases, the SSS performs better than the SRW: the decay in autocorrelation for the SSS is more robust to poor parameters than the SRW. For the SBPS, the case $n=\infty$ seems very promising, but once the spherical symmetry is even slightly lost, then it begins to perform the worst out of all the algorithms. It is also worth noting that, in the optimal setting, the SBPS is significantly more expensive than either of the other two algorithms. For every 1 unit of time of the SBPS sample path, we could simulate roughly 200 steps of SRW or SSS. The SRW and SSS have comparable computational costs in the optimal regime, but the cost of one step of SRW does not increase as the quality of the parameters decreases. On the other hand, tuning the step size $h$ of the SRW can be challenging, whereas the SSS has no similar parameterisation issues.

\subsection{Robustness of Adaptive Algorithms}
\label{section-robustness}

Having looked at the improvements that adaptations can bring when sampling from stationarity, we now demonstrate the robustness of the adaptive algorithms to starting deep in the tails of the target with poor initial parameters. Consider targeting a MtD with $\nu = 2$, $d = 200$, which is far beyond the theoretical assumptions for Theorem \ref{thm-fhat-asymptotics} to guarantee convergence of estimators. We set the initial parameters $\mu = (1000, \dots, 1000)$, $\Sigma = d I_d$, and start the process on the equator to emulate a situation where the prior and posterior modes are very far apart, and the target is heavy-tailed.

We present plots of the sample paths for the adaptive SRW, SSS, SBPS, and a plot of a HMC sample path struggling to find the modes for comparison. These are Figures \ref{fig-asrw}, \ref{fig-asss}, \ref{fig-asbps}, and \ref{fig-hmc} respectively. We also include plots of the number of likelihood evaluations per step for the SSS and the number of gradient evaluations per unit time for the SBPS to show how the computational cost of the algorithms changes with the adaptations.

For our adaptation scheme, we use the latest quarter of the adaptive epochs to obtain an estimator for the target mean and covariance matrix via the standard empirical estimators. We also scale the covariance matrix to centre the latest adaptive epoch's sample path around the equator. This scaling is included to force the probability mass onto the equator and control our estimators in cases where the target covariance may be infinite.

\begin{figure}[t!]
    \centering
        \begin{subfigure}{2.36in}
            \centering
            \includegraphics[width=2.3in]{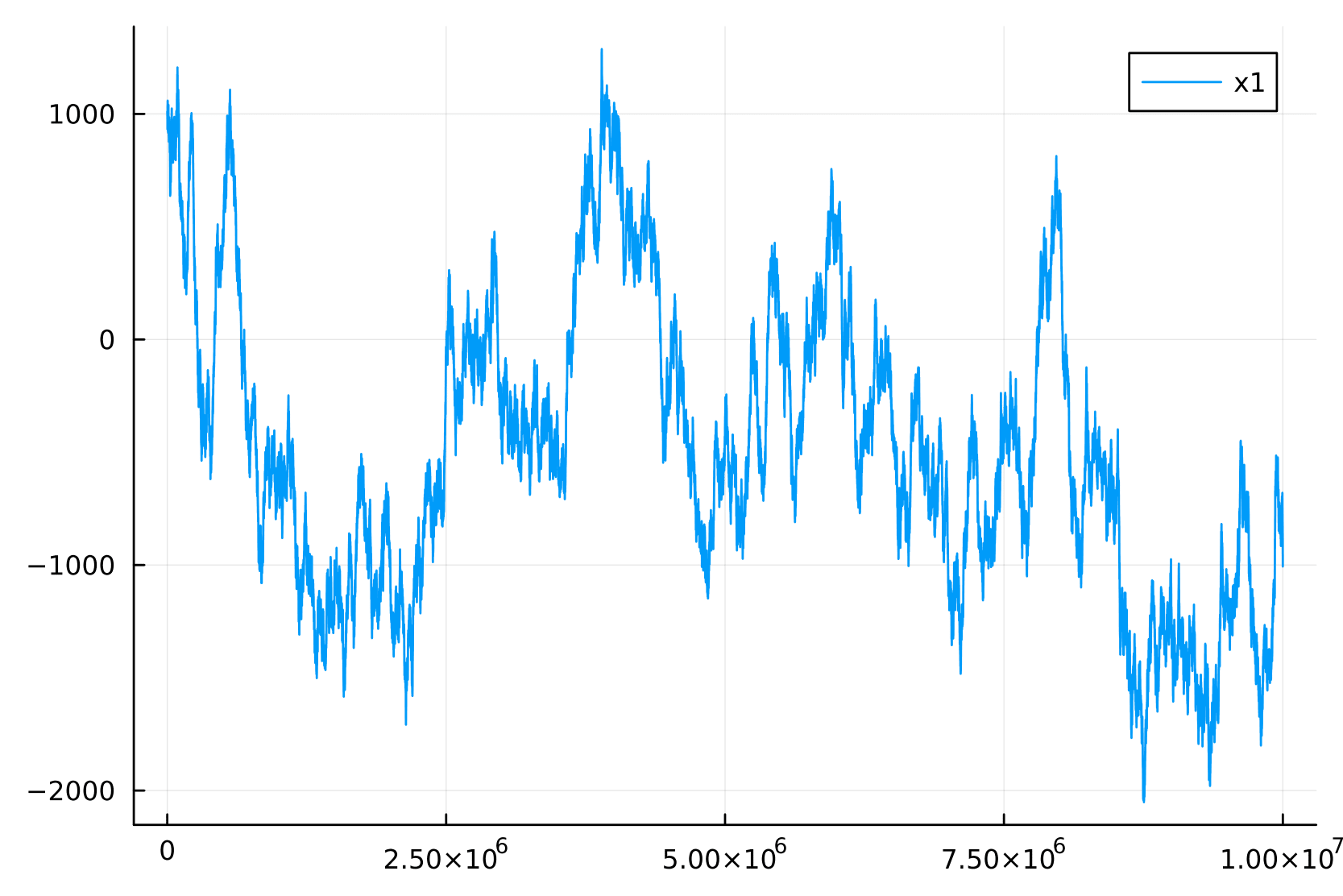}
            \caption{Plot of the first coordinate $X_1$}
            \label{fig-hmc-x}
        \end{subfigure}
        \hfill
        \begin{subfigure}{2.36in}
            \centering
            \includegraphics[width=2.3in]{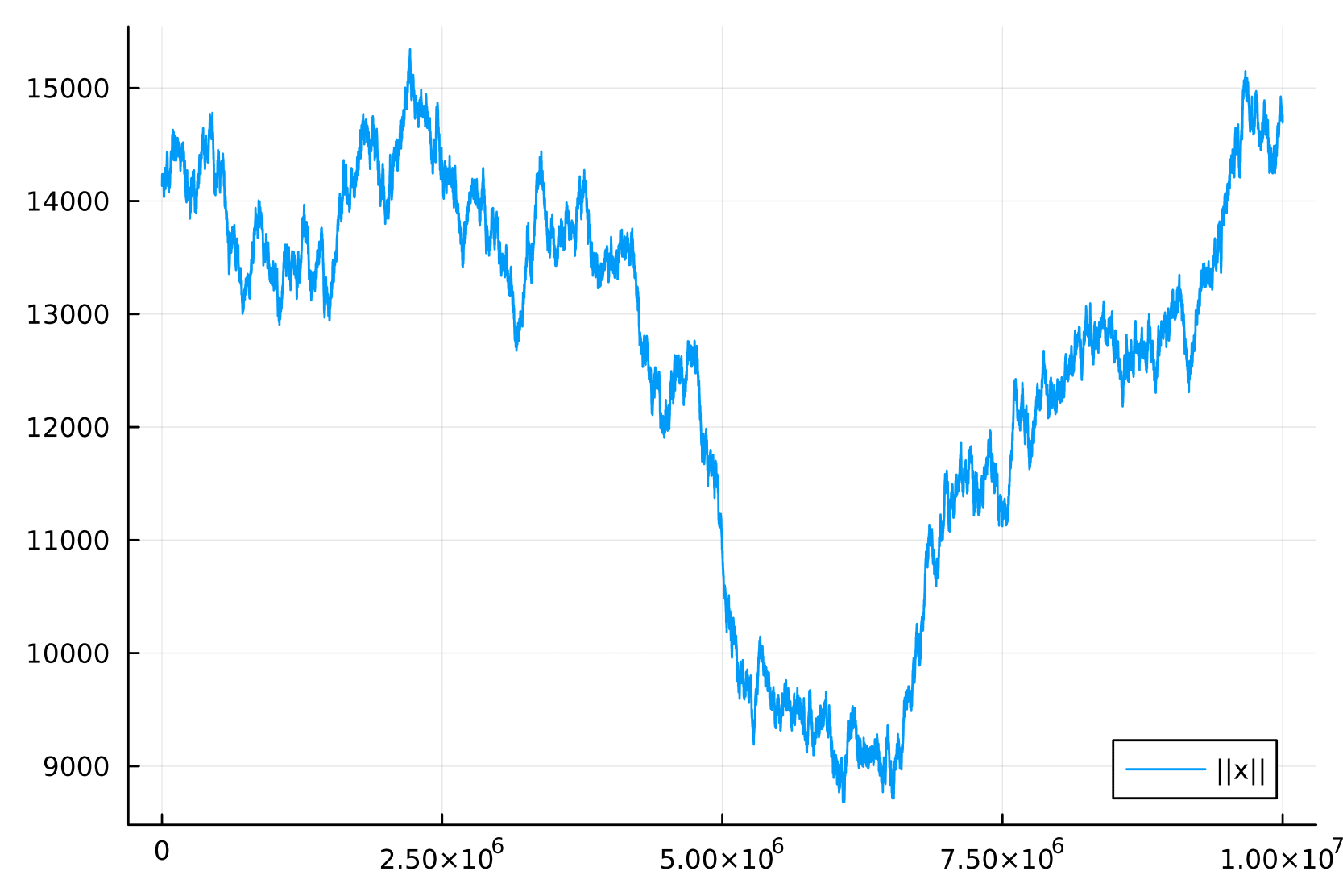}
            \caption{Plot of the norm $\lVert X \rVert$}
            \label{fig-hmc-norms}
        \end{subfigure}
    \caption{HMC sample paths targeting a MtD with $\nu = 2$, $d = 200$, started at $x_0 = (1000, \dots, 1000)$. This run took approximately 3.5 hours.}
    \label{fig-hmc}
\end{figure}

\begin{figure}[b!]
    \centering
        \begin{subfigure}{2.36in}
            \centering
            \includegraphics[width=2.3in]{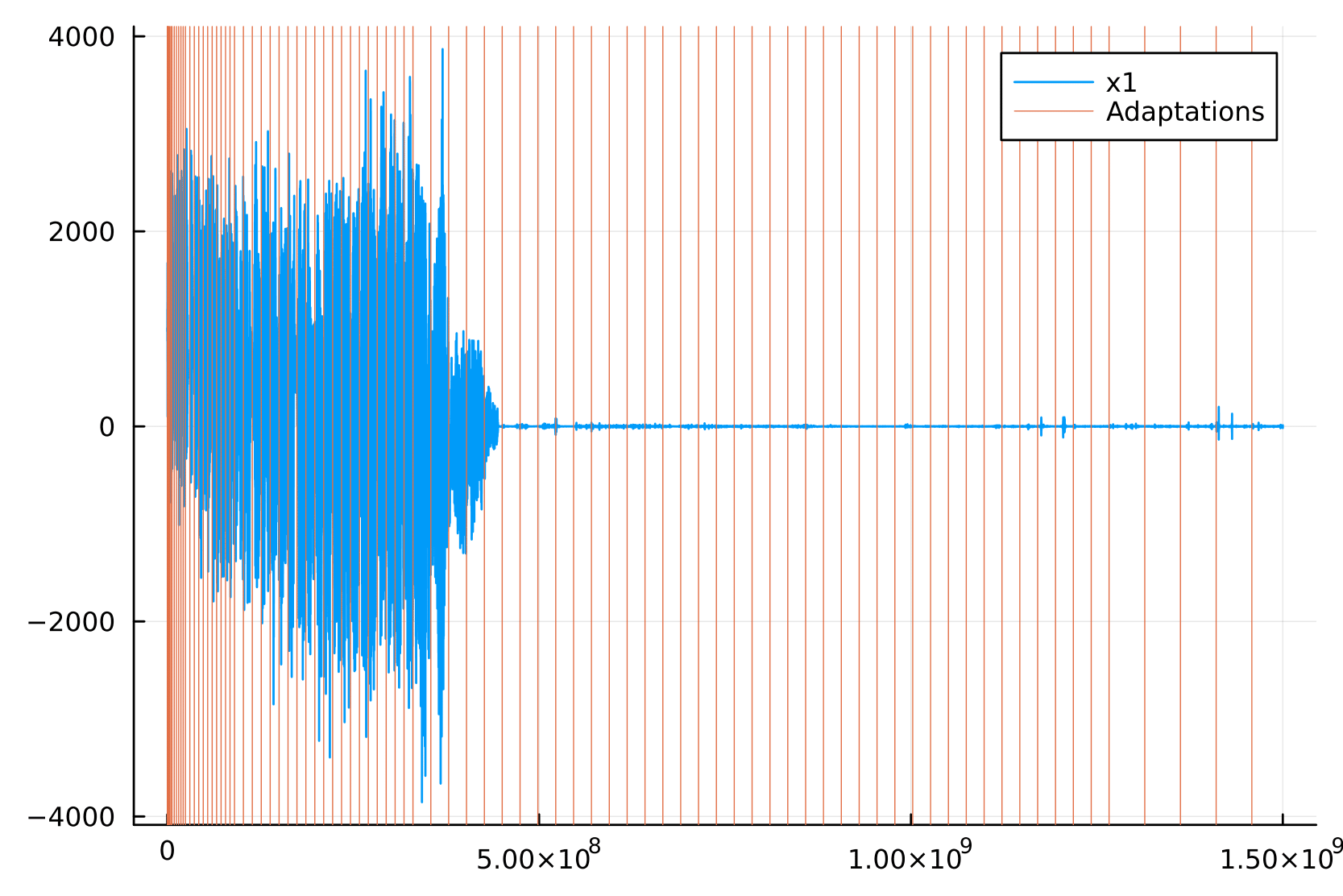}
            \caption{Plot of the first coordinate $X_1$}
            \label{fig-asrw-x}
        \end{subfigure}
        \hfill
        \begin{subfigure}{2.36in}
            \centering
            \includegraphics[width=2.3in]{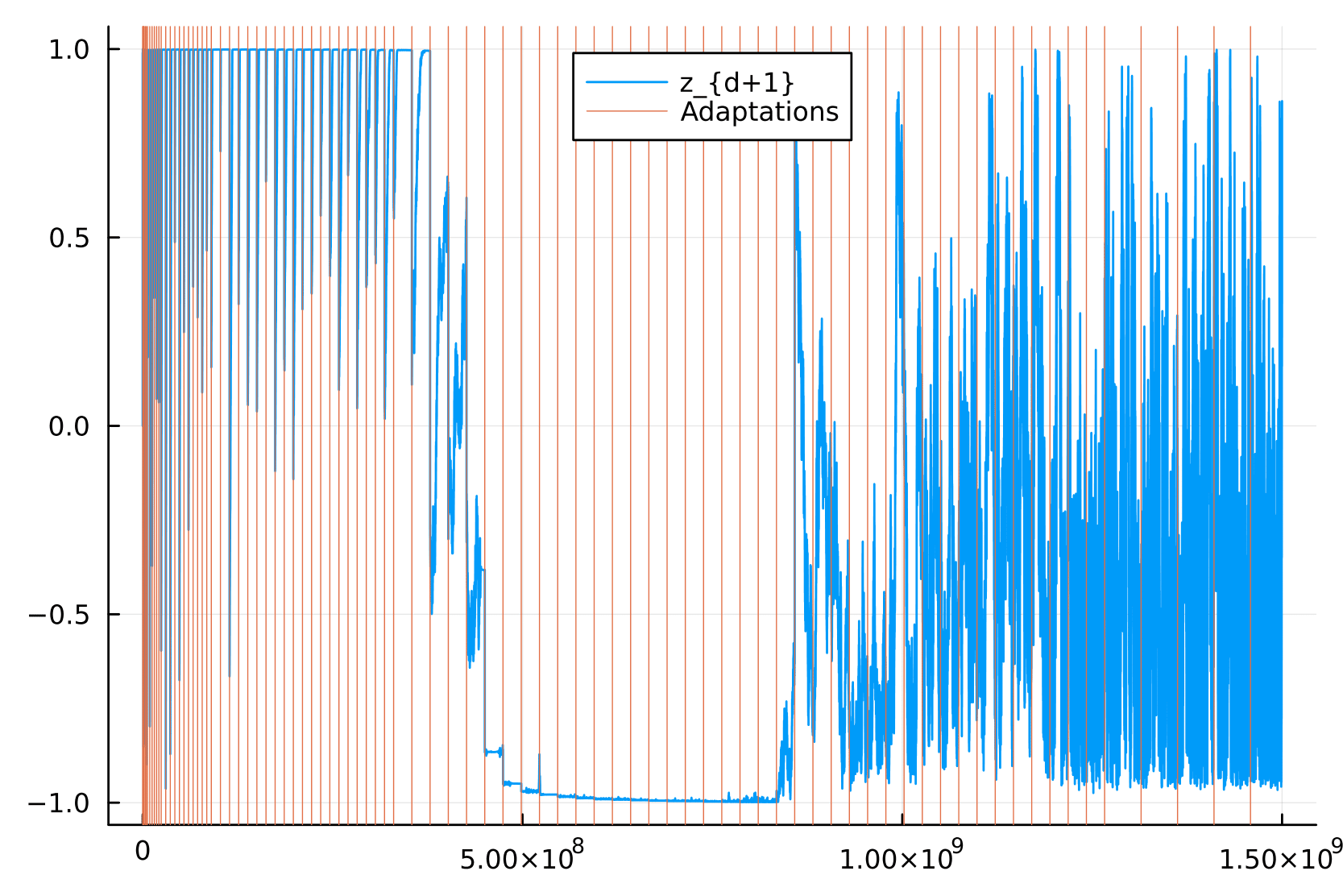}
            \caption{Plot of the latitude $Z_{d+1}$}
            \label{fig-asrw-z}
        \end{subfigure}
    \caption{Adaptive SRW sample paths targeting a MtD with $\nu = 2$, $d = 200$, and initial parameters $\mu = (1000, \dots, 1000)$, $\Sigma = d I_d$. We adaptively tune $h$ to target an average acceptance probability of $23.4\%$. This run took approximately 6 hours and 20 minutes.}
    \label{fig-asrw}
\end{figure}

\begin{figure}[t!]
    \centering
        \begin{subfigure}{2.36in}
            \centering
            \includegraphics[width=2.3in]{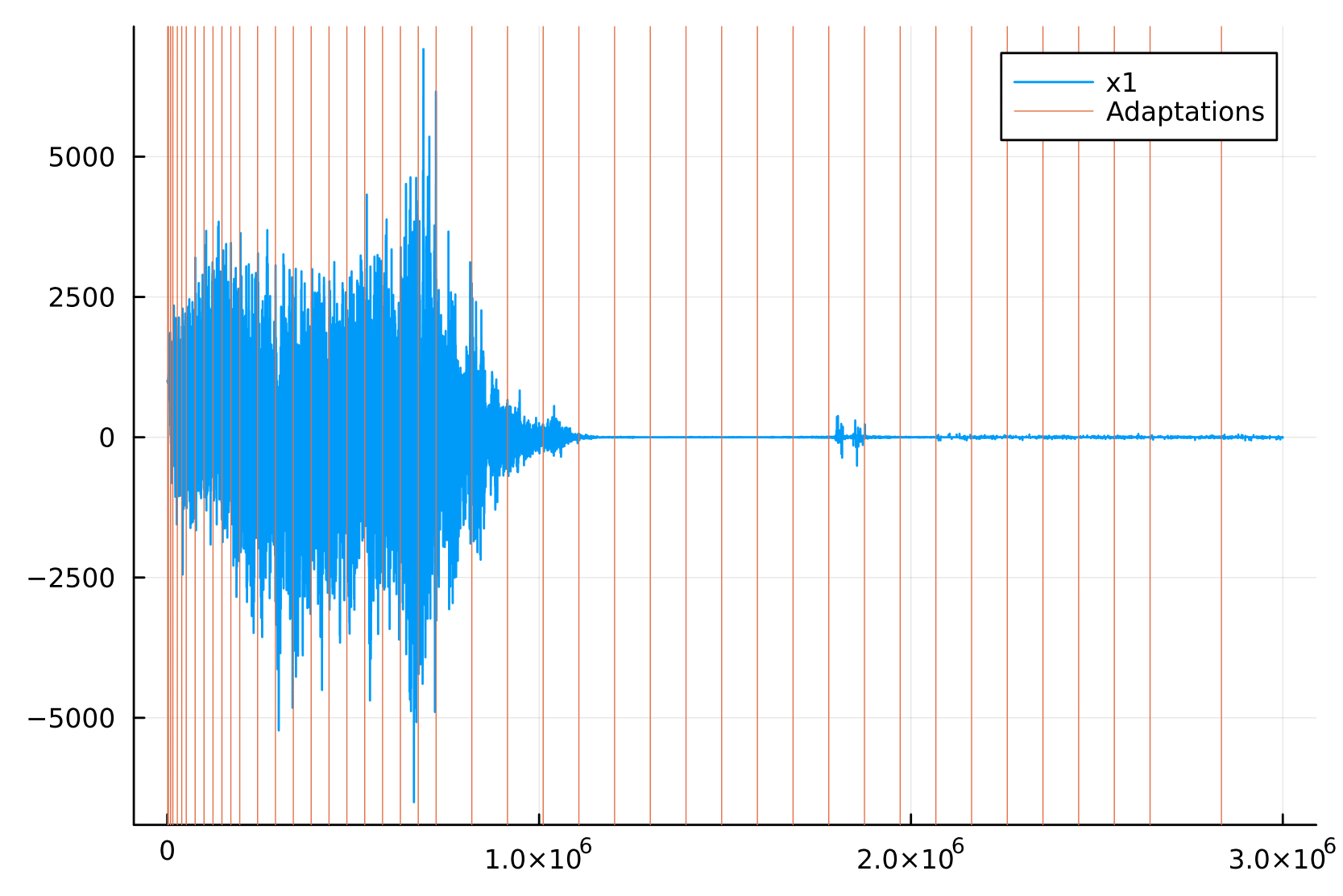}
            \caption{Plot of the first coordinate $X_1$}
            \label{fig-asss-x}
        \end{subfigure}
        \hfill
        \begin{subfigure}{2.36in}
            \centering
            \includegraphics[width=2.3in]{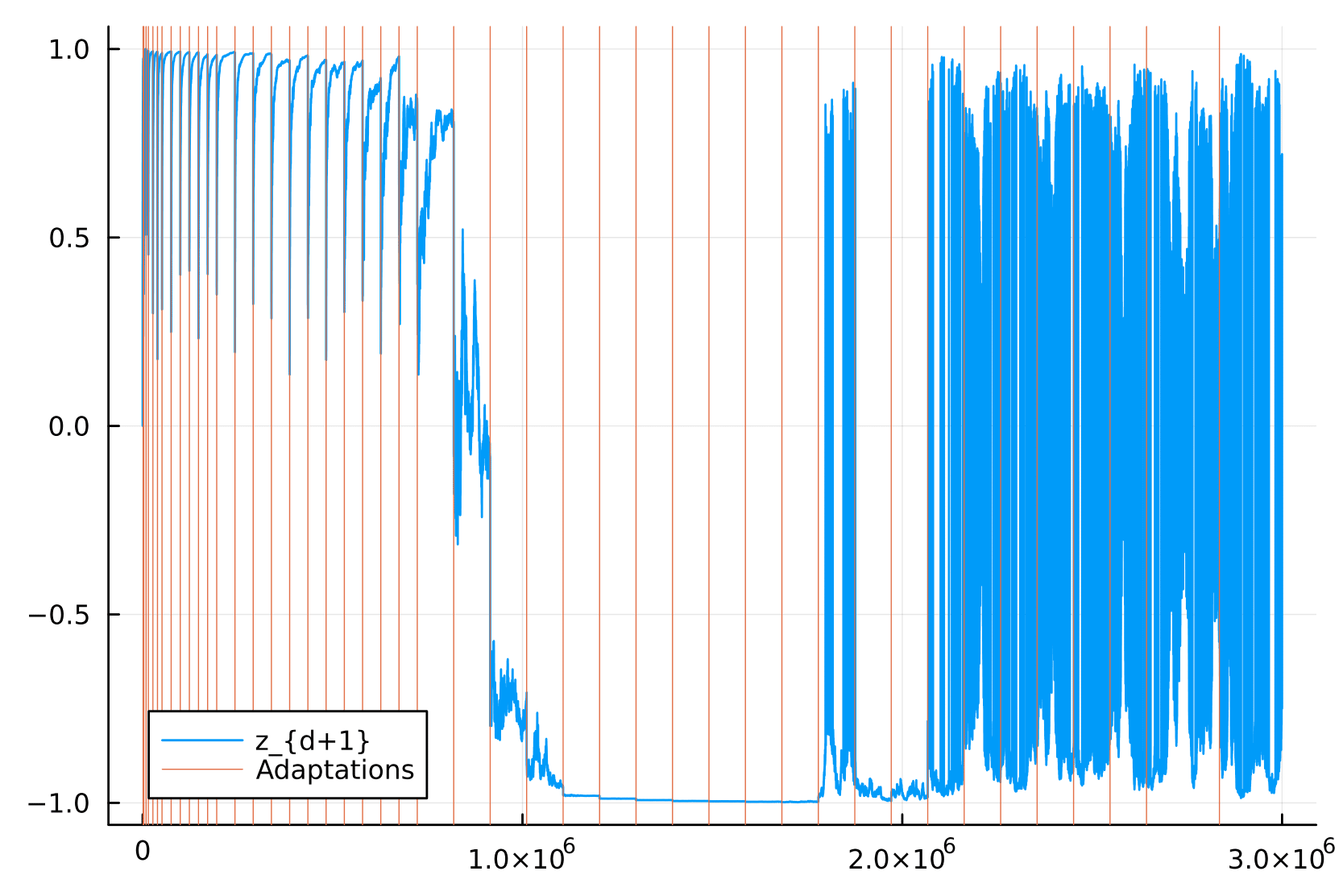}
            \caption{Plot of the latitude $Z_{d+1}$}
            \label{fig-asss-z}
        \end{subfigure}
        \\[10pt]
        \centering
        \begin{subfigure}{5in}
            \centering
            \includegraphics[width=2in]{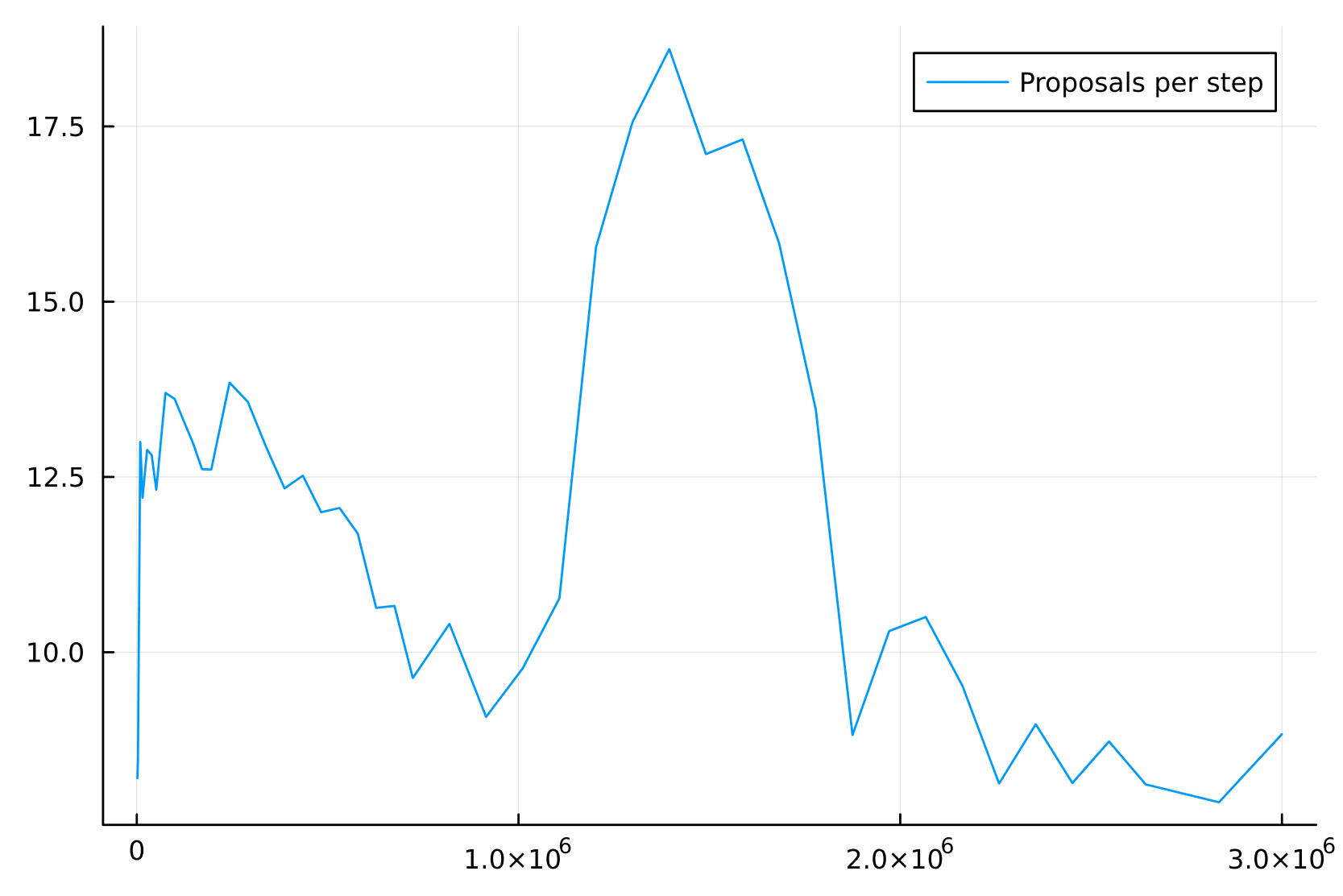}
            \caption{Average number of proposals per step over the run of the SSS}
            \label{fig-asss-prop}
        \end{subfigure}
    \caption{Adaptive SSS sample paths targeting a MtD with $\nu = 2$, $d = 200$, and initial parameters $\mu = (1000, \dots, 1000)$, $\Sigma = d I_d$. This run took approximately 15 minutes.}
    \label{fig-asss}
\end{figure}

\begin{figure}[t!]
    \centering
        \begin{subfigure}{2.36in}
            \centering
            \includegraphics[width=2.3in]{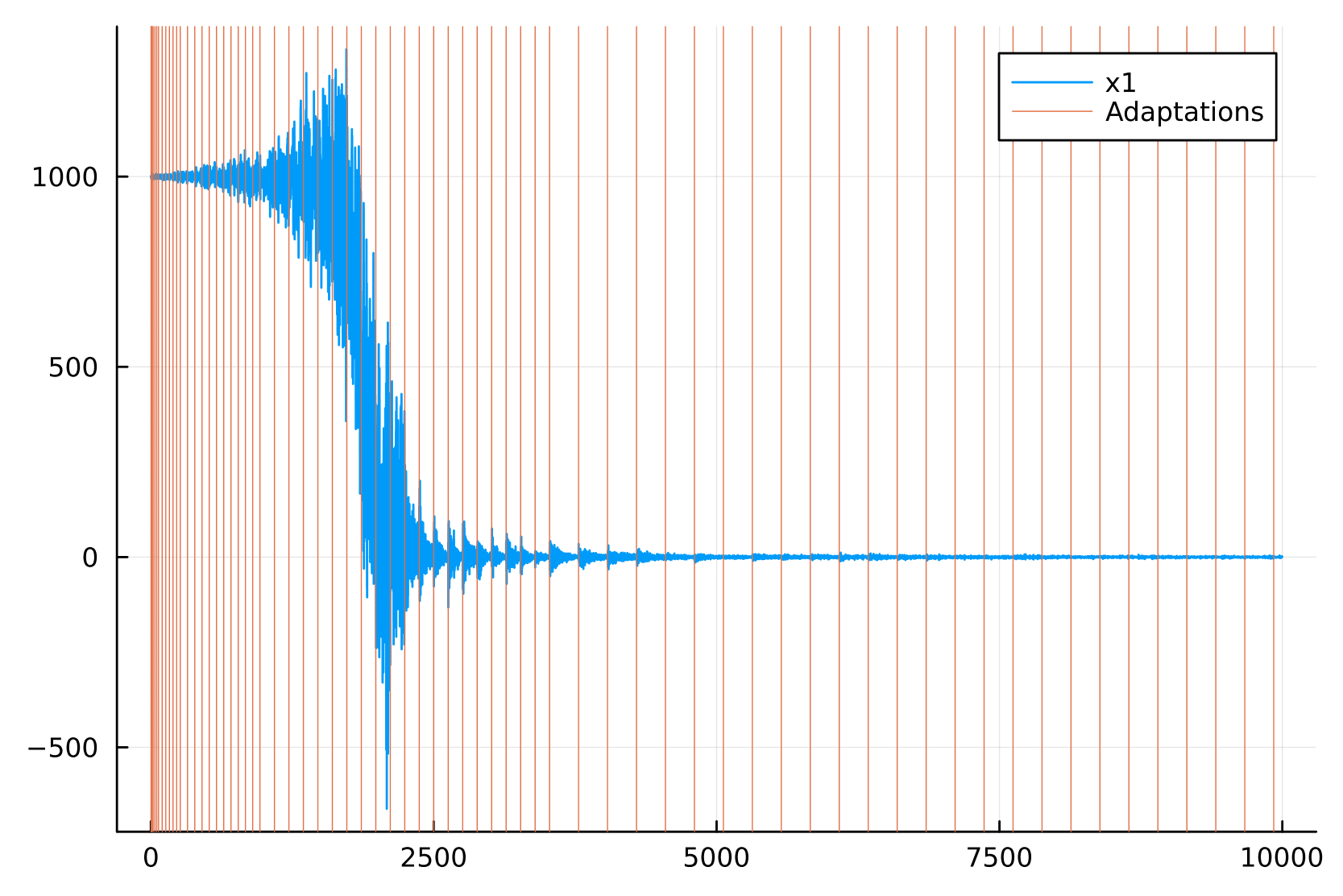}
            \caption{Plot of the first coordinate $X_1$}
            \label{fig-asbps-x}
        \end{subfigure}
        \hfill
        \begin{subfigure}{2.36in}
            \centering
            \includegraphics[width=2.3in]{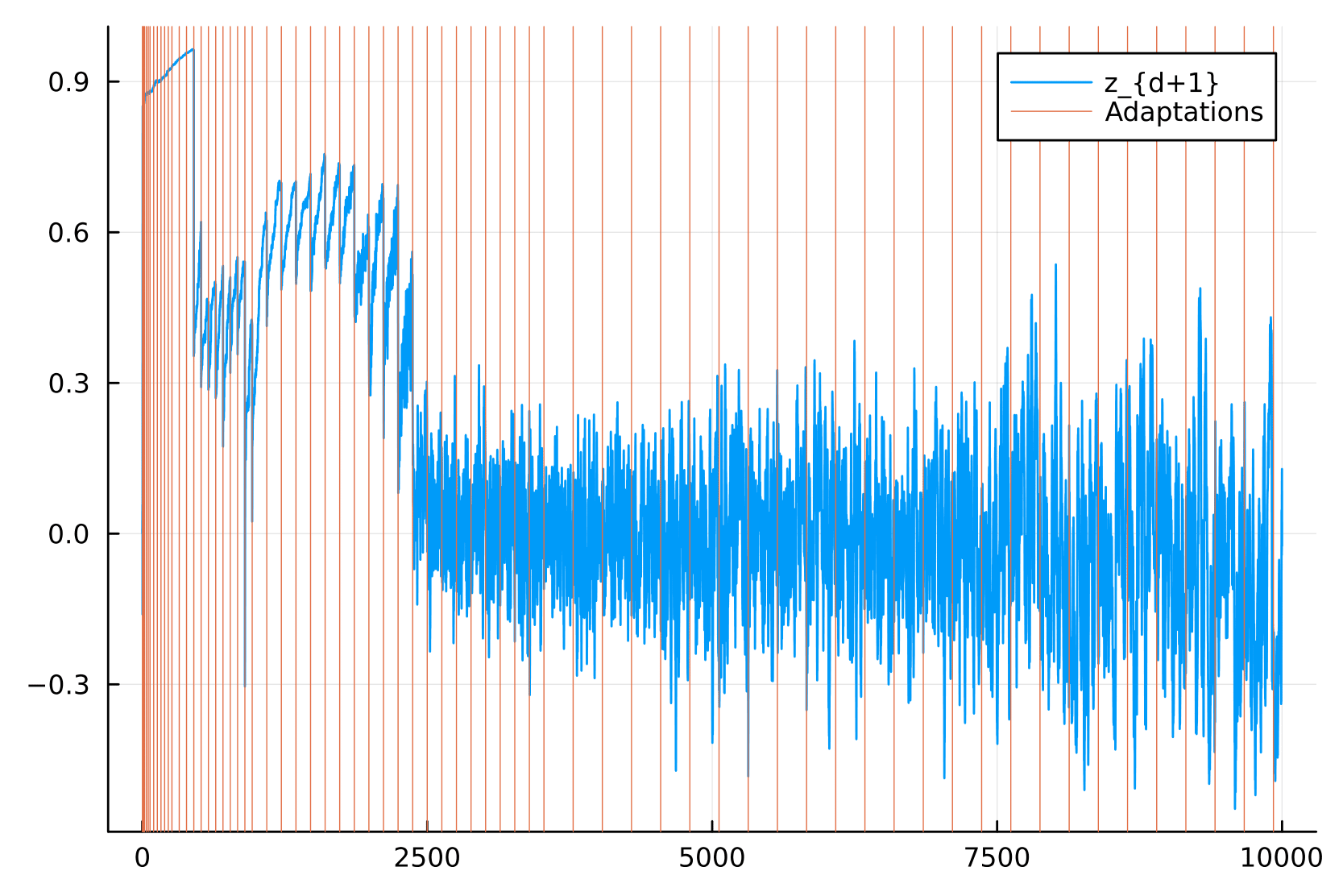}
            \caption{Plot of the latitude $Z_{d+1}$}
            \label{fig-asbps-z}
        \end{subfigure}
        \\[10pt]
        \centering
        \begin{subfigure}{5in}
            \centering
            \includegraphics[width=2in]{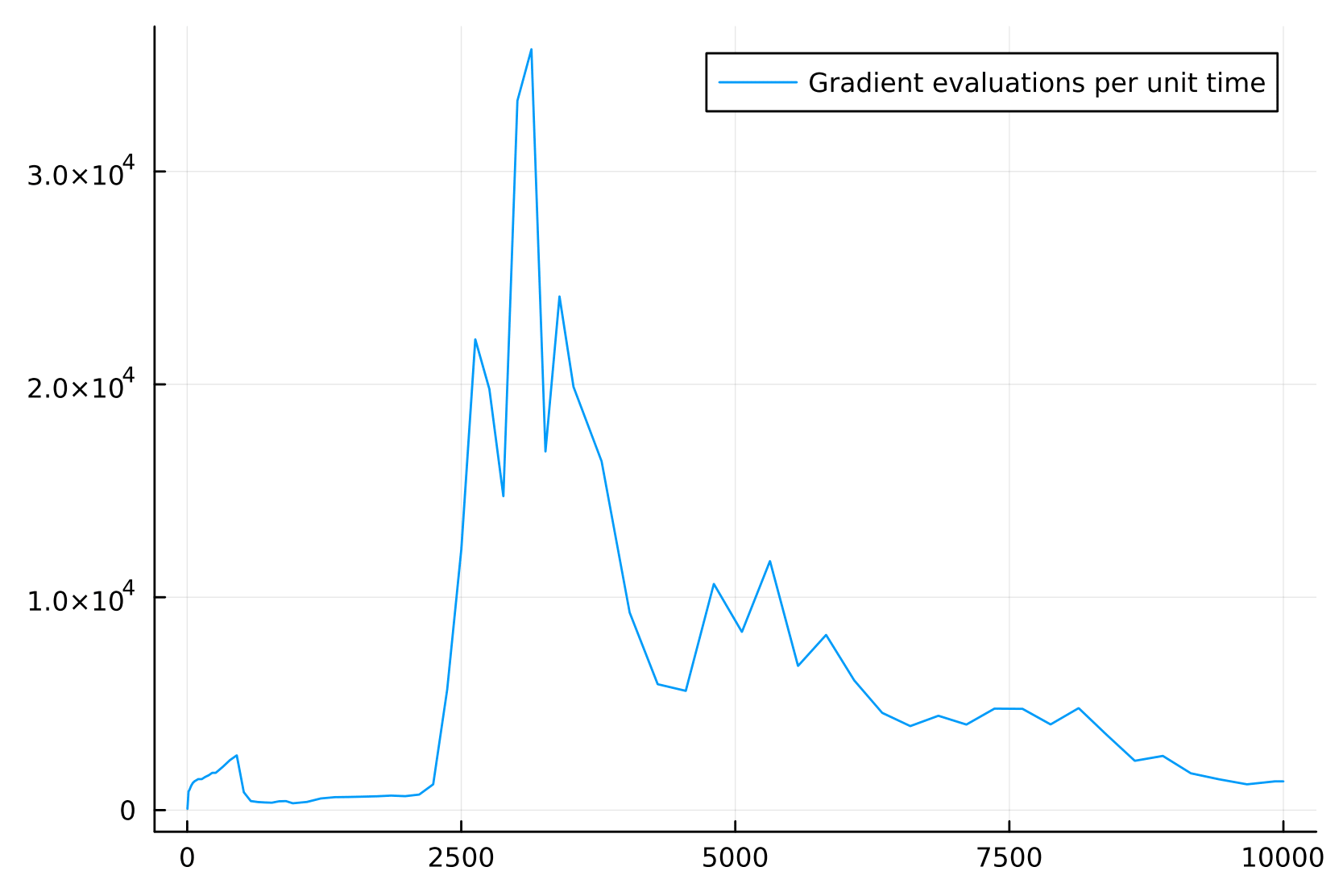}
            \caption{Average number of gradient evaluations per unit time over the run of the SBPS}
            \label{fig-asbps-eval}
        \end{subfigure}
    \caption{Adaptive SBPS sample paths targeting a MtD with $\nu = 2$, $d = 200$, and initial parameters $\mu = (1000, \dots, 1000)$, $\Sigma = d I_d$. We set $\lambda_\text{ref} = 1$ throughout. This run took approximately 1 hour and 40 minutes.}
    \label{fig-asbps}
\end{figure}

The HMC sample path in Figure \ref{fig-hmc} was essentially unable to make any progress in finding the mode of the target distribution. This is to be expected when the tails of the distribution are so heavy, and the dimension is relatively large.

The first thing the stereographic algorithms do in every case is to gradually make their way towards $N$, then expand the radius of the sphere through the adaptation scheme, then continue going towards $N$ (this can be seen by the plots of the trajectories of $Z_{d+1}$ in Figures \ref{fig-asrw-z}, \ref{fig-asss-z} and \ref{fig-asbps-z}). In the $X_1$ path, this exploration leads to widening oscillations more or less symmetrically around the initial parameter $\mu$.

The algorithms then take varying lengths of time to find the true mean of the target distribution. The SBPS takes longer to expand the sphere, but then finds the mean very quickly, as seen in Figure \ref{fig-asbps-x}. The SRW or SSS both expand the radius very quickly, then require more algorithm time to find the mean, as in Figure \ref{fig-asrw-x} or \ref{fig-asss-x}. 

From there, the behaviour of the algorithms is slightly different. For the SRW and SSS, once the mean is found, they are able to stay there and adapt the $\mu$ parameter to recenter the sphere around the true mean. The process then drops down from the North Pole, but may now need a long time to shrink the $\Sigma$ parameter back down to match the target, leading to a long time stuck at the South Pole. This is clearly seen in the plots of $Z_{d+1}$ in Figure \ref{fig-asrw-z} or \ref{fig-asss-z}.

For the SBPS, the latitude plot appears to suggest it skips this portion of the adaptations and immediately moves the mass onto the equator. However, it transpires that the algorithm may have successfully put the probability mass around $Z_{d+1} \approx 0$, but it had actually focused the target density onto a small corner of the equator: as the adaptations move $\mu$ from $(1000, \dots, 1000)$ to $0_d$, $\Sigma$ scales itself so that the probability mass becomes isolated onto a small subset of the equator. It takes a long time for the process to fully learn $\mu$, at which point $\Sigma$ can be updated to spread the probability mass evenly around the equator.

Since the estimators for the sample mean and covariance are still trying to balance a large number of points near $(1000, \dots, 1000)$ with the incoming samples near $0_d$, the estimators for $\Sigma$ become temporarily highly heterogeneous. This results in a spike in the computational cost of the SBPS just after the process gets near the true mode, as shown in Figure \ref{fig-asbps-eval}. A similar spike is present for the SSS, as seen in Figure \ref{fig-asss-prop}, but the computational cost only goes up by a factor of 2, not the factor of 30 seen in the SBPS. This once again highlights the computational cost of the SBPS when compared with the other algorithms, which was also seen to be a problem in the stationary regime in Section \ref{section-autocorrelation}.

Both of our simulation studies so far suggest that the SSS will outperform both the SRW and SBPS, partly because it is able to perform efficient global moves without the need for extra parametrisation, but also because it is very cheap to simulate: the overall runtime of the SSS was 15 minutes, as opposed to 1 hour 40 minutes for the SBPS or 6 hours 20 minutes for the SRW. For this reason, we currently recommend using the SSS over either of the other two algorithms.

However, these plots also suggest that the uninformed proposals in the SSS cause it to struggle to find the mode, whereas the SBPS is able to do so very quickly once it reaches the appropriate latitude. We therefore expect that certain scenarios must exist in which the SBPS's gradient-based dynamics allow it to outperform the SSS, though we have yet to find a setting in which this overcomes the computational cost attached to the SBPS.

\section{Discussion}

We have discussed the use of adaptive MCMC methods to automatically tune the parameters of our stereographic algorithms. The theoretical results also provide insight into other potential uniformly ergodic algorithms: the segment chain framework can be used going forward to prove convergence results for any adaptive MCMC algorithm based on a simultaneously uniformly ergodic collection of Markov kernels, particularly in the continuous-time setting. This addresses a gap in the current literature, where the focus has predominantly been on discrete-time algorithms.

Alongside theoretical convergence results, we have demonstrated our algorithms' ability to adapt to the target distribution and improve sampling efficiency in problems far more challenging than those covered by the theorems. These examples highlight the power of the stereographic projection as a way of helping sample from heavy-tailed distributions, since standard off-the-shelf methods (namely HMC) have great difficulty tackling these problems when they find themselves in the tails.

All three stereographic algorithms are able to target high-dimensional, heavy-tailed target distributions, and we demonstrated their ability to tune parameters when started in the tails. In terms of their relative mixing properties in the stationary regime, we see that the SSS performs best of the three algorithms. The SBPS is too computationally expensive to compete with it, whereas the SRW is not as robust as the SSS, requires extra tuning, and cannot as easily do global jumps outside of the perfect setting.

We conclude this paper with several further avenues of research:
\begin{itemize}    
    \item[\textit{SBPS Refreshment Rate:}] Several results exist discussing the behaviour of the Euclidean BPS algorithm as one varies the refreshment rate $\lambda_\text{ref}$ \cite{bierkens2017limit,bierkens2022high}. It would be of interest to analyse the SBPS using similar methods to better understand the implications of choices of $\lambda_\text{ref}$ on the mixing properties of the algorithm.
    
    \item[\textit{Other Stereographic Algorithms:}] We have seen that the stereographic projection naturally leads to uniformly ergodic algorithms, even when the target distribution is heavy-tailed. It is therefore natural to consider stereographic versions of other commonly used algorithms, namely MALA or HMC, potentially following the works of \cite{girolami2011riemann} on targeting distributions defined on Riemannian manifolds. However, early attempts at constructing these proposals led to many cases where the proposed moves ``fall off the sphere'' and must be rejected, indicating similar issues to cases where these algorithms have been used to target light-tailed distributions in Euclidean space. Table 1 of \cite{MR3911112} gives a good summary of these points. It would also be of interest to investigate more non-reversible stereographic algorithms, akin to the guided RWM from \cite{gustafson1998guided,kamatani2023non} or the discrete BPS from \cite{sherlock2022discrete}. These could be much cheaper than the SBPS whilst still inheriting the algorithmic benefits of non-reversibility.
    
    \item[\textit{Generalising the Segment Chain:}] Though the segment chain is an elegant way of proving results when the underlying Markov kernels are uniformly ergodic, this is generally too strong an assumption. It is therefore of interest to see whether a segment chain framework could be used under weaker assumptions of geometric or polynomial ergodicity. Initial attempts at doing similar strategies have yielded issues with bounding the return times of the T-skeleton to the small set, since results such as those in \cite{MR1379163} do not allow for bounds or constructions of small sets simultaneously over parameters.
    
    \item[\textit{More Simulation Studies:}] Finally, there is still much investigation required into the behaviour of the stereographic algorithms (adaptive or not) in cases where the target distributions are irregular. It would be of interest to do a more thorough overview of how the stereographic projection improves MCMC-related analysis in real-world applications where heavy-tailed distributions arise. We also have yet to find a case in which the SSS shows any reasonable flaw when compared to the other algorithms.
\end{itemize}

\begin{appendix}

\section{Further details}

\subsection{Excursions of the segment chain}
\label{section-excursions-details}

We discuss the regenerations and excursions of the split segment chain $\{ (\Phi_n, Y_n) \}_{n \in \mathbb{N}}$.

We will write $\nu^*$ for an arbitrary measure on $\widecheck{\Omega}$ such that, for $A \in \mathcal{B}(\mathbb{R}^d \times \mathbb{S}^{d-1})$ and $i =0,1$,
\begin{equation}
    \nu^*( \phi(T) \in A, Y =i ) = \epsilon^i (1-\epsilon)^{1-i} \nu(A),
    \label{equ-mu*}
\end{equation}
i.e., such that $\Phi \sim \nu^* \implies \Phi(T) \sim \nu$ and $Y \sim \text{Bern}(\epsilon)$.

Since the $Y_n$ updates are not affected by the current position $\Phi_n$, we can very easily identify the behaviour of our regeneration times. For $k \geq 1$, the hitting and return times of the atom are
\begin{equation}
    \begin{split}
        \widecheck{\sigma}_1 = \inf( n \geq 0 : Y_n =1), \qquad\widecheck{\sigma}_k = \inf( n > \widecheck{\sigma}_{k-1} : Y_n =1), \\
        \widecheck{\tau}_1 = \inf( n \geq 1 : Y_n =1), \qquad\widecheck{\tau}_k = \inf( n > \widecheck{\tau}_{k-1} : Y_n =1).
    \end{split}
    \label{equ-return-times}
\end{equation}
Alongside Lemma \ref{lemma-atom-1-dependence}, we can concisely write that $\{\Phi_l, Y_l\}_{l \geq \widecheck{\sigma}_k+2}$ has the same law as $\{\Phi_k, Y_k\}_{k \geq 1}$ started from $(\Phi_0,Y_0) \sim \nu^*$.

Thanks to uniform ergodicity, we can obtain the following explicit result on the distribution of the delayed renewal sequence $\{\widecheck{\sigma}_k\}_{k \in \mathbb{N}}$.
\begin{lemma}[Return Times to the Atom]
    For $\{ \Phi_n, Y_n \}_{n \in \mathbb{N}}$ the split segment chain with transition law $\widecheck{P}_\gamma$, let $\{ \widecheck{\sigma}_k \}_{k \in \mathbb{N}}$ be as given in Equation \eqref{equ-return-times}. Then the inter-arrival times $\widecheck{\sigma}_{n+1} - \widecheck{\sigma}_n$ for $n \geq 1$ are iid $\text{Geom}(\epsilon)$ random variables.
    \label{lemma-atom-return-times}
\end{lemma}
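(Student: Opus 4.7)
The plan is to exploit the observation made in the paper just before the lemma: the $Y_n$ updates in Equation \eqref{equ-split-segment-Y-update} are completely insensitive to the current state $\Phi_n$ and to the prior history $(\Phi_{0:n-1}, Y_{0:n-1})$. Specifically, we have
\begin{equation*}
    \widecheck{\mathbbm{P}}_\gamma\big( Y_n = 1 \;\big|\; \{ \Phi_k \}_{k=0}^n, \{ Y_k \}_{k=0}^{n-1} \big) = \epsilon
\end{equation*}
for every $n \geq 0$. This is the key structural fact; everything else in the lemma follows from elementary facts about iid Bernoulli sequences.

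The first step is to argue from the above identity that the marginal sequence $\{Y_n\}_{n \in \mathbbm{N}}$ is iid Bernoulli$(\epsilon)$. This is a straightforward induction: integrating the conditional probability against the joint law of $(\Phi_{0:n}, Y_{0:n-1})$ yields $\widecheck{\mathbbm{P}}_\gamma(Y_n = 1) = \epsilon$, and the same integration applied to any event depending only on $Y_{0:n-1}$ shows that $Y_n$ is independent of $\sigma(Y_0, \ldots, Y_{n-1})$. By a standard argument (e.g.\ Kolmogorov extension combined with pairwise independence giving mutual independence via conditional expectations), the $\{Y_n\}$ are mutually independent and identically distributed Bernoulli$(\epsilon)$.

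Once we have that $\{Y_n\}$ is iid Bernoulli$(\epsilon)$, the hitting times $\widecheck{\sigma}_k$ as defined in Equation \eqref{equ-return-times} are nothing more than the successive locations of $1$s in an iid Bernoulli$(\epsilon)$ sequence. For $n \geq 1$, conditional on $\widecheck{\sigma}_n = m$ (equivalently, on the configuration of $Y_0, \ldots, Y_m$ with $Y_m = 1$), the inter-arrival time $\widecheck{\sigma}_{n+1} - \widecheck{\sigma}_n$ is the waiting time until the next $1$ in the iid sequence $Y_{m+1}, Y_{m+2}, \ldots$. By independence and stationarity of the $Y_n$'s, this waiting time has law
\begin{equation*}
    \widecheck{\mathbbm{P}}_\gamma\big(\widecheck{\sigma}_{n+1} - \widecheck{\sigma}_n = j \;\big|\; \widecheck{\sigma}_n = m\big) = (1-\epsilon)^{j-1}\epsilon, \qquad j \geq 1,
\end{equation*}
so it is Geom$(\epsilon)$ and independent of $(Y_0, \ldots, Y_m)$, hence independent of all previous inter-arrival times $\widecheck{\sigma}_{k+1} - \widecheck{\sigma}_k$ for $k < n$.

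There is genuinely no major obstacle in this proof; the only mild subtlety is making precise the claim that the $Y_n$'s are iid (as opposed to merely conditionally Bernoulli given each history), but this is routine. The restriction to $n \geq 1$ in the statement simply reflects the fact that $\widecheck{\sigma}_1 = \inf(n \geq 0 : Y_n = 1)$ allows $\widecheck{\sigma}_1 = 0$, so $\widecheck{\sigma}_1$ itself is shifted-geometric on $\{0, 1, 2, \ldots\}$ rather than Geom$(\epsilon)$ on $\{1, 2, \ldots\}$; all subsequent gaps start from a known position where $Y = 1$ and so recover the standard Geom$(\epsilon)$ law.
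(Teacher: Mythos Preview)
Your proposal is correct and follows exactly the approach the paper intends: the paper does not give a formal proof of this lemma, but the sentence immediately preceding it (``Since the $Y_n$ updates are not affected by the current position $\Phi_n$, we can very easily identify the behaviour of our regeneration times'') is precisely the observation you exploit, namely that Equation~\eqref{equ-split-segment-Y-update} forces $\{Y_n\}_{n\geq 1}$ to be an iid $\text{Bern}(\epsilon)$ sequence independent of everything else, whence the gaps between successive $1$'s are iid $\text{Geom}(\epsilon)$. Your discussion of the minor subtlety around $Y_0$ and the shifted law of $\widecheck{\sigma}_1$ is also accurate and consistent with the paper's later remark (after Equation~\eqref{equ-L_i-expectation}) that a fixed $Y_0 = y$ slightly skews certain expectations but is asymptotically irrelevant.
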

Note in particular that since $\epsilon$ does not depend on $\gamma$, neither does the distribution of the inter-arrival times.

Finally, we can define the notion of excursions from the atom, allowing us to divide sample paths into 1-dependent, identically distributed blocks. It is convenient to define the regeneration times
\begin{equation}
    \widecheck{T}_k = \widecheck{\sigma}_k +1,
    \label{equ-regeneration-times}
\end{equation}
such that $\Phi_{\widecheck{T}_k}(T) \sim \nu$ independently of $\{ \Phi_n \}_{n \leq \widecheck{\sigma}_k}$ (and also of $\Phi_{\widecheck{T}_k}(0) = \Phi_{\widecheck{\sigma}_k}(T)$), and so $\Phi_{\widecheck{T}_k +1} \sim \mathbb{Q}_\nu(\cdot \; ; \gamma)$ independently of the pre-$\widecheck{\sigma}_k$ process, as in Lemma \ref{lemma-atom-1-dependence}.

For $k \geq 0$, we define the excursion paths
\begin{equation}
    \Psi_k = \{ \Phi_n, Y_n \}_{n=\widecheck{T}_k +1}^{\widecheck{T}_{k+1}}.
    \label{equ-excursion-path}
\end{equation}
These are random sequences in $\Omega$ of a.s.\ finite length, starting from $\Phi_{\widecheck{T}_k+1}(0) = \Phi_{\widecheck{T}_k}(T)~\sim~\nu$, and ending with $\Phi_{\widecheck{T}_{k+1}}(T) \sim \nu$. By the above discussion, we can see that the variables $\{ \Psi_k \}_{k \in \mathbb{N}}$ are identically distributed and 1-dependent. See Figure \ref{fig-excursions} for a visualisation of how $\widecheck{T}_k$, $\Phi_n$, $\nu$, and $\Psi_k$ all fit together.

We can use \cite[Lemma 5.2]{bednorz2008regeneration}, alongside the minorisation condition \eqref{equ-minorisation-general}, to find the distribution of $\Phi_{\widecheck{\sigma}_k}(T)$.
\begin{lemma}[Stationarity at Renewal Times]
    For $\nu$ and $\eta_\gamma$ as defined in Equation \eqref{equ-minorisation-general} and Equation \eqref{equ-split-remainder-measure}, we have that the unique stationary distribution $\pi_\gamma$ of the original process $X$ is given by
    \begin{equation*}
        \pi_\gamma(\cdot) = \epsilon \sum_{n=0}^\infty (1-\epsilon)^n \nu \eta_\gamma^n(\cdot).
    \end{equation*}
    As an immediate consequence, we get that $\Phi_{\widecheck{\sigma}_k}(T) \sim \pi_\gamma$, for every $k \geq 2$.
    \label{lemma-stationarity-renewals}
\end{lemma}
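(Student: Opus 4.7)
The plan is to prove the series formula for $\pi_\gamma$ by direct verification, then deduce the distributional claim from the regenerative structure already developed in the excerpt.

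For the first part, I rewrite the minorisation condition as the mixture decomposition $P_\gamma^T(x,\cdot) = \epsilon \nu(\cdot) + (1-\epsilon) \eta_\gamma(x,\cdot)$, so that for any probability measure $\mu$ on $\chi$ we have $\mu P_\gamma^T = \epsilon \nu + (1-\epsilon) \mu \eta_\gamma$. I take the candidate $\tilde{\pi}_\gamma := \epsilon \sum_{n=0}^\infty (1-\epsilon)^n \nu \eta_\gamma^n$ (which matches the statement up to an obvious reindexing), check that its total mass telescopes to $1$, and compute
$$\tilde{\pi}_\gamma P_\gamma^T = \epsilon \nu + (1-\epsilon)\tilde{\pi}_\gamma \eta_\gamma = \epsilon \nu + \epsilon \sum_{n=1}^\infty (1-\epsilon)^n \nu \eta_\gamma^n = \tilde{\pi}_\gamma.$$
Because the minorisation \eqref{equ-minorisation-general} implies uniform ergodicity of $P_\gamma$, and hence uniqueness of the stationary distribution, this identifies $\tilde{\pi}_\gamma = \pi_\gamma$.

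For the second part I track how $X_{\widecheck{\sigma}_k T}$ is assembled under the split construction. Fix $k \geq 2$. Since $Y_{\widecheck{\sigma}_{k-1}} = 1$, the transition defining $X_{\widecheck{T}_{k-1} T}$ draws independently from $\nu$ by Lemma \ref{lemma-atom-1-dependence}. By definition of $\widecheck{\sigma}_k$, we have $Y_n = 0$ for $\widecheck{T}_{k-1} \leq n < \widecheck{\sigma}_k$, so each subsequent transition up to step $\widecheck{\sigma}_k$ applies $\eta_\gamma$. The number of such transitions is $N_k := \widecheck{\sigma}_k - \widecheck{T}_{k-1} = (\widecheck{\sigma}_k - \widecheck{\sigma}_{k-1}) - 1$; by Lemma \ref{lemma-atom-return-times}, $N_k$ is geometric on $\{0, 1, 2, \ldots\}$ with $\mathbbm{P}(N_k = n) = \epsilon(1-\epsilon)^n$, and is independent of $X_{\widecheck{T}_{k-1} T}$. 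Conditioning on $N_k$ and summing gives
$$X_{\widecheck{\sigma}_k T} \sim \sum_{n=0}^\infty \epsilon (1-\epsilon)^n \nu \eta_\gamma^n = \pi_\gamma,$$
and since $\Phi_{\widecheck{\sigma}_k}(T) = X_{\widecheck{\sigma}_k T}$, this is the claim.

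The main subtlety is not analytical but combinatorial: correctly identifying $N_k$ as a geometric variable starting from $0$ (rather than from $1$, as with the interarrival $\widecheck{\sigma}_k - \widecheck{\sigma}_{k-1}$) and verifying that the chain of $\eta_\gamma$ transitions departs from the $\nu$-distributed state $X_{\widecheck{T}_{k-1} T}$ rather than from the original $X_{\widecheck{\sigma}_{k-1} T}$. Once these index offsets are pinned down, the stationarity calculation and the distributional identity drop out together, and no further technology is needed beyond the split-chain machinery already developed.
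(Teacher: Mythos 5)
Your proof is correct in substance and follows essentially the paper's intended route: the paper simply cites Lemma 5.2 of \cite{bednorz2008regeneration} for the series representation and notes that the law of $\Phi_{\widecheck{\sigma}_k}(T)$ follows from the geometric inter-arrival times, whereas you verify stationarity of the mixture directly (which is exactly the computation behind the cited lemma) and then run the same ``draw from $\nu$, then a geometric number of $\eta_\gamma$-steps'' argument. Two bookkeeping points need attention, though. First, your candidate $\epsilon\sum_{n\geq 0}(1-\epsilon)^n\nu\eta_\gamma^n$ is \emph{not} an ``obvious reindexing'' of the displayed formula: the printed series $\epsilon\sum_{n\geq 1}(1-\epsilon)^n\nu\eta_\gamma^{n-1}$ has total mass $1-\epsilon$, so your (correct, mass-one, stationary) measure differs from it by a factor $(1-\epsilon)$; you have silently corrected what is evidently a typo in the statement, and you should say so explicitly. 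Relatedly, your stationarity check is for the $T$-skeleton $P^T_\gamma$, so the identification $\tilde\pi_\gamma=\pi_\gamma$ should invoke uniqueness of the invariant law of the skeleton (which the minorisation gives) together with the fact that the stationary law of the original process is invariant for $P^T_\gamma$.

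Second, the identification $\Phi_{\widecheck{\sigma}_k}(T)=X_{\widecheck{\sigma}_k T}$ contradicts the definition \eqref{equ-segment-chain}, which gives $\Phi_{\widecheck{\sigma}_k}(T)=X_{(\widecheck{\sigma}_k+1)T}$. The source of the slip is that you track the skeleton split of Equation \eqref{equ-split-skeleton-transition}, where $Y_n$ governs the move $X_{nT}\to X_{(n+1)T}$, while $\widecheck{\sigma}_k$ and Lemma \ref{lemma-atom-1-dependence} refer to the split \emph{segment} chain of Equations \eqref{equ-split-segment-Y-update} and \eqref{equ-split-segment-transition}, where $Y_n$ governs the move $\Phi_n(T)\to\Phi_{n+1}(T)$, i.e.\ $X_{(n+1)T}\to X_{(n+2)T}$; in particular, given $Y_{\widecheck{\sigma}_{k-1}}=1$ it is $\Phi_{\widecheck{T}_{k-1}}(T)$, not $X_{\widecheck{T}_{k-1}T}$, that is $\nu$-distributed. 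Your two off-by-one shifts are in the same direction and cancel, so the conclusion is right; the clean version of your argument runs on the endpoint process $W_n:=\Phi_n(T)$: $Y_{\widecheck{\sigma}_{k-1}}=1$ gives $W_{\widecheck{T}_{k-1}}\sim\nu$, each $n$ with $\widecheck{\sigma}_{k-1}<n<\widecheck{\sigma}_k$ has $Y_n=0$ and hence $W_{n+1}\sim\eta_\gamma(W_n,\cdot)$, and exactly $\widecheck{\sigma}_k-\widecheck{\sigma}_{k-1}-1$ such steps occur before reaching $W_{\widecheck{\sigma}_k}=\Phi_{\widecheck{\sigma}_k}(T)$. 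Since the $Y_n$ are i.i.d.\ $\text{Bern}(\epsilon)$ and independent of the path dynamics, conditioning on this geometric count (supported on $\{0,1,2,\dots\}$, as you correctly note) is legitimate, and mixing over it gives $\Phi_{\widecheck{\sigma}_k}(T)\sim\epsilon\sum_{n\geq 0}(1-\epsilon)^n\nu\eta_\gamma^n=\pi_\gamma$. With these index corrections your proof is complete and matches the paper's argument.
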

The first statement is exactly Lemma 5.2 from \cite{bednorz2008regeneration}, but with our notation. The distribution of $\Phi_{\widecheck{\sigma}_k}(T)$ then follows from the geometric distribution of $\widecheck{\sigma}_k - \widecheck{\sigma}_{k-1}$.

\subsection{Assumptions on the Adaptation Times}
\label{section-times-assumptions}

Recall that in assumption \ref{assump-adaptation-times}, we included ``mild conditions'' on the adaptation lags $t_k$. These additional conditions arise via the construction of the segment chain $\{ \Phi_n \}_{n \in \mathbb{N}}$ featuring in our proofs. Since we wish to run the AIR framework on the discrete-time Markov process formed of segments of length $T$, it follows that we must adapt at integer multiples of $T$, so that $t_k = T \times n_k$. 

Thus, we see that we will require the lags in Theorem \ref{thm-fhat-asymptotics} to have a common divisor, at least for large $k$. This condition can be difficult to satisfy if $T$ is unknown, as is the case for the SBPS. Fortunately, since the minorisation condition \eqref{equ-minorisation-general} still holds if you increase $T$, we can consider a sequence of lags $t_k$ such that $\lim_{k \rightarrow \infty}( \text{gcd}(\{ t_j : j \geq k\}) ) = \infty$, such as
\begin{equation}
    t_k = \inf\left( 2^n : 2^n \geq k^\beta \right).
    \label{equ-lags-common-divisor}
\end{equation}
One can check that $t_k = \Theta(k^\beta)$, whilst having increasingly large common divisors. We can then apply the segment chain results once $t_k \geq T$, and consider the first part of the chain as asymptotically irrelevant.

\section{Proof of Theorem \ref{thm-air-segment-chain-1}}
\label{section-air-proof}

In this section, we prove our main result, Theorem \ref{thm-air-segment-chain-1}. We do this by bringing together the 1-dependent, identically distributed excursions $\Psi_k$ discussed in Appendix \ref{section-excursions-details} and controlling the variance of our estimators.

For $i \in \mathbb{N}$, we define the Markov chain $\{ H^{(i)}_j \}_{j \in \mathbb{N}}$ to have transition kernel $\widecheck{P}_{\gamma_i}$, such that
\begin{equation}
    H^{(i)}_j = (\Phi_{N_i+j}, Y_{N_i+j}),
    \label{equ-epoch-chains}
\end{equation}
for $j =0, \dots, n_{i+1} -1$, and $H^{(i)}_j$ evolves independently of $\{(\Phi_n,Y_n)\}_{n \geq N_{i+1}}$ for $j \geq n_{i+1}$.

As in \cite[Section 7]{chimisov2018air}, we let
\begin{equation}
    s_i = \sum_{j=N_i}^{N_{i+1} - 1} F(\Phi_j) = \sum_{j=0}^{n_{i+1}-1} F(H^{(i)}_j),
\end{equation}
where we abuse notation slightly and set $F(\phi,y) = F(\phi)$. Next, we let
\begin{equation}
    \begin{gathered}
        \sigma^{(i)}_1 = \inf( j \geq 0 : H^{(i)}_j \in \widecheck{\alpha}), \quad \sigma^{(i)}_{k+1} = \inf( j > \sigma^{(i)}_k : H^{(i)}_j \in \widecheck{\alpha}), \\
        T^{(i)}_k = \sigma^{(i)}_k +1,
    \end{gathered}
    \label{equ-refreshments-H}
\end{equation}
for $k \geq 1$ be the regeneration times for the $H^{(i)}$ process, defined analogously to $\widecheck{T}_k$ in \eqref{equ-regeneration-times}. We can therefore write these excursions as
\begin{equation}
    \Psi^{(i)}_j = \{ H^{(i)}_n \}_{n = T^{(i)}_j+1}^{T^{(i)}_{j+1}}.
    \label{equ-excursions-H}
\end{equation}
By Lemma \ref{lemma-atom-1-dependence}, these excursions are identically distributed for $i$ fixed and, conditionally on $\gamma_i$, they are 1-dependent.

We also define
\begin{equation}
    L_i = \inf(l \geq 1 : T^{(i)}_l \geq n_{i+1} -1) = 1+ \sum_{j=0}^{n_{i+1} -2} \mathbbm{1}(H^{(i)}_j \in \widecheck{\alpha}),
    \label{equ-leftover-excursion}
\end{equation}
such that the last excursion of $H^{(i)}$ to start before we decouple from $(\Phi,Y)$ ends at step $T^{(i)}_{L_i}$.

Finally, we can define the sums over excursions which we will be analysing
\begin{equation}
    \begin{split}
        \eta_i = \sum_{j=0}^{T^{(i)}_1} F(H^{(i)}_j), &\qquad \xi_i = \sum_{j = T^{(i)}_1 +1}^{T^{(i)}_{L_i}} F(H^{(i)}_j) \\
        \zeta_i = \sum_{j = n_{i+1}}^{T^{(i)}_{L_i}} F(H^{(i)}_j), &\qquad \xi_{i,j} = \sum_{m = T^{(i)}_j +1}^{T^{(i)}_{j+1}} F(H^{(i)}_m).
    \end{split}
    \label{equ-sum-terms}
\end{equation}
For fixed $i$, since $\xi_{i,j}$ are functions of $\Psi^{(i)}_j$ only, these are identically distributed, 1-dependent (conditionally on $\gamma_i$) random variables.

With these, we get that
\begin{equation}
    s_i = \sum_{j=N_i}^{N_{i+1} - 1} F(\Phi_j) = \eta_i + \xi_i - \zeta_i.
    \label{equ-sum-split}
\end{equation}
$\eta_i$ is a function of the $H^{(i)}$ path up until the first regeneration, and $\zeta_i$ is a function of the $H^{(i)}$ path from the point where we decouple from $(\Phi,Y)$ to the next regeneration. For large $i$, the vast majority of the sample path will be within the $\xi_i$ term. The majority of the asymptotic results will then follow by studying $\xi_i = \sum_{j=1}^{L_i-1} \xi_{i,j}$, with $\xi_{i,j}$ identically distributed, conditionally 1-dependent random variables.

For the empirical sums $S_N = \sum_{j=0}^{N-1} F(\Phi_j)$, we can find $k = k(N)$ such that $N_k \leq N < N_{k+1}$ and write
\begin{align}
    S_N =& \sum_{i=0}^k s_i + \sum_{j=N_k}^{N-1} F(\Phi_j) \notag \\
    =& \sum_{i=0}^{k-1} \eta_i + \sum_{i=0}^{k-1} \xi_i - \sum_{i=0}^{k-1} \zeta_i + \sum_{j=N_k}^{N-1} F(\Phi_j), \notag \\
    =& \Xi^{(1)}_k + \Xi^{(2)}_k + \Xi^{(3)}_k + \Xi^{(4)}_{k,N}.
    \label{equ-full-sum-split}
\end{align}

Since our estimator of interest is $\widehat{f}^\Phi_n = \frac{1}{n}S_n$, it only remains to study the asymptotic behaviour of $\Xi^{(m)}_k$, $m \in \{1,2,3\}$, and $\Xi^{(4)}_{k,N}$ as $N$ becomes large and the adaptive parameter $\gamma_i$ varies.

We follow the steps of the proofs found in \cite[Section 7]{chimisov2018air}. Our scenario has the advantage that, since our original minorisation condition \eqref{equ-minorisation-general} was uniform over the sample space, our ergodic atom has geometric return times. However, we have the difficulty that our $(\Phi,Y)$ regenerations are 2-step regenerations, rather than 1-step regenerations, which means that our excursions are 1-dependent, not independent. 

We write expectations of the form $\mathbb{E}^{\gamma_0}_{(\phi,y)}$ as expectations with initial value $(\phi,y)$ and initial adaptive parameter $\gamma_0$. If integrands are independent of the initial values $(\Phi_0, Y_0)$, we shall simply write $\mathbb{E}^\gamma$. The underlying adaptation scheme that updates $\gamma_k$ is then assumed to be included in the relevant expectations.

\subsection[Asymptotic Behaviour of Xi{(2)}k]{Asymptotic Behaviour of $\Xi^{(2)}_k$}

Since most of the sample path will be contained in the $\Xi^{(2)}_k$ term, we start by studying its properties. The other terms can be controlled relatively easily thanks to the boundedness of $f$ and the geometric tails of the distribution of the excursion lengths.

We define the asymptotic variance function as
\begin{equation}
    \sigma^2(\gamma) = \epsilon\mathbb{E}^\gamma(\xi_{1,1}^2) + 2\epsilon \mathbb{E}^\gamma ( \xi_{1,1} \xi_{1,2} ),
    \label{equ-chain-variance-def}
\end{equation}
not to be confused with the stopping times $\sigma^{(i)}_j$. We have that
\begin{equation}
    \begin{split}
        \mathbb{E}^\gamma(\xi_{1,1}^2) = \mathbb{E}^\gamma\left( \left[ \sum_{m = \widecheck{T}_1 + 1}^{\widecheck{T}_2} F(\Phi_m) \right]^2 \right) =& \mathbb{E}^\gamma\left( \left[ \sum_{m = \widecheck{\tau}_1 + 2}^{\widecheck{\tau}_2 +1} F(\Phi_m) \right]^2 \right) \\ 
        =& \mathbb{E}_{\nu^*}^\gamma\left( \left[ \sum_{m = 1}^{\widecheck{\sigma}_1+1} F(\Phi_m) \right]^2 \right),
    \end{split}
    \label{equ-xi^2-expectation}
\end{equation}
so since $f$ is bounded and $\widecheck{\sigma}_1 \sim \text{Geom}(\epsilon)$, we can use the Cauchy-Schwartz inequality to get
\begin{equation}
    \sup_{\gamma \in \Gamma} \sigma^2(\gamma) < \infty.
    \label{equ-bounded-variance-assumption}
\end{equation}

We start by giving explicit expressions for the moments of $\xi_i$.
\begin{lemma}[Moments of $\xi_i$]
    Consider $\xi_{i,j}$ and $\xi_i$ as given in Equation \eqref{equ-sum-terms}. Then, if $\mathbb{E}_{\pi_\gamma}(f(X)) = 0$,
    \begin{gather*}
        \mathbb{E}^{\gamma_0}( \xi_{i,j} \mid \gamma_i ) = \mathbb{E}^{\gamma_0}( \xi_i \mid \gamma_i ) = 0, \\[4pt]
        \mathbb{E}^{\gamma_0}( \xi_i^2 \mid \gamma_i) =(n_{i+1} -1) \sigma^2(\gamma_i) -2 \mathbb{E}^{\gamma_i} \left( \xi_{1,1} \xi_{1,2} \right), \\[4pt]
        \mathbb{E}^{\gamma_0}( \xi_i^4 \mid \gamma_i) = \mathcal{O}(n_{i+1}^2),
    \end{gather*}
    where the $\mathcal{L}^4$ bound holds uniformly over $\gamma_i$.
    \label{lemma-xi-2-moments}
\end{lemma}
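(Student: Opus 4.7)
The three claims will all follow from the regenerative structure of the split segment chain together with the assumption $\pi_{\gamma_i}(F) = 0$. The key structural facts I will use are: (i) the excursions $\{\Psi^{(i)}_j\}_{j \geq 1}$ are identically distributed and 1-dependent conditional on $\gamma_i$ (Lemma \ref{lemma-atom-1-dependence}); (ii) each excursion starts from $\Phi_{T^{(i)}_j+1}(0) \sim \nu$, freshly and independently of the pre-regeneration history, because the bridge endpoint is $\nu$-distributed regardless of the bridge's starting point; and (iii) $L_i - 1 = \sum_{j=0}^{n_{i+1}-2}\mathbbm{1}(Y^{(i)}_j = 1) \sim \mathrm{Binomial}(n_{i+1}-1,\epsilon)$ so in particular $\mathbb{E}(L_i - 1 \mid \gamma_i) = \epsilon(n_{i+1}-1)$.

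For the first-moment claims, I would first show $\mathbb{E}^{\gamma_i}(\xi_{i,1}) = 0$ via Kac's formula applied to the atom $\widecheck\alpha$: by Lemma \ref{lemma-stationarity-renewals}, the expected sum of $F$ over one return-time excursion equals $\mathbb{E}(\widecheck\tau)\cdot \pi_{\gamma_i}(F) = 0$. Identical distribution extends this to all $j \geq 1$. For $\xi_i = \sum_{k=1}^{L_i-1}\xi_{i,k}$, I would apply a Wald-type identity, which is valid because $\{L_i - 1 \geq k\}$ depends only on $(T^{(i)}_1, M_1, \ldots, M_{k-1})$ (earlier excursion lengths) while the $k$-th excursion's $\nu$-distributed starting point is independent of this filtration; hence $\mathbb{E}(\xi_{i,k}\mathbbm{1}(k \leq L_i-1) \mid \gamma_i) = \mathbb{P}(L_i - 1 \geq k)\cdot \mathbb{E}(\xi_{1,1} \mid \gamma_i) = 0$ for each $k$, and summing gives $\mathbb{E}(\xi_i \mid \gamma_i) = 0$.

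For the second-moment claim, I would expand $\xi_i^2 = \sum_{k=1}^{L_i-1}\xi_{i,k}^2 + 2\sum_{k < l}\xi_{i,k}\xi_{i,l}$, and use 1-dependence plus zero means to kill all cross terms with $|k-l| \geq 2$. The diagonal sum is handled by the same Wald-type argument as above, giving $\mathbb{E}(L_i - 1 \mid \gamma_i)\cdot \mathbb{E}^{\gamma_i}(\xi_{1,1}^2) = (n_{i+1}-1)\epsilon\,\mathbb{E}^{\gamma_i}(\xi_{1,1}^2)$. The adjacent cross-term sum $\sum_{k=1}^{L_i-2}\xi_{i,k}\xi_{i,k+1}$ yields $(\mathbb{E}(L_i - 1 \mid \gamma_i) - 1)\cdot \mathbb{E}^{\gamma_i}(\xi_{1,1}\xi_{1,2})$ by a Wald-type identity together with $\sum_{k \geq 1}\mathbb{P}(L_i - 1 \geq k+1) = \mathbb{E}(L_i - 1) - 1$. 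Combining these terms and using the definition $\sigma^2(\gamma_i) = \epsilon\,\mathbb{E}^{\gamma_i}(\xi_{1,1}^2) + 2\epsilon\,\mathbb{E}^{\gamma_i}(\xi_{1,1}\xi_{1,2})$ yields the claimed formula $(n_{i+1}-1)\sigma^2(\gamma_i) - 2\mathbb{E}^{\gamma_i}(\xi_{1,1}\xi_{1,2})$ exactly.

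For the fourth-moment bound, I would apply a Rosenthal-type inequality for sums of identically distributed, 1-dependent, mean-zero random variables. The key ingredient is that $|\xi_{i,k}| \leq M \cdot M_k$ with $M_k \sim \mathrm{Geom}(\epsilon)$ (and $M$ the bound on $|f|$), so that all excursion moments are finite and bounded uniformly in $\gamma \in \Gamma$ — importantly, the distribution of $M_k$ does not depend on $\gamma$. The Rosenthal-style bound for a sum of $N$ such variables is dominated by the term of order $N^2 \cdot (\mathbb{E}|\xi_{1,1}|^2)^2$, so taking $N \leq n_{i+1} - 1$ and using $\mathbb{E}(L_i) \leq n_{i+1}$ yields $\mathbb{E}^{\gamma_0}(\xi_i^4 \mid \gamma_i) = \mathcal{O}(n_{i+1}^2)$ with constant independent of $\gamma_i$. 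The main obstacle in the proof is justifying the Wald-type identities in the 1-dependent, stopping-time setting, since the event $\{L_i - 1 \geq k+1\}$ shares dependence with $\xi_{i,k}\xi_{i,k+1}$ through the length $M_k$; this is handled by conditioning separately on the preceding lengths $(T^{(i)}_1, M_1, \ldots, M_{k-1})$, which are independent of $(\xi_{i,k},\xi_{i,k+1},M_k)$, and then verifying that the resulting identity holds exactly when expressed in terms of the unconditional moments of a stationary excursion pair.
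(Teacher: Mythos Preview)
Your treatment of the first two claims follows the paper's argument closely: Kac's formula for $\mathbb{E}^{\gamma_i}(\xi_{i,1})=0$, then a Wald-type identity for $\xi_i$, and for the second moment the expansion into diagonal and adjacent cross terms with 1-dependence killing the rest. The paper is more explicit than you are about why the Wald-type identity for the adjacent cross term goes through: it does \emph{not} condition on preceding excursion lengths as you suggest, but instead splits $\xi_{i,j} = F(\Phi_{T^{(i)}_{j+1}}) + (\text{rest})$, observes that the ``rest'' and the indicator $\mathbbm{1}(\sigma^{(i)}_{j+1}\le n_{i+1}-3)$ are both measurable with respect to $\{H^{(i)}_m\}_{m\le \sigma^{(i)}_{j+1}}$ and hence independent of $\xi_{i,j+1}$, and then uses Lemma~\ref{lemma-stationarity-renewals} to show that the conditional law of the bridge segment $\Phi_{T^{(i)}_{j+1}}$ given $\mathcal{F}^Y_{\sigma^{(i)}_{j+1}}$ is the same for all $j$. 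Your proposed conditioning on $(T^{(i)}_1,M_1,\dots,M_{k-1})$ does not quite settle the issue, since $\{L_i-1\ge k+1\}$ also depends on $M_k$, which is correlated with $\xi_{i,k}$; you flag this as the main obstacle but your resolution is sketchier than the paper's.

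For the fourth moment you take a genuinely different route. The paper does not invoke a Rosenthal-type inequality: instead it bounds $\xi_i^4 \le \max_{0\le m\le n_{i+1}-1}\big(\sum_{j=1}^m \xi_{i,j}\big)^4$, splits the sum into odd- and even-indexed terms (each of which is a sum of conditionally i.i.d.\ mean-zero variables, hence a martingale), and applies Doob's $L^4$ maximal inequality to each piece. This simultaneously handles the 1-dependence and the random upper index $L_i-1$. Your Rosenthal approach would give the right order for a \emph{fixed}-length 1-dependent sum, but your proposal does not explain how to pass from that to the random-length sum $\sum_{j=1}^{L_i-1}\xi_{i,j}$; the sentence ``taking $N\le n_{i+1}-1$ and using $\mathbb{E}(L_i)\le n_{i+1}$'' conflates a deterministic bound on $L_i$ with its expectation and does not by itself yield the fourth-moment bound, since $L_i$ is not independent of the $\xi_{i,j}$. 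The paper's odd/even plus Doob device is precisely what bridges this gap.
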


The first statement follows from standard relations between excursions and stationary measures. The second statement follows from Wald's identity. The third statement follows from a martingale argument using Doob's inequality.

\begin{proof}[Proof of Lemma \ref{lemma-xi-2-moments}]
    For $i \geq 0$ fixed, recall that $\xi_{i,j}$ are identically distributed random variables. We therefore have that
    \begin{align*}
        \mathbb{E}^{\gamma_0}(\xi_{i,j} \mid \gamma_i) =& \mathbb{E}^{\gamma_0}\left( \sum_{m = T^{(i)}_j +1}^{T^{(i)}_{j+1}} F(H^{(i)}_m) \mathrel{} \middle| \mathrel{} \gamma_i \right) \\
        =& \mathbb{E}^{\gamma_i}\left( \sum_{m = \widecheck{T}_j +1}^{\widecheck{T}_{j+1}} F(\Phi_m) \right),
    \end{align*}
    must be constant over $j$. By Theorem 10.0.1 of \cite{meyn2012markov}, we have that
    \begin{equation*}
        \mathbb{E}_{\widecheck{\pi}_\gamma}^\gamma \left( \mathbbm{1}(Y_0=1) \sum_{m = 1}^{\widecheck{\tau}_1} F(\Phi_m) \right) = \mathbb{E}_{\widecheck{\pi}_\gamma}(F(\Phi)) = 0.
    \end{equation*}
    Writing $\widecheck{\pi} \rvert_{\widecheck{\alpha}}(\cdot) = \epsilon^{-1} \widecheck{\pi}_\gamma(\cdot \cap \{Y=1\})$, we rewrite this as
    \begin{equation}
        \mathbb{E}_{\widecheck{\pi} \rvert_{\widecheck{\alpha}}}^\gamma \left( \sum_{m = 1}^{\widecheck{\tau}_1} F(\Phi_m) \right) = \epsilon^{-1} \mathbb{E}_{\widecheck{\pi}_\gamma}^\gamma\left( \mathbbm{1}(Y_0=1) \sum_{m = 1}^{\widecheck{\tau}_1} F(\Phi_m) \right) = 0.
        \label{equ-centered-excursion-stationary}
    \end{equation}

    Exactly as in Equation \eqref{equ-xi^2-expectation}, for any initial distribution $\mu$,
    \begin{equation*}
        \mathbb{E}_\mu^\gamma\left( \sum_{m = \widecheck{T}_j + 1}^{\widecheck{T}_{j+1}} F(\Phi_m) \right)  = \mathbb{E}_{\nu^*}^\gamma\left( \sum_{m = 1}^{\widecheck{\sigma}_1+1} F(\Phi_m) \right).
    \end{equation*}
    Comparing this with Equation \eqref{equ-centered-excursion-stationary}, we see that
    \begin{align*}
        \mathbb{E}_{\nu^*}^\gamma\left( \sum_{m = 1}^{\widecheck{\sigma}_1+1} F(\Phi_m) \right) =& \mathbb{E}_{\widecheck{\pi} \rvert_{\widecheck{\alpha}}}^\gamma\left( \sum_{m = \widecheck{\sigma}_1 + 2}^{\widecheck{\sigma}_2 +1} F(\Phi_m) \right) \\[4pt]
        =& \mathbb{E}_{\widecheck{\pi} \rvert_{\widecheck{\alpha}}}^\gamma\left( \sum_{m = 1}^{\widecheck{\tau}_1} F(\Phi_m) \right) + \mathbb{E}_{\widecheck{\pi} \rvert_{\widecheck{\alpha}}}^\gamma\left( F(\Phi_{\widecheck{\sigma}_2 +1}) \right) - \mathbb{E}_{\widecheck{\pi} \rvert_{\widecheck{\alpha}}}^\gamma\left( F(\Phi_1) \right) \\[4pt]
        =& \mathbb{E}_{\widecheck{\pi} \rvert_{\widecheck{\alpha}}}^\gamma\left( F(\Phi_{\widecheck{T}_2}) \right) - \mathbb{E}_{\widecheck{\pi} \rvert_{\widecheck{\alpha}}}^\gamma\left( F(\Phi_1) \right).
    \end{align*}

    However, Lemma \ref{lemma-stationarity-renewals} tells us that, given $\Phi_0(T) \sim \pi_\gamma$ and $Y_0 = 1$, we have that $\Phi_1$ and $\Phi_{\widecheck{T}_2}$ have the same distribution. Hence,
    \begin{equation}
        \mathbb{E}_{\nu^*}^\gamma\left( \sum_{m = 1}^{\widecheck{\sigma}_1+1} F(\Phi_m) \right) = 0
        \quad \implies \quad \mathbb{E}^{\gamma_0}(\xi_{i,j} \mid \gamma_i) =0.
        \label{equ-xi_(i,j)-mean}
    \end{equation}

    For $\mathbb{E}^{\gamma_0}(\xi_i \mid \gamma_i) = \mathbb{E}^{\gamma_0}\left( \sum_{j=1}^{L_i -1} \xi_{i,j} \mathrel{}\middle|\mathrel{} \gamma_i \right)$, we use Wald's identity. For this, we require that
    \begin{equation}
        \mathbb{E}^{\gamma_0}( \xi_{i,j} \mathbbm{1}(L_i -1 \geq j) \mid \gamma_i ) = \mathbb{E}^{\gamma_0}(\xi_{i,j} \mid \gamma_i) \mathbb{P}^{\gamma_0}(L_i -1 \geq j),
        \label{equ-wald-condition}
    \end{equation}
    for every $j$ (recalling that $\sigma^{(i)}_j$ are independent of $\gamma_i$, so $L_i$ is also).

    However, $\{L_i -1 \geq j\} = \{ T^{(i)}_j < n_{i+1} -1 \} = \{ \sigma^{(i)}_j \leq n_{i+1} -3 \}$. Since $\xi_{i,j}$ is a function of the excursion $\Psi^{(i)}_j$, it is independent of the event $\{ \sigma^{(i)}_j \leq n_{i+1} -3 \}$, giving the required condition.
    
    Thus, we can apply Wald's identity to get that
    \begin{equation}
        \mathbb{E}^{\gamma_0}(\xi_i \mid \gamma_i) = \mathbb{E}^{\gamma_0}\left( \sum_{j=1}^{L_i -1} \xi_{i,j} \mathrel{}\middle|\mathrel{} \gamma_i \right) = \mathbb{E}^{\gamma_0}(L_i -1) \times \mathbb{E}^{\gamma_0}(\xi_{i,1} \mid \gamma_i) =0.
        \label{equ-wald-identity-xi}
    \end{equation}
    
    For $\mathbb{E}^{\gamma_0}( \xi_i^2 \mid \gamma_i)$, we follow an argument similar to the standard proof of Wald's identity:
    \begin{align*}
        \mathbb{E}^{\gamma_0}( \xi_i^2 \mid \gamma_i) =& \mathbb{E}^{\gamma_0} \left( \left[ \sum_{j=1}^{L_i-1} \xi_{i,j} \right]^2 \mathrel{}\middle|\mathrel{} \gamma_i \right) \\
        =& \mathbb{E}^{\gamma_0} \left( \left[ \sum_{j=1}^\infty \xi_{i,j} \mathbbm{1}(\sigma^{(i)}_j \leq n_{i+1} -3) \right]^2 \mathrel{}\middle|\mathrel{} \gamma_i \right) \\
        =& \sum_{j=1}^\infty \mathbb{E}^{\gamma_0} \left( \xi_{i,j}^2 \mathbbm{1}(\sigma^{(i)}_j \leq n_{i+1} -3) \mathrel{}\middle|\mathrel{} \gamma_i \right) \\
        &+ 2 \sum_{j <k} \mathbb{E}^{\gamma_0} \left( \xi_{i,j} \xi_{i,k} \mathbbm{1}(\sigma^{(i)}_j \leq n_{i+1} -3) \mathbbm{1}(\sigma^{(i)}_k \leq n_{i+1} -3) \mathrel{}\middle|\mathrel{} \gamma_i \right) \\
        =& \sum_{j=1}^\infty \mathbb{E}^{\gamma_0} \left( \xi_{i,j}^2 \mathbbm{1}(\sigma^{(i)}_j \leq n_{i+1} -3) \mathrel{}\middle|\mathrel{} \gamma_i \right) \\
        &+ 2 \sum_{j <k} \mathbb{E}^{\gamma_0} \left( \xi_{i,j} \xi_{i,k} \mathbbm{1}(\sigma^{(i)}_k \leq n_{i+1} -3) \mathrel{}\middle|\mathrel{} \gamma_i \right),
    \end{align*}
    since $j <k \implies \sigma^{(i)}_j < \sigma^{(i)}_k$. The first sum, using Wald's identity, can be simplified to
    \begin{equation}
        \sum_{j=1}^\infty \mathbb{E}^{\gamma_0} \left( \xi_{i,j}^2 \mathbbm{1}(\sigma^{(i)}_j \leq n_{i+1} -3) \mathrel{}\middle|\mathrel{} \gamma_i \right) = \mathbb{E}^{\gamma_0}(L_i -1) \times \mathbb{E}^{\gamma_0}(\xi_{i,1}^2 \mid \gamma_i ).
        \label{equ-wald-identity-xi^2}
    \end{equation}
    For the second term, if we condition on the $\sigma$-algebra $\mathcal{F}_{T^{(i)}_{j+1}}$ and $k > j+1$, then the 1-dependent structure of the excursions gives that $ \xi_{i,k}$ is conditionally independent of $\xi_{i,j} \mathbbm{1}(\sigma^{(i)}_k \leq n_{i+1} -3)$ and these expectations must be 0. Thus,
    \begin{equation}
        \sum_{j <k} \mathbb{E}^{\gamma_0} \left( \xi_{i,j} \xi_{i,k} \mathbbm{1}(\sigma^{(i)}_k \leq n_{i+1} -3) \mathrel{}\middle|\mathrel{} \gamma_i \right) = \sum_{j =1}^\infty \mathbb{E}^{\gamma_0} \left( \xi_{i,j} \xi_{i,j+1} \mathbbm{1}(\sigma^{(i)}_{j+1} \leq n_{i+1} -3) \mathrel{}\middle|\mathrel{} \gamma_i \right).
        \label{equ-xi_j-xi_k-covariance}
    \end{equation}

    Since $\xi_{i,j+1}$ is independent of $\{ H^{(i)}_m \}_{m =0}^{\sigma^{(i)}_{j+1}}$ conditionally on $\gamma_i$, we have that
    \begin{align*}
        \mathbb{E}^{\gamma_0} &\left( \xi_{i,j} \xi_{i,j+1} \mathbbm{1}(\sigma^{(i)}_{j+1} \leq n_{i+1} -3) \mathrel{}\middle|\mathrel{} \gamma_i \right) \\
        =& \mathbb{E}^{\gamma_0} \left( F\left( \Phi_{T^{(i)}_{j+1}} \right) \xi_{i,j+1} \mathbbm{1}(\sigma^{(i)}_{j+1} \leq n_{i+1} -3) \mathrel{}\middle|\mathrel{} \gamma_i \right) \\ 
        &+ \mathbb{E}^{\gamma_0}( \xi_{i,j+1} \mid \gamma_i ) \mathbb{E}^{\gamma_0}\left(\mathbbm{1}(\sigma^{(i)}_{j+1} \leq n_{i+1} -3)\left[ \xi_{i,j}- F\left( \Phi_{T^{(i)}_{j+1}} \right) \right] \mathrel{}\middle|\mathrel{} \gamma_i \right) \\
        =& \mathbb{E}^{\gamma_0} \left( F\left( \Phi_{T^{(i)}_{j+1}} \right) \xi_{i,j+1} \mathbbm{1}(\sigma^{(i)}_{j+1} \leq n_{i+1} -3) \mathrel{}\middle|\mathrel{} \gamma_i \right).
    \end{align*}
    However, if we condition on the $\sigma$-algebra $\mathcal{F}^Y_{\sigma^{(i)}_{j+1}}$, the $\sigma$-algebra generated by the $Y$ components of the sequence $H^{(i)}_m$ up to the stopping time $\sigma^{(i)}_{j+1}$, then from Lemma \ref{lemma-stationarity-renewals} we know that the distribution of $\Phi_{T^{(i)}_{j+1}}$ is a bridge segment starting according to $\pi_\gamma$ and ending according to $\nu$. This conditional expectation is therefore constant over $j$. Thus, we can write
    \begin{align*}
        \mathbb{E}^{\gamma_0} \left( \xi_{i,j} \xi_{i,j+1} \mathbbm{1}(\sigma^{(i)}_{j+1} \leq n_{i+1} -3) \mathrel{}\middle|\mathrel{} \gamma_i \right) =& \mathbb{E}^{\gamma_0} \left( F\left( \Phi_{T^{(i)}_{j+1}} \right) \xi_{i,j+1} \mathrel{}\middle|\mathrel{} \gamma_i \right) \\
        &\times \mathbb{P}^{\gamma_0}(\sigma^{(i)}_{j+1} \leq n_{i+1} -3) \\
        =& \mathbb{E}^{\gamma_0} \left( \xi_{i,j} \xi_{i,j+1} \mathrel{}\middle|\mathrel{} \gamma_i \right) \mathbb{P}^{\gamma_0}(L_i - 1 \geq j+1),
    \end{align*}
    since $\xi_{i,j+1}$ is conditionally independent of $\{ H^{(i)}_m \}_{m =0}^{\sigma^{(i)}_{j+1}}$.
    
    Thus, we get that
    \begin{equation}
        \sum_{j =1}^\infty \mathbb{E}^{\gamma_0} \left( \xi_{i,j} \xi_{i,j+1} \mathbbm{1}(\sigma^{(i)}_{j+1} \leq n_{i+1} -3) \mathrel{}\middle|\mathrel{} \gamma_i \right) = \mathbb{E}^{\gamma_0} \left( \xi_{i,1} \xi_{i,2} \mathrel{}\middle|\mathrel{} \gamma_i \right) \mathbb{E}^{\gamma_0}(L_i - 2).
        \label{equ-wald-identity-xi_j-xi_j+1}
    \end{equation}

    Combining Equations \eqref{equ-wald-identity-xi^2} and \eqref{equ-wald-identity-xi_j-xi_j+1} gives
    \begin{equation}
        \begin{split}
            \mathbb{E}^{\gamma_0}( \xi_i^2 \mid \gamma_i ) =& \mathbb{E}^{\gamma_0}(\xi_{i,1}^2\mid \gamma_i ) \times \mathbb{E}^{\gamma_0}(L_i -1) \\
            &+ 2 \mathbb{E}^{\gamma_0} \left( \xi_{i,1} \xi_{i,2} \mid \gamma_i \right) \times \mathbb{E}^{\gamma_0}(L_i - 2).
        \end{split}
        \label{equ-exp(xi^2)-decomp}
    \end{equation}
    For $\mathbb{E}^{\gamma_0}(L_i)$, we have that
    \begin{equation}
        \mathbb{E}^{\gamma_0}(L_i) = 1 + \sum_{j=0}^{n_{i+1} -2} \mathbb{P}^{\gamma_0}(H^{(i)}_j \in \widecheck{\alpha}) = 1+ \epsilon (n_{i+1} -1),
        \label{equ-L_i-expectation}
    \end{equation}
    for $i \geq 2$. The fixed value of $Y_0 =y$ will slightly skew the expectation of $L_1$, but this is asymptotically irrelevant, or can be removed by taking $Y_0 \sim \text{Bern}(\epsilon)$.

    We therefore get that
    \begin{align}
        \mathbb{E}^{\gamma_0}( \xi_i^2 \mid \gamma_i ) =& \epsilon (n_{i+1} -1) \mathbb{E}^{\gamma_0}(\xi_{i,1}^2 \mid \gamma_i ) \notag \\
        &+ 2 [\epsilon (n_{i+1} -1) - 1] \mathbb{E}^{\gamma_0} ( \xi_{i,1} \xi_{i,2} \mid \gamma_i ) \notag \\[4pt]
        =& (n_{i+1} -1) \sigma^2(\gamma_i) -2 \mathbb{E}^{\gamma_i} ( \xi_{1,1} \xi_{1,2} ).
        \label{equ-exp(xi^2)-decomp-2}
    \end{align}

    This leaves $\mathbb{E}^{\gamma_0}( \xi_i^4 \mid \gamma_i )$, for which we rely on a crude upper bound using Doob's $\mathcal{L}^p$ inequality. We have that
    \begin{equation}
        \begin{split}
            \mathbb{E}^{\gamma_0}( \xi_i^4 \mid \gamma_i ) =& \mathbb{E}^{\gamma_0}\left( \left[ \sum_{j=1}^{L_i-1} \xi_{i,j} \right]^4 \mathrel{}\middle|\mathrel{} \gamma_i \right) \\
            \leq& \mathbb{E}^{\gamma_0}\left( \max_{m = 0, \dots n_{i+1} -1} \left[ \sum_{j=1}^m \xi_{i,j} \right]^4 \mathrel{}\middle|\mathrel{} \gamma_i \right).
        \end{split}
        \label{equ-xi^4-max-bound}
    \end{equation}
    Since $\xi_{i,j}$ are conditionally 1-dependent, $\{ \sum_{j=1}^m \xi_{i,j} \}_{m \geq 0}$ is not a martingale, even conditionally on $\gamma_i$. As in \cite[Section 4]{bednorz2008regeneration}, we can divide this sum into odd and even terms since
    \begin{equation}
        \left[ \sum_{j=1}^m \xi_{i,j} \right]^4 \leq 2^3 \left( \left[ \sum_{\substack{{j=1} \\ {j \text{ odd}}}}^m \xi_{i,j} \right]^4 + \left[ \sum_{\substack{{j=1} \\ {j \text{ even}}}}^m \xi_{i,j} \right]^4 \right),
        \label{equ-xi^4-odds-evens}
    \end{equation}
    by Jensen's inequality. These two sums are martingales because, conditionally on $\gamma_i$, $\{ \xi_{i,2j} \}_{j \geq 1}$ and $\{ \xi_{i,2j-1} \}_{j \geq 1}$ are iid sequences of mean $0$ random variables. So we get that
    \begin{align}
        \mathbb{E}^{\gamma_0}( \xi_i^4 \mid \gamma_i ) \leq& 8\mathbb{E}^{\gamma_0}\left( \max_{m = 0, \dots n_{i+1} -1} \left[ \sum_{\substack{{j=1} \\ {j \text{ odd}}}}^m \xi_{i,j} \right]^4 \mathrel{}\middle|\mathrel{} \gamma_i \right) \notag \\
        &+ 8\mathbb{E}^{\gamma_0}\left( \max_{m = 0, \dots n_{i+1} -1} \left[ \sum_{\substack{{j=1} \\ {j \text{ even}}}}^m \xi_{i,j} \right]^4 \mathrel{}\middle|\mathrel{} \gamma_i \right) \notag \\
        \leq& 8 \left( \frac{4}{3} \right)^4 \times \mathbb{E}^{\gamma_0}\left( \left[ \sum_{\substack{{j=1} \\ {j \text{ odd}}}}^{n_{i+1}-1} \xi_{i,j} \right]^4 + \left[ \sum_{\substack{{j=1} \\ {j \text{ even}}}}^{n_{i+1}-1} \xi_{i,j} \right]^4 \mathrel{}\middle|\mathrel{} \gamma_i \right),
        \label{equ-doob-inequality}
    \end{align}
    by Doob's $\mathcal{L}^p$ inequality.

    By conditional independence, we have that
    \begin{align*}
        \mathbb{E}^{\gamma_0}\left( \left[ \sum_{\substack{{j=1} \\ {j \text{ odd}}}}^{n_{i+1}-1} \xi_{i,j} \right]^4 \mathrel{}\middle|\mathrel{} \gamma_i \right) =& \left\lceil \frac{n_{i+1} - 1}{2} \right\rceil \mathbb{E}^{\gamma_i}(\xi_{1,1}^4) \\
        +& \left\lceil \frac{n_{i+1} - 1}{2} \right\rceil \left( \left\lceil \frac{n_{i+1} - 1}{2} \right\rceil -1 \right) \mathbb{E}^{\gamma_i}(\xi_{1,1}^2)^2, \\
        \mathbb{E}^{\gamma_0}\left( \left[ \sum_{\substack{{j=1} \\ {j \text{ even}}}}^{n_{i+1}-1} \xi_{i,j} \right]^4 \mathrel{}\middle|\mathrel{} \gamma_i \right) =& \left\lfloor \frac{n_{i+1} - 1}{2} \right\rfloor \mathbb{E}^{\gamma_i}(\xi_{1,1}^4) \\
        &+ \left\lfloor \frac{n_{i+1} - 1}{2} \right\rfloor \left( \left\lfloor \frac{n_{i+1} - 1}{2} \right\rfloor -1 \right) \mathbb{E}^{\gamma_i}(\xi_{1,1}^2)^2.
    \end{align*}
    Substituting these into Equation \eqref{equ-doob-inequality} simplifies to give
    \begin{align}
        \mathbb{E}^{\gamma_0}( \xi_i^4 \mid \gamma_i ) \leq& \frac{2048}{81} \left( (n_{i+1} -1) \mathbb{E}^{\gamma_i}(\xi_{1,1}^4) + \frac{(n_{i+1} - 2)^2}{2} \mathbb{E}^{\gamma_i}(\xi_{1,1}^2)^2 \right) \notag \\
        =& \mathcal{O}(n_{i+1}^2),
        \label{equ-xi^4-expectation}
    \end{align}
    uniformly over $\gamma_i$, assuming that $\sup_{\gamma \in \Gamma} \left( \mathbb{E}^\gamma( \xi_{1,1}^4 ) \right) < +\infty$, which follows from the fact that $f$ is bounded and $\widecheck{\sigma}_1 \sim \text{Geom}(\epsilon)$.
\end{proof}

For the asymptotic results, the vast majority of our process is contained within the $\Xi^{(2)}_k$ terms, so we shall start by considering those.
\begin{lemma}[$\mathcal{L}^2$ Convergence of $\Xi^{(2)}$]
    Consider $\Xi^{(2)}_k$ as given in Equation \eqref{equ-full-sum-split}. Then, assuming $\mathbb{E}_{\pi_\gamma} (f(X)) = 0$ for all $\gamma \in \Gamma$, we have that $\forall (\phi,y) \in \widecheck{\Omega}$ and $\gamma_0 \in \Gamma$,
    \begin{equation*}
        \mathbb{E}^{\gamma_0}\left( \Xi^{(2)}_k \right) = 0,
    \end{equation*}
    and
    \begin{equation*}
        \mathbb{E}^{\gamma_0}\left( \left[\frac{1}{N_K} \Xi^{(2)}_k \right]^2 \right) = \mathcal{O}\left( \frac{1}{N_k} \right),
    \end{equation*}
    as $k \rightarrow +\infty$.
    \label{lemma-L2-Xi2}
\end{lemma}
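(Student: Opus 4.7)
The plan is to expand $\Xi^{(2)}_k = \sum_{i=0}^{k-1} \xi_i$ and control both the diagonal and off-diagonal terms using Lemma \ref{lemma-xi-2-moments}. The first claim is immediate: since $\mathbb{E}^{\gamma_0}(\xi_i \mid \gamma_i) = 0$ by Lemma \ref{lemma-xi-2-moments}, the tower property gives $\mathbb{E}^{\gamma_0}(\xi_i) = 0$, and summing over $i$ yields $\mathbb{E}^{\gamma_0}(\Xi^{(2)}_k) = 0$. For the second claim, I would expand
\[
\mathbb{E}^{\gamma_0}\big((\Xi^{(2)}_k)^2\big) = \sum_{i=0}^{k-1} \mathbb{E}^{\gamma_0}(\xi_i^2) + 2\sum_{0 \leq i < j \leq k-1} \mathbb{E}^{\gamma_0}(\xi_i \xi_j),
\]
and treat the two sums separately.

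For the cross terms, the key observation is that $\xi_j$ has a vanishing conditional mean given everything observable before epoch $j$ begins. More precisely, let $\mathcal{G}_j$ denote the $\sigma$-algebra generated by the full history of $(\Phi_n, Y_n)$ up to step $N_j$ together with $\gamma_j$ (which is $\mathcal{G}_j$-measurable under any reasonable adaptation scheme). For $i < j$, $\xi_i$ is $\mathcal{G}_j$-measurable, and by construction $\xi_j$ is a functional of the epoch-$j$ chain $H^{(j)}$ that depends only on segments strictly after its first regeneration $T^{(j)}_1$; by Lemma \ref{lemma-atom-1-dependence}, at this regeneration the chain is refreshed via $\nu^*$ independently of its starting state, so conditioning on $\mathcal{G}_j$ collapses to conditioning on $\gamma_j$ alone for the distribution of $\xi_j$. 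Lemma \ref{lemma-xi-2-moments} then yields $\mathbb{E}^{\gamma_0}(\xi_j \mid \mathcal{G}_j) = 0$, and the tower property gives $\mathbb{E}^{\gamma_0}(\xi_i \xi_j) = 0$.

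For the diagonal terms, Lemma \ref{lemma-xi-2-moments} gives $\mathbb{E}^{\gamma_0}(\xi_i^2 \mid \gamma_i) = (n_{i+1} - 1)\sigma^2(\gamma_i) - 2\mathbb{E}^{\gamma_i}(\xi_{1,1}\xi_{1,2})$. Both terms are uniformly bounded over $\gamma \in \Gamma$: $\sigma^2(\gamma)$ is bounded by \eqref{equ-bounded-variance-assumption}, and the Cauchy--Schwarz inequality together with the uniform $\mathcal{L}^2$ bound on $\xi_{1,1}$ handles the covariance. Hence $\mathbb{E}^{\gamma_0}(\xi_i^2) = O(n_{i+1})$ uniformly. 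Summing, $\mathbb{E}^{\gamma_0}((\Xi^{(2)}_k)^2) = \sum_{i=0}^{k-1} O(n_{i+1}) = O(N_k)$, and dividing by $N_k^2$ gives the desired $O(1/N_k)$ rate.

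The main obstacle is justifying that the cross terms vanish despite the adaptive feedback: a priori $\xi_i$ and $\xi_j$ are coupled through the adaptive parameter sequence. The conceptual payoff of the segment chain construction is that each $\xi_i$ is built from excursions starting after a regeneration, so its entire distribution given the past factors through $\gamma_i$ alone. Care is only needed to confirm that the adaptation scheme is adapted to the natural filtration of the segment chain, which is built into the AIR setup of Section \ref{section-adaptive-stereographic}.
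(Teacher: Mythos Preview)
Your proposal is correct and follows essentially the same approach as the paper: expand the square, kill the cross terms $\mathbb{E}^{\gamma_0}(\xi_i\xi_j)$ via a conditioning argument, and bound the diagonal terms using Lemma~\ref{lemma-xi-2-moments} together with the uniform bound \eqref{equ-bounded-variance-assumption}. If anything, you supply more detail than the paper does on why the cross terms vanish---the paper simply asserts this ``follows from a conditioning argument''---and your justification via the regeneration structure and the filtration $\mathcal{G}_j$ is exactly the right one.
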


The proof of this Lemma follows relatively easily from Lemma \ref{lemma-xi-2-moments}.

\begin{proof}[Proof of Lemma \ref{lemma-L2-Xi2}]
    The fact that $\mathbb{E}^{\gamma_0}\left( \Xi^{(2)}_k \right) = 0$ follows directly from the fact that $\mathbb{E}^{\gamma_0}( \xi_i) = 0$.
    
    For $\mathbb{E}^{\gamma_0}\left( \left[\frac{1}{N_k} \Xi^{(2)}_k \right]^2 \right)$, we start by noting that
    \begin{equation}
        \mathbb{E}^{\gamma_0}( \xi_i \xi_j ) =0,
        \label{equ-xi_i-uncorrelated}
    \end{equation}
    for every $i \neq j$, which follows from a conditioning argument. Thus,
    \begin{equation}
        \mathbb{E}^{\gamma_0}\left( \left( \Xi^{(2)}_k \right)^2 \right) = \sum_{i=0}^{k-1} \mathbb{E}^{\gamma_0}\left( \xi_i^2 \right).
        \label{equ-Xi2-sum-xi^2}
    \end{equation}
    
    Since $\sigma^2(\gamma)$ and $\mathbb{E}^\gamma(\xi_{1,1}^2)$ are bounded uniformly over $\gamma \in \Gamma$, we have that
    \begin{align}
        \mathbb{E}^{\gamma_0}( \xi_i^2 ) \leq& (n_{i+1} -1) \mathbb{E}^{\gamma_0}\left( \sigma^2(\gamma_i) \right) + 2 \left\lvert \mathbb{E}^{\gamma_0} \left( \xi_{i,1} \xi_{i,2} \right) \right\rvert \notag \\[4pt]
        \leq& (n_{i+1} -1) \sup_{\gamma \in \Gamma} \left( \sigma^2(\gamma) \right) + 2 \sup_{\gamma \in \Gamma} \left( \mathbb{E}^\gamma(\xi_{1,1}^2) \right).
        \label{equ-xi^2-expectation-bound}
    \end{align}
    By Equation \eqref{equ-Xi2-sum-xi^2}, this gives
    \begin{equation}
        \begin{split}
            \mathbb{E}^{\gamma_0}\left( \left( \Xi^{(2)}_k \right)^2 \right) = \sum_{i=0}^{k-1} \mathbb{E}^{\gamma_0}\left( \xi_i^2 \right) \leq& (N_k -k) \sup_{\gamma \in \Gamma} \left( \sigma^2(\gamma) \right) \\
            &+ 2k \sup_{\gamma \in \Gamma} \left( \mathbb{E}^\gamma(\xi_{1,1}^2) \right),
        \end{split}
        \label{equ-Xi^2-expectation-bound}
    \end{equation}
    so
    \begin{align}
        \mathbb{E}^{\gamma_0}\left( \left( \frac{1}{N_k} \Xi^{(2)}_k \right)^2 \right) \leq& \sup_{\gamma \in \Gamma} \left( \sigma^2(\gamma) \right) \left[ \frac{k}{N_k^2} +  \frac{1}{N_k} \right] \notag \\[4pt]
        &+ \frac{2k}{N_k^2} \sup_{\gamma \in \Gamma} \left( \mathbb{E}^\gamma(\xi_{1,1}^2) \right) \notag \\[4pt]
        =& \mathcal{O}\left( \frac{1}{N_k} \right), \label{equ-Xi^2/N-order}
    \end{align}
    as required, since $N_k \geq k$.
\end{proof}

With this bound on second moments, we can move on to a CLT for $\Xi^{(2)}_k$.
\begin{lemma}[Central Limit Theorem for $\Xi^{(2)}_k$]
    Consider $\Xi^{(2)}_k$ as given in Equation \eqref{equ-full-sum-split}. Assume $\mathbb{E}_\pi (f(X)) = 0$, and that $\sigma^2(\gamma)$ is a continuous function of $\gamma$. Suppose also that $\gamma_i \xrightarrow{P} \gamma_\infty$ as $i \rightarrow +\infty$, where $\gamma_\infty$ is constant. Then, if $\sigma^2(\gamma_\infty) >0$, we have that $\forall (\phi,y) \in \widecheck{\Omega}, \gamma_0 \in \Gamma$,
    \begin{equation*}
        \frac{1}{\sqrt{N_k}} \Xi^{(2)}_k \xrightarrow{D} \mathcal{N}\left( 0_d, \sigma^2(\gamma_\infty) \right),
    \end{equation*}
    as $k \rightarrow +\infty$.
    \label{lemma-clt-Xi2}
\end{lemma}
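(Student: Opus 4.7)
The plan is to apply a martingale central limit theorem (in the form of, e.g., Theorem 3.2 of Hall and Heyde) to the triangular array $\{\xi_i / \sqrt{N_k}\}_{0 \leq i < k}$. Consider the filtration $\mathcal{F}_i = \sigma(\gamma_0, \Psi^{(0)}, Y^{(0)}, \ldots, \gamma_i, \Psi^{(i)}, Y^{(i)}, \gamma_{i+1})$ capturing everything through the end of the $i$-th adaptive epoch together with the updated parameter. Since $\xi_i$ depends only on the $H^{(i)}$ excursions (which are driven by $\gamma_i$, fixed at the start of epoch $i$), $\xi_i$ is $\mathcal{F}_i$-measurable, and Lemma \ref{lemma-xi-2-moments} gives $\mathbb{E}(\xi_i \mid \mathcal{F}_{i-1}) = \mathbb{E}(\xi_i \mid \gamma_i) = 0$. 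Hence $\{\xi_i\}$ is a martingale difference array, and it suffices to verify the conditional variance convergence and a (conditional) Lindeberg condition.

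For the Lindeberg condition, I would use the uniform $L^4$ bound $\mathbb{E}(\xi_i^4 \mid \gamma_i) = O(n_{i+1}^2)$ from Lemma \ref{lemma-xi-2-moments} together with Markov's inequality:
\begin{equation*}
\frac{1}{N_k}\sum_{i=0}^{k-1}\mathbb{E}\bigl(\xi_i^2 \mathbb{1}(|\xi_i| > \varepsilon\sqrt{N_k}) \bigm| \mathcal{F}_{i-1}\bigr) \leq \frac{1}{\varepsilon^2 N_k^2}\sum_{i=0}^{k-1}\mathbb{E}(\xi_i^4 \mid \mathcal{F}_{i-1}) = O\!\left(\frac{\sum_{i=1}^k n_i^2}{N_k^2}\right).
\end{equation*}
Using $n_i = \Theta(i^\beta)$ with $\beta > 1$, we have $\sum_{i=1}^k n_i^2 = \Theta(k^{2\beta+1})$ while $N_k^2 = \Theta(k^{2\beta+2})$, so the right-hand side is $O(1/k) \to 0$ deterministically.

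The main obstacle is the conditional variance convergence
\begin{equation*}
V_k := \frac{1}{N_k}\sum_{i=0}^{k-1}\mathbb{E}(\xi_i^2 \mid \mathcal{F}_{i-1}) \xrightarrow{P} \sigma^2(\gamma_\infty).
\end{equation*}
By Lemma \ref{lemma-xi-2-moments}, $\mathbb{E}(\xi_i^2 \mid \mathcal{F}_{i-1}) = (n_{i+1}-1)\sigma^2(\gamma_i) - 2\mathbb{E}^{\gamma_i}(\xi_{1,1}\xi_{1,2})$, and the second term is uniformly bounded in $\gamma_i$, contributing $O(k/N_k) = O(k^{-\beta}) \to 0$. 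So the task reduces to showing $\tilde V_k := N_k^{-1}\sum_{i=0}^{k-1}(n_{i+1}-1)\sigma^2(\gamma_i) \xrightarrow{P} \sigma^2(\gamma_\infty)$. The weights $w_{i,k} := (n_{i+1}-1)/N_k$ are nonnegative, sum to $1 + o(1)$, and because $n_i = \Theta(i^\beta)$ with $\beta > 0$ the mass concentrates on the late indices: for any $\delta \in (0,1)$, $\sum_{i < \delta k} w_{i,k} = O(\delta^{\beta+1})$. Given $\eta > 0$, choose $\delta$ small so that this tail contributes at most $\eta \sup_\gamma \sigma^2(\gamma)$. For the remaining indices $i \geq \delta k$, use continuity of $\sigma^2$ at $\gamma_\infty$ together with $\gamma_i \xrightarrow{P} \gamma_\infty$ to argue that $\max_{i \geq \delta k} |\sigma^2(\gamma_i) - \sigma^2(\gamma_\infty)| \xrightarrow{P} 0$; this plus $\sum_{i \geq \delta k} w_{i,k} \to 1$ and boundedness of $\sigma^2$ gives $\tilde V_k \xrightarrow{P} \sigma^2(\gamma_\infty)$. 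The subtle point here is uniform control across indices $i \geq \delta k$, which I would handle by a union bound combined with the fact that we only need probability convergence of the weighted sum, so dominating the tail deviations by the total weight and $\sup \sigma^2 < \infty$ closes the argument. With both conditions verified and $\sigma^2(\gamma_\infty) > 0$, the martingale CLT yields $\Xi^{(2)}_k / \sqrt{N_k} \xrightarrow{D} \mathcal{N}(0, \sigma^2(\gamma_\infty))$.
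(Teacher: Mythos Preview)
Your proposal is correct and follows essentially the same route as the paper: a martingale CLT for the triangular array $\{\xi_i/\sqrt{N_k}\}$, with the martingale-difference property and the Lindeberg condition both coming from Lemma \ref{lemma-xi-2-moments}. Two small remarks. First, your Lindeberg computation already gives $\sum n_i^2/N_k^2 = \Theta(1/k)$ for \emph{every} $\beta>0$, so your stated restriction to $\beta>1$ is unnecessary at this stage (and the lemma as stated does not assume it). Second, for the conditional variance, the paper short-circuits your weight-concentration/uniform-in-$i$ argument by observing that boundedness of $\sigma^2(\cdot)$ upgrades $\sigma^2(\gamma_i)\xrightarrow{P}\sigma^2(\gamma_\infty)$ to $L^1$ convergence, after which the Toeplitz-type weighted average $\sum_i w_{i,k}\,\sigma^2(\gamma_i)$ converges in $L^1$; this avoids the ``subtle point'' you flag about uniform control over $i\geq \delta k$.
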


We will show later, using results following from Lemma \ref{lemma-L2-Xi2} and others, that the assumptions on the convergence of $\gamma_i$ are not particularly restrictive for well-chosen parameter estimators. 

We follow a similar proof to the CLT proof for $\Xi^{(2)}_k$ in \cite{chimisov2018air}, based on Theorem 2.2 from \cite{dvoretzky1972asymptotic}.

\begin{proof}[Proof of Lemma \ref{lemma-clt-Xi2}]
    We apply Theorem 2.2 from \cite{dvoretzky1972asymptotic} with $X_{n,k} = \frac{1}{\sqrt{N_n \sigma^2(\gamma_\infty)}} \xi_k$, though our indices do not quite line up with theirs.
    
    Let
    \begin{equation}
        \Tilde{\mathcal{F}}_{-1} = \sigma(\gamma_0), \quad \Tilde{\mathcal{F}}_i = \sigma\left( \Tilde{\mathcal{F}}_{i-1} \cup \{ \xi_{i,j} \}_{j \geq 1} \cup \{ \gamma_{i+1} \} \right),
        \label{equ-sigma-algebra}
    \end{equation}
    for $i \geq 0$. From Lemma \ref{lemma-xi-2-moments}, we have $\mathbb{E}^{\gamma_0}\left( \xi_i \mid \Tilde{\mathcal{F}}_{i-1} \right) = \mathbb{E}^{\gamma_0}\left( \xi_i \mid \gamma_i \right) =0$, which gives condition (2.3) from \cite{dvoretzky1972asymptotic}.

    Also by Lemma \ref{lemma-xi-2-moments}, we have that
    \begin{equation}
        \mathbb{E}^{\gamma_0}\left( \xi_i^2 \mid \Tilde{\mathcal{F}}_{i-1} \right) = \mathbb{E}^{\gamma_0}\left( \xi_i^2 \mid \gamma_i \right) = (n_{i+1} -1) \sigma^2(\gamma_i) -2 \mathbb{E}^{\gamma_i} \left( \xi_{1,1} \xi_{1,2} \right).
        \label{equ-xi-variance-conditional}
    \end{equation}
    This gives
    \begin{equation}
        \sum_{i=0}^{k-1} \mathbb{E}^{\gamma_0}\left( \xi_i^2 \mid \Tilde{\mathcal{F}}_{i-1} \right) = \sum_{i=0}^{k-1} (n_{i+1} -1) \sigma^2(\gamma_i) - 2 \sum_{i=0}^{k-1} \mathbb{E}^{\gamma_i} \left( \xi_{1,1} \xi_{1,2} \right).
        \label{equ-xi-variance-conditional-sum}
    \end{equation}
    
    Assuming that $\gamma_i \xrightarrow{P} \gamma_\infty$ as $i \rightarrow +\infty$, and assuming that $\sigma^2(\gamma)$ is continuous in $\gamma$, we get that $\sigma^2(\gamma_i) \xrightarrow{P} \sigma^2(\gamma_\infty)$ as $i \rightarrow +\infty$. Since $\sigma^2(\gamma)$ is uniformly bounded, it follows also that $\sigma^2(\gamma_i) \xrightarrow{\mathcal{L}^1} \sigma^2(\gamma_\infty)$ as $i \rightarrow +\infty$.

    From this, it is easy to show that
    \begin{equation}
        \frac{1}{N_k} \sum_{i=0}^{k-1} \mathbb{E}^{\gamma_0}\left( \xi_i^2 \mid \Tilde{\mathcal{F}}_{i-1} \right) \xrightarrow{\mathcal{L}^1} \sigma^2(\gamma_\infty), \qquad \text{as } k \rightarrow +\infty,
        \label{equ-martingale-variance-limit}
    \end{equation}
    which implies condition (2.4) from \cite{dvoretzky1972asymptotic}.

    It remains for us to show the following Lindeberg condition
    \begin{equation}
        \frac{1}{N_k} \sum_{i=0}^{k-1} \mathbb{E}^{\gamma_0}\left( \xi_i^2 \mathbbm{1}\left( \xi_i^2 > \delta N_k \right) \mathrel{}\middle|\mathrel{} \Tilde{\mathcal{F}}_{i-1} \right) \xrightarrow{P} 0,
        \label{equ-lindeberg}
    \end{equation}
    as $k \rightarrow +\infty$, for every $\delta >0$.

    Now, we can apply the Cauchy-Schwartz inequality and Markov's inequality to get that
    \begin{align*}
        \mathbb{E}^{\gamma_0}\left( \xi_i^2 \mathbbm{1}\left( \xi_i^2 > \delta N_k \right) \mathrel{}\middle|\mathrel{} \Tilde{\mathcal{F}}_{i-1} \right) =& \mathbb{E}^{\gamma_0}\left( \xi_i^2 \mathbbm{1}\left( \xi_i^2 > \delta N_k \right) \mathrel{}\middle|\mathrel{} \gamma_i \right) \\
        \leq& \mathbb{E}^{\gamma_0}\left( \xi_i^4 \mathrel{}\middle|\mathrel{} \gamma_i \right)^{1/2} \times \mathbb{P}^{\gamma_0}\left( \xi_i^2 > \delta N_k \mathrel{}\middle|\mathrel{} \gamma_i \right)^{1/2} \\
        \leq& \mathbb{E}^{\gamma_0}\left( \xi_i^4 \mathrel{}\middle|\mathrel{} \gamma_i \right)^{1/2} \left[ \frac{\mathbb{E}^{\gamma_0}\left( \xi_i^4 \mathrel{}\middle|\mathrel{} \gamma_i \right)}{\delta^2 N_k^2} \right]^{1/2} \\
        =& \frac{1}{\delta N_k} \mathbb{E}^{\gamma_0}\left( \xi_i^4 \mathrel{}\middle|\mathrel{} \gamma_{i-1} \right).
    \end{align*}

    By Lemma \ref{lemma-xi-2-moments}, we get that $\exists M>0$ such that
    \begin{equation}
        \frac{1}{N_k} \sum_{i=0}^{k-1} \mathbb{E}^{\gamma_0}\left( \xi_i^2 \mathbbm{1}\left( \xi_i^2 > \delta N_k \right) \mathrel{}\middle|\mathrel{} \Tilde{\mathcal{F}}_{i-1} \right) \leq \frac{M \sum_{i=0}^{k-1} n_{i+1}^2}{\delta N_k^2},
        \label{equ-xi-lindeberg}
    \end{equation}
    which goes to $0$ as $k \rightarrow +\infty$ by Lemma 1 from \cite{chimisov2018air}.

    Thus, by Theorem 2.2 from \cite{dvoretzky1972asymptotic}, we get that
    \begin{equation}
        \frac{1}{\sqrt{N_k \sigma^2(\gamma_\infty)}} \Xi^{(2)}_k \xrightarrow{D} \mathcal{N}(0,1),
        \label{equ-Xi-clt}
    \end{equation}
    which gives the required result.
\end{proof}

\subsection{Controlling the Other Terms}

We now turn to the remaining terms in Equation \eqref{equ-full-sum-split}.

\begin{lemma}[Moments of $\Xi^{(1)}_k$, $\Xi^{(3)}_k$ and $\Xi^{(4)}_{k,N}$]
    Consider $\Xi^{(1)}_k$, $\Xi^{(3)}_k$ and $\Xi^{(4)}_{k,N}$ as given in Equation \eqref{equ-full-sum-split}. Then, assuming $\mathbb{E}_\pi(f(X))=0$ for $f$ bounded, we have that $\forall (\phi,y) \in \widecheck{\Omega}$ and $\gamma_0 \in \Gamma$,
    \begin{equation*}
        \mathbb{E}^{\gamma_0}_{(\phi,y)}\left( \left[ \Xi^{(1)}_k \right]^2 \right) = \mathcal{O}(k^2), \qquad \mathbb{E}^{\gamma_0}_{(\phi,y)}\left( \left[ \Xi^{(3)}_k \right]^2 \right) =\mathcal{O}(k^2),
    \end{equation*}
    and
    \begin{equation*}
        \mathbb{E}^{\gamma_0}_{(\phi,y)}\left( \left[ \Xi^{(4)}_{k,N} \right]^2 \right) = \mathcal{O}(n_{k+1}).
    \end{equation*}
    \label{lemma-moments-xi-1,3,4}
\end{lemma}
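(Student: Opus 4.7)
The plan is to control $\Xi^{(1)}_k$ and $\Xi^{(3)}_k$ by a uniform $L^2$ bound on each summand combined with Cauchy--Schwartz (giving the coarser $\mathcal{O}(k^2)$ bound), and to control $\Xi^{(4)}_{k,N}$ by rerunning the excursion decomposition of Lemma \ref{lemma-xi-2-moments} on a partial sum of variable length (giving the sharper $\mathcal{O}(n_{k+1})$ bound that is needed since $n_{k+1}$ can be much larger than $k$).

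For $\Xi^{(1)}_k = \sum_{i=0}^{k-1}\eta_i$, the key observation is that $\eta_i$ is a sum of at most $T^{(i)}_1+1$ uniformly bounded terms, since $|F| \leq M$. Because the $Y_n$-updates in the split segment chain are i.i.d.\ $\text{Bern}(\epsilon)$ independently of $\{\Phi_n\}$ and of the adaptive parameter, the hitting time $T^{(i)}_1 = \sigma^{(i)}_1 + 1$ is $\text{Geom}(\epsilon)$-distributed regardless of $i$ or $\gamma_i$, so $\mathbbm{E}\big((T^{(i)}_1+1)^2\big) \leq C_1$ uniformly and hence $\mathbbm{E}(\eta_i^2) \leq M^2 C_1$. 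Cauchy--Schwartz then yields
\[
\mathbbm{E}_{(\phi,y)}^{\gamma_0}\!\Big(\big[\Xi^{(1)}_k\big]^2\Big) \leq k\sum_{i=0}^{k-1}\mathbbm{E}(\eta_i^2) = \mathcal{O}(k^2).
\]
The argument for $\Xi^{(3)}_k$ is parallel: $\zeta_i$ has length $T^{(i)}_{L_i} - n_{i+1} + 1$, which is the overshoot of the renewal sequence $\{T^{(i)}_j\}_j$ past time $n_{i+1}-1$. By the memoryless property of the geometric distribution this overshoot is stochastically dominated by a $\text{Geom}(\epsilon)$ variable, so $\mathbbm{E}(\zeta_i^2)$ is bounded by a constant $M^2 C_2$ uniformly in $i$ and $\gamma_i$, and the same Cauchy--Schwartz step gives $\mathbbm{E}([\Xi^{(3)}_k]^2) = \mathcal{O}(k^2)$.

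For $\Xi^{(4)}_{k,N}$ the crude bound $|\Xi^{(4)}_{k,N}| \leq M n_{k+1}$ only gives $\mathcal{O}(n_{k+1}^2)$, so I would mimic the excursion decomposition of Equation \eqref{equ-sum-terms} with $n_{k+1}$ replaced by the partial length $m := N - N_k \leq n_{k+1}$. Write $\Xi^{(4)}_{k,N} = \eta'_k + \xi'_k - \zeta'_k$ with $\eta'_k := \eta_k$, $L'_k := \inf(l \geq 1 : T^{(k)}_l \geq m-1)$, $\xi'_k := \sum_{j=1}^{L'_k - 1}\xi_{k,j}$, and $\zeta'_k := \sum_{j=m}^{T^{(k)}_{L'_k}}F(H^{(k)}_j)$. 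The boundary pieces $\eta'_k$ and $\zeta'_k$ both have $\mathcal{O}(1)$ second moment by the analyses above (the same geometric-overshoot argument applies to $\zeta'_k$). The middle piece $\xi'_k$ is a sum of identically distributed, conditionally 1-dependent, mean-zero blocks $\xi_{k,j}$, and the proof of Lemma \ref{lemma-xi-2-moments} (Wald's identity together with the 1-dependence covariance cancellation, plus the elementary computation $\mathbbm{E}(L'_k) = 1 + \epsilon(m-1)$) transfers verbatim to give $\mathbbm{E}\big((\xi'_k)^2 \mid \gamma_k\big) = (m-1)\sigma^2(\gamma_k) - 2\mathbbm{E}^{\gamma_k}(\xi_{1,1}\xi_{1,2}) = \mathcal{O}(m)$. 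The elementary inequality $(a+b+c)^2 \leq 3(a^2+b^2+c^2)$ then yields $\mathbbm{E}([\Xi^{(4)}_{k,N}]^2) = \mathcal{O}(m) = \mathcal{O}(n_{k+1})$.

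The main obstacle I anticipate is verifying that the partial-sum decomposition for $\Xi^{(4)}_{k,N}$ inherits the key moment identities (mean-zero blocks, Wald's identity, and the formula for $\mathbbm{E}(L'_k)$) from the full-epoch version of Lemma \ref{lemma-xi-2-moments}. These all depend only on the intrinsic $H^{(k)}$-dynamics governed by $\widecheck{P}_{\gamma_k}$ and on the independent $\text{Bern}(\epsilon)$ structure of the $Y$-updates, neither of which is affected by the choice of truncation point $m < n_{k+1}$, so the transfer should go through essentially unchanged.
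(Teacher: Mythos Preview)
Your proposal is correct and follows essentially the same route as the paper: the Cauchy--Schwartz/Jensen bound plus the uniform geometric tail of $T^{(i)}_1$ for $\Xi^{(1)}_k$ and $\Xi^{(3)}_k$ (the latter via the memoryless property of the overshoot), and the same excursion decomposition $\eta'+\xi'-\zeta'$ for $\Xi^{(4)}_{k,N}$, invoking Lemma \ref{lemma-xi-2-moments} on the truncated epoch for the $\mathcal{O}(m)$ middle term. The only cosmetic difference is that the paper indexes the $\Xi^{(4)}$ decomposition with $H^{(k+1)}$ rather than $H^{(k)}$, but the argument is identical.
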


The proof of this Lemma follows from Jensen's inequality for $\Xi^{(1)}_k$ and $\Xi^{(3)}_k$, and from a regeneration argument for $\Xi^{(4)}_{k,N}$.
\begin{proof}[Proof of Lemma \ref{lemma-moments-xi-1,3,4}]
    By Jensen's inequality, we have that
    \begin{align}
        \mathbb{E}^{\gamma_0}_{(\phi,y)}\left( \left[ \Xi^{(1)}_k \right]^2 \right) =& \mathbb{E}^{\gamma_0}_{(\phi,y)}\left( \left[ \sum_{i=0}^{k-1} \eta_i \right]^2 \right) \notag \\
        \leq& k \sum_{i=0}^{k-1} \mathbb{E}^{\gamma_0}_{(\phi,y)}\left(  \eta_i^2 \right) \notag \\
        =& k \sum_{i=0}^{k-1} \mathbb{E}^{\gamma_0}_{(\phi,y)} \left( \left[ \sum_{j=0}^{T^{(i)}_1} F(H^{(i)}_j) \right]^2 \right) \notag \\
        \leq& k^2 M^2 \mathbb{E}\left( (\widecheck{T}_1 + 1)^2 \right),
        \label{equ-xi-1-bound}
    \end{align}
    where $\lvert f \rvert \leq M$, and since $T^{(i)}_1 \sim \text{Geom}(\epsilon)$ for each $i$.
    
    Since the renewal sequence $\{T^{(i)}_j\}_{j \geq 1}$ has $\text{Geom}(\epsilon)$ increments and is therefore memoryless, an identical argument gives that
    \begin{equation}
        \mathbb{E}^{\gamma_0}_{(\phi,y)}\left( \left[ \Xi^{(3)}_k \right]^2 \right) \leq k^2 M^2 \mathbb{E}\left( (\widecheck{T}_1 + 1)^2 \right).
        \label{equ-xi-3-bound}
    \end{equation}

    Now, for $\mathbb{E}^{\gamma_0}_{(\phi,y)}\left( \left[ \Xi^{(4)}_{k,N} \right]^2 \right)$, we follow an identical argument to Lemma \ref{lemma-xi-2-moments}. Setting
    \begin{equation}
        L^{(4)} = L^{(4)}(k,N) = \inf(l \geq 1 : T^{(k+1)}_l \geq N - N_k -1),
    \label{equ-xi^4-leftover-refreshment}
    \end{equation}
    we write
    \begin{equation}
        \begin{split}
            \eta^{(4)} = \sum_{j=0}^{T^{(k+1)}_1} F(H^{(k+1)}_j), &\qquad \xi^{(4)} = \sum_{j = T^{(k+1)}_1 +1}^{T^{(k+1)}_{L^{(4)}}} F(H^{(k+1)}_j) \\
            \zeta^{(4)} = \sum_{j = N - N_k}^{T^{(k+1)}_{L^{(4)}}} F(H^{(k+1)}_j), &\qquad \xi_{i,j}^{(4)} = \sum_{m = T^{(k+1)}_j +1}^{T^{(k+1)}_{j+1}} F(H^{(k+1)}_m).
        \end{split}
        \label{equ-xi^4-sum-terms}
    \end{equation}
    We can therefore divide $\Xi^{(4)}_{k,N}$ as
    \begin{equation}
        \Xi^{(4)}_{k,N} = \eta^{(4)} + \xi^{(4)} - \zeta^{(4)},
        \label{equ-xi^4-sum-split}
    \end{equation}
    as we did for $s_i$ in Equation \eqref{equ-sum-split}. Now, by an identical argument to Equations \eqref{equ-xi-1-bound} and \eqref{equ-xi-3-bound}, we get that
    \begin{equation}
        \mathbb{E}_{(\phi,y)}^{\gamma_0}\left( \left[ \eta^{(4)} \right]^2 \right) = \mathcal{O}(1), \qquad \mathbb{E}_{(\phi,y)}^{\gamma_0}\left( \left[ \zeta^{(4)} \right]^2 \right) = \mathcal{O}(1),
        \label{equ-eta-zeta-4-order}
    \end{equation}
    and by an identical argument to Lemma \ref{lemma-xi-2-moments}, we get that
    \begin{equation}
        \mathbb{E}_{(\phi,y)}^{\gamma_0}\left( \left[ \xi^{(4)} \right]^2 \right) = \mathcal{O}(N - N_k) \implies \mathbb{E}_{(\phi,y)}^{\gamma_0}\left( \left[ \xi^{(4)} \right]^2 \right) = \mathcal{O}(n_{k+1}).
        \label{equ-xi-4-variance}
    \end{equation}
    Using these bounds and Equation \eqref{equ-xi^4-sum-split} therefore gives
    \begin{equation}
        \mathbb{E}_{(\phi,y)}^{\gamma_0}\left( \left[ \Xi^{(4)}_{k,N} \right]^2 \right) = \mathcal{O}(n_{k+1}),
        \label{equ-Xi-4-variance}
    \end{equation}
    which completes the proof.
\end{proof}

\subsection{Gathering all the Terms}

We can now conclude by gathering all the terms. This will yield the results given in Theorem \ref{thm-air-segment-chain-1}.

Recall first of all that, from Equation \eqref{equ-poly-lags}, we are assuming that $n_k = \Theta(k^\beta)$, which implies $N_k = \Theta(k^{\beta+1})$. We can start by gathering our bounds from Lemma \ref{lemma-xi-2-moments} and \ref{lemma-moments-xi-1,3,4} to show the following:

\begin{theorem}[Segment Chain $\mathcal{L}^2$ convergence]
    Consider $(\Phi_n,Y_n)_{n \geq 0}$ the AIR segment chain process, where the original family of transition kernels $\{P_\gamma\}_{\gamma \in \Gamma}$ have invariant distributions $\pi_\gamma$ and satisfy the minorisation condition \eqref{equ-minorisation-general}. Assume that $f$ is bounded and $\mathbb{E}_{\pi_\gamma}(f(X))=0$ for all $\gamma \in \Gamma$. Then $\forall (\Phi_0,Y_0) \in \widecheck{\Omega}$, $\gamma_0 \in \Gamma$, any $\beta>0$, and any adaptation scheme,
    \begin{equation*}
        \mathbb{E}_{(\Phi_0,Y_0)}^{\gamma_0}\left( \left[ \frac{1}{N} S_N \right]^2 \right) = \mathcal{O}\left(N^{-\min\left(1, \frac{2\beta}{1+\beta}\right)}\right),
    \end{equation*}
    so that $\mathcal{L}^2$ convergence holds.
    \label{thm-L2}
\end{theorem}

The proof of this theorem is simply a matter of gathering terms from previous results and is identical to the proof of Theorem 1 from \cite{chimisov2018air}. Rearranging the terms yields the expression in Theorem \ref{thm-air-segment-chain-1}.

\begin{proof}[Proof of Theorem \ref{thm-L2}]
    For the $\mathcal{L}^2$ convergence, note that by Lemma \ref{lemma-xi-2-moments} and \ref{lemma-moments-xi-1,3,4}, we have
    \begin{equation*}
        \mathbb{E}_{(\Phi_0,Y_0)}^{\gamma_0}\left( \left[ \frac{1}{N_k} S_N \right]^2 \right) = \mathcal{O}\left(\frac{1}{N_k}\right) + \mathcal{O}\left(\frac{k^2}{N_k^2}\right) + \mathcal{O}\left(\frac{n_{k+1}}{N_k^2}\right).
    \end{equation*}
    Since $\frac{N_k}{N} \rightarrow 1$ as $N \rightarrow +\infty$, and $n_k$, $N_k$ are of order $k^\beta$ and $k^{\beta +1}$ respectively, we get that
    \begin{equation*}
        \mathbb{E}_{(\Phi_0,Y_0)}^{\gamma_0}\left( \left[ \frac{1}{N} S_N \right]^2 \right) = \mathcal{O}( N^{-1} ) + \mathcal{O}\left( N^{-\frac{2\beta}{\beta +1}} \right),
    \end{equation*}
    which gives the $\mathcal{L}^2$ convergence.
\end{proof}

Finally, we prove the central limit theorem for $S_N$.

\begin{theorem}[Central Limit Theorem for AIR SBPS]
    Suppose $\beta>1$. Consider $(\Phi_n,Y_n)_{n \geq 0}$ the AIR segment chain process, where the original family of transition kernels $\{P_\gamma\}_{\gamma \in \Gamma}$ have invariant distributions $\pi_\gamma$ and satisfy the minorisation condition \eqref{equ-minorisation-general}. Assume that $f$ is bounded and $\mathbb{E}_{\pi_\gamma}(f(X))=0$ for all $\gamma \in \Gamma$. Then, if $\gamma_i \xrightarrow{P} \gamma_\infty$ such that $\sigma^2(\gamma_\infty)>0$, and that $\sigma^2(\gamma)$ is a continuous function of $\gamma$, then $\forall (\Phi_0,Y_0) \in \widecheck{\Omega}$, $\gamma_0 \in \Gamma$,
    \begin{equation*}
        \frac{1}{\sqrt{N}}S_N \xrightarrow{D} \mathcal{N}(0, \sigma^2(\gamma_\infty)).
    \end{equation*}
    \label{thm-CLT}
\end{theorem}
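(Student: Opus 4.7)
The plan is to combine the CLT for $\Xi^{(2)}_k$ from Lemma \ref{lemma-clt-Xi2} with the $L^2$ bounds on the three remaining terms from Lemma \ref{lemma-moments-xi-1,3,4}, using Slutsky's theorem to conclude. The starting point is the decomposition
\[
S_N = \Xi^{(1)}_k + \Xi^{(2)}_k + \Xi^{(3)}_k + \Xi^{(4)}_{k,N},
\]
from Equation \eqref{equ-full-sum-split}, where $k = k(N)$ is the unique integer with $N_k \leq N < N_{k+1}$. Since $n_k = \Theta(k^\beta)$ gives $N_k = \Theta(k^{\beta+1})$, we have $N_{k+1}/N_k \to 1$, hence $N/N_k \to 1$ and $\sqrt{N_k/N} \to 1$ deterministically as $N\to\infty$.

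First I would transfer the CLT of Lemma \ref{lemma-clt-Xi2} from normalization $\sqrt{N_k}$ to $\sqrt{N}$: since $\Xi^{(2)}_k / \sqrt{N_k} \xrightarrow{D} \mathcal{N}(0, \sigma^2(\gamma_\infty))$ and $\sqrt{N_k/N} \to 1$, Slutsky's theorem yields $\Xi^{(2)}_k/\sqrt{N} \xrightarrow{D} \mathcal{N}(0,\sigma^2(\gamma_\infty))$. Next, I would show the remaining three terms vanish in probability under the $\sqrt{N}$ scaling. Using the assumption $\beta > 1$ together with Lemma \ref{lemma-moments-xi-1,3,4}, we obtain
\[
\mathbbm{E}_{(\Phi_0,Y_0)}^{\gamma_0}\!\left(\Big[\tfrac{\Xi^{(1)}_k}{\sqrt{N}}\Big]^2\right) = O\!\left(\tfrac{k^2}{N_k}\right) = O(k^{1-\beta}) \to 0,
\]
and identically for $\Xi^{(3)}_k/\sqrt{N}$, while
\[
\mathbbm{E}_{(\Phi_0,Y_0)}^{\gamma_0}\!\left(\Big[\tfrac{\Xi^{(4)}_{k,N}}{\sqrt{N}}\Big]^2\right) = O\!\left(\tfrac{n_{k+1}}{N_k}\right) = O(k^{-1}) \to 0.
\]
Thus each of $\Xi^{(1)}_k/\sqrt{N}$, $\Xi^{(3)}_k/\sqrt{N}$, $\Xi^{(4)}_{k,N}/\sqrt{N}$ converges to $0$ in $L^2$ and hence in probability. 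A final application of Slutsky's theorem then gives $S_N/\sqrt{N} \xrightarrow{D} \mathcal{N}(0,\sigma^2(\gamma_\infty))$.

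The main obstacles are already absorbed into the preceding lemmas: the genuine work — controlling the martingale-like structure of $\Xi^{(2)}_k$ through Dvoretzky's theorem, handling the $1$-dependence of the excursions $\xi_{i,j}$, and showing the Lindeberg condition despite the adaptive parameters $\gamma_i$ — is all in Lemma \ref{lemma-clt-Xi2}. The only remaining subtlety here is that the critical exponent changes from $\beta > 1/2$ (for the SLLN) to $\beta > 1$, which is exactly what is needed to make $k^2/N_k = k^{1-\beta}$ summable-to-zero rather than merely bounded, ensuring the boundary terms $\Xi^{(1)}_k, \Xi^{(3)}_k$ remain asymptotically negligible under the CLT scaling rather than just under the LLN scaling.
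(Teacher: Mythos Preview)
Your proposal is correct and matches the paper's own proof essentially verbatim: the paper states that the theorem follows immediately from the CLT for $\Xi^{(2)}_k$ in Lemma \ref{lemma-clt-Xi2} together with the moment bounds of Lemma \ref{lemma-moments-xi-1,3,4} showing the remaining terms vanish in probability when $\beta>1$, combined via Slutsky's theorem. Your explicit computation of the orders $k^2/N_k = \mathcal{O}(k^{1-\beta})$ and $n_{k+1}/N_k = \mathcal{O}(k^{-1})$ fills in the only detail the paper leaves implicit.
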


This theorem follows immediately from Lemma \ref{lemma-clt-Xi2}'s CLT for $\frac{1}{\sqrt{N_k}}\Xi^{(2)}_k$, then using the moment bounds in Lemma \ref{lemma-moments-xi-1,3,4} to show that the other terms go to $0$ in probability for large $N$, assuming $\beta >1$.

\end{appendix}

\begin{funding}
    Cameron Bell has been supported by the EPSRC on studentship grant EP/W523793/1 and by a PR[AI]RIE-PSAI Chair funded by the Agence Nationale de la Recherche (ANR-23-IACL-0008).
    
    Krzysztof {\L}atuszy{\'n}ski has been supported by the Royal Society through the Royal Society University Research Fellowship.

    Gareth O. Roberts has been supported by the UKRI grant EP/Y014650/1 as part of the ERC Synergy project OCEAN,  EPSRC grants Bayes for Health (R018561), CoSInES (R034710), PINCODE (EP/X028119/1), and EP/V009478/1.
\end{funding}

\begin{supplement}
All code used for simulations can be found in our GitHub repository:

https://github.com/tamarock143/Julia-Stereographic.
\end{supplement}


\bibliographystyle{imsart-nameyear} 
\bibliography{Bibliography.bib}       


\end{document}